\documentclass[a4paper,USenglish,cleveref,pdfa]{lipics-v2021}

\pdfoutput=1
\hideLIPIcs
\nolinenumbers

\usepackage{preamble}

\title{Parameterized Capacitated Vertex Cover Revisited}

\titlerunning{Parameterized Capacitated Vertex Cover Revisited}

\author{Michael Lampis}
{Universit\'{e} Paris-Dauphine, PSL University, CNRS UMR7243, LAMSADE, Paris, France
\and \url{https://www.lamsade.dauphine.fr/~mlampis/index.html}}
{michail.lampis@dauphine.fr}
{https://orcid.org/0000-0002-5791-0887}{}

\author{Manolis Vasilakis}
{Universit\'{e} Paris-Dauphine, PSL University, CNRS UMR7243, LAMSADE, Paris, France
\and \url{https://manolisvasilakis.github.io/}}
{emmanouil.vasilakis@dauphine.eu}
{https://orcid.org/0000-0001-6505-2977}{}

\authorrunning{M. Lampis and M. Vasilakis}

\Copyright{Michael Lampis and Manolis Vasilakis}

\ccsdesc[500]{Theory of computation~Parameterized complexity and exact algorithms}

\keywords{Capacitated Vertex Cover, ETH, Parameterized Complexity}

\category{} %optional, e.g. invited paper

\funding{This work is partially supported by ANR project ANR-21-CE48-0022 (S-EX-AP-PE-AL).}

\EventEditors{Neeldhara Misra and Magnus Wahlstr\"{o}m}
\EventNoEds{2}
\EventLongTitle{18th International Symposium on Parameterized and Exact Computation (IPEC 2023)}
\EventShortTitle{IPEC 2023}
\EventAcronym{IPEC}
\EventYear{2023}
\EventDate{September 6--8, 2023}
\EventLocation{Amsterdam, The Netherlands}
\EventLogo{}
\SeriesVolume{285}
\ArticleNo{22}
\begin{document}

\maketitle

\begin{abstract}
\textsc{Capacitated Vertex Cover} is the hard-capacitated variant of
\textsc{Vertex Cover}: given a graph, a capacity for every vertex, and an
integer $k$, the task is to select at most $k$ vertices that cover all edges
and assign each edge to one of its chosen endpoints so that no chosen vertex
receives more incident edges than its capacity. This problem is a classical
benchmark in parameterized complexity, as it was among the first natural
problems shown to be W[1]-hard when parameterized by treewidth. We revisit its
exact complexity from a fine-grained parameterized perspective and obtain a
much sharper picture for several standard parameters. For the natural
parameter $k$, we prove under the Exponential Time Hypothesis (ETH) that no algorithm
with running time $k^{o(k)} n^{\mathcal{O}(1)}$ exists. In particular, this
shows that the known algorithms with running time
$k^{\mathcal{O}(\mathrm{tw})} n^{\mathcal{O}(1)}$ are essentially optimal.
We then turn to more general structural parameters. For vertex cover number
$\mathrm{vc}$, we give evidence against a
$2^{\mathcal{O}(\mathrm{vc}^{2-\varepsilon})} n^{\mathcal{O}(1)}$ algorithm,
as such an improvement would imply corresponding progress for a broader class
of integer-programming-type problems. We complement this barrier with a nearly
matching upper bound for vertex integrity $\mathrm{vi}$, improving the
previously known double-exponential dependence to an algorithm with running
time $\mathrm{vi}^{\mathcal{O}(\mathrm{vi}^{2})} n^{\mathcal{O}(1)}$ using
$N$-fold integer programming. For treewidth, we show that the standard dynamic
programming algorithm with running time $n^{\mathcal{O}(\mathrm{tw})}$ is
essentially optimal under the ETH, even if one
parameterizes by tree-depth. Turning to clique-width, we prove that
\textsc{Capacitated Vertex Cover} remains NP-hard already on graphs of linear
clique-width $6$. In contrast, cutwidth is restrictive enough to yield
fixed-parameter tractability: given a linear arrangement of cutwidth
$\mathrm{ctw}$, we solve the problem in time
$2^{\mathrm{ctw}} n^{\mathcal{O}(1)}$, and this running time is optimal under
the primal pathwidth Strong Exponential Time Hypothesis.

\end{abstract}

% \newpage

\section{Introduction}\label{sec:introduction}

We study \CVC, the hard-capacitated variant of \textsc{Vertex Cover}.
The input consists of a graph $G=(V,E)$, a capacity function $\capacity \colon V \to \mathbb{N}$,
and an integer $k$.
The task is to determine whether there exists a set $S \subseteq V$ of size at most $k$
that covers all edges and such that every selected vertex $v \in S$ is assigned at most $\capacity(v)$ incident edges.
Equivalently, one may orient every edge towards the endpoint that covers it and ask that each vertex has in-degree at most its capacity.
Selecting a vertex is therefore only half of the task: one must also decide which incident edges it is responsible for without exceeding its capacity.
For every non-isolated vertex $v$, we may assume without loss of generality that $\capacity(v) \in [\deg(v)]$.

This simple extension makes \CVC\ both expressive and surprisingly delicate.
Capacitated covering problems form a classical topic in approximation algorithms~\cite{siamcomp/ChuzhoyN06},
and \CVC\ itself appears in applications related to planning experiments for redesigning known drugs involving glycoproteins~\cite{jal/GuhaHKO03}.
From the viewpoint of parameterized complexity, the problem is historically important for a different reason:
it became an early benchmark for understanding how capacities interfere with width-based methods.
Already in 2008, Dom, Lokshtanov, Saurabh, and Villanger~\cite{iwpec/DomLSV08} showed that \CVC\ is W[1]-hard parameterized by treewidth,
while it is fixed-parameter tractable when parameterized by the solution size~\cite{mst/GuoNW07}.
Thus \CVC\ emerged as one of the earliest natural subset problems showing that bounded treewidth does not, by itself, guarantee tractability once capacities are introduced.

Subsequent work sharpened both sides of this picture.
On the algorithmic side, \CVC\ is fixed-parameter tractable when parameterized by the solution size~\cite{iwpec/DomLSV08,mst/GuoNW07,sofsem/RooijR19},
by tree-cut width~\cite{siamdm/GanianKS22},
and by vertex integrity~\cite{tcs/GimaHKKO22}.
In addition, the problem admits efficient FPT approximation schemes parameterized by treewidth~\cite{isaac/ChuL23,icalp/Lampis14}.
On the hardness side, the original W[1]-hardness was later strengthened to XNLP-completeness for pathwidth~\cite{algorithmica/BodlaenderGJJL25}
and to XALP-completeness for either treewidth~\cite{iwpec/BodlaenderGJPP22} or outerplanarity~\cite{dam/BodlaenderS26}.
Even so, several natural questions have remained open,
such as whether the known algorithms for the natural parameter are close to optimal,
or the problem parameterized by clique-width admits any meaningful aggregation-based algorithm.
This paper revisits \CVC\ through the lens of exact and fine-grained parameterized complexity and settles many of these questions.

\subparagraph{Our Results.}
Our results paint a much sharper picture of the exact complexity of \CVC\ across several standard structural parameters.
We begin with the natural parameterization by the solution size $k$.
Although \CVC\ is known to be fixed-parameter tractable in this setting~\cite{iwpec/DomLSV08,mst/GuoNW07,sofsem/RooijR19},
the best exact algorithms still have a slightly superexponential dependence on $k$.
We show that this is essentially unavoidable under the ETH%
\footnote{The \emph{Exponential Time Hypothesis} (ETH) roughly states that 3-SAT cannot be solved in time $2^{o(n)}$,
where $n$ denotes the number of variables of the formula.}
as a $k^{o(k)} n^{\bO(1)}$ algorithm would refute it (\cref{thm:k:lb}).
In particular, this also shows that the known $k^{\bO(\tw)} n^{\bO(1)}$ algorithms~\cite{iwpec/DomLSV08,sofsem/RooijR19} are essentially optimal under the ETH.
Our proof uses $d$-detecting families via a technique introduced into structural parameterized complexity by Lampis and Vasilakis~\cite{toct/LampisV24},
building on earlier ideas of Bonamy, Kowalik, Pilipczuk, Socala, and Wrochna~\cite{toct/BonamyKPSW19}.

Having pinned down the optimal dependence on $k$, we next turn to more general structural parameters.
The most immediate candidate is the vertex cover number $\vc$, which lower-bounds the optimum solution size.
Gima, Hanaka, Kiyomi, Kobayashi, and Otachi~\cite{tcs/GimaHKKO22} considered the even more general parameter vertex integrity $\vi$
and showed that \CVC\ lies in FPT by formulating it as an integer linear program with a small number of variables,
yielding an algorithm with double-exponential dependence on the parameter.%
\footnote{The \emph{vertex integrity} of a graph $G$ is the minimum value $k$ such that there exists a set $S \subseteq V(G)$
with every connected component of $G-S$ having size at most $k - |S|$.
The vertex integrity of a graph is upper-bounded by its vertex cover number plus 1.}
This raises the question whether one can do substantially better, perhaps under the more restrictive parameter $\vc$.
Our first contribution here is a barrier:
in \cref{thm:vc:lb} we give a simple reduction showing evidence against a $2^{\bO(\vc^{2 - \varepsilon})} n^{\bO(1)}$ algorithm  %\todo{ML: Showing that the square in the exponent is necessary, or something like that}
as such an improvement would, via the fine-grained equivalences of Rohwedder and W\k{e}grzycki~\cite{innovations/RohwedderW25}, imply analogous improvements for a broader class of integer-programming-type problems.
In particular, it would yield a $2^{\bO(m^{2 - \varepsilon})} n^{\bO(1)}$ algorithm for \textsc{Integer Linear Programming} with bounded entries, parameterized by the number $m$ of constraints, when each variable $x_i$ is additionally subject to bounds $\ell_i \le x_i \le r_i$ that do not count towards $m$.
Whether such an algorithm exists is a major open question.
To the best of our knowledge, this is the first barrier of this type for a \emph{structural} parameterization;
previous barriers were known only for \emph{natural} parameterizations, for example \textsc{Closest String} parameterized by the number of strings or {\SMC} parameterized by the size of the universe.
We complement this barrier with a nearly matching upper bound for the more general parameter $\vi$:
in \cref{thm:vi:algo} we exponentially improve over the algorithm of Gima et al.~\cite{tcs/GimaHKKO22} and obtain one of running time $\vi^{\bO(\vi^2)} n^{\bO(1)}$ based on the machinery of $N$-fold Integer Programming.

We then turn to treewidth $\tw$.
The standard dynamic-programming approach stores, for every bag vertex, how much of its capacity has already been used, and therefore runs in time $n^{\bO(\tw)}$.
We show that this dependence is essentially best possible under the ETH.
More precisely, the reduction of Dom et al.~\cite{iwpec/DomLSV08} already implies that an algorithm running in time
$f(\td)\, n^{o(\sqrt{\td})}$ for any computable function $f$ would refute the ETH, where $\td$ denotes the tree-depth of the input graph.
In \cref{thm:td:lb} we strengthen this to $f(\td)\, n^{o(\td)}$, thereby rendering the standard $n^{\bO(\tw)}$ algorithm essentially optimal, even under the more restrictive parameterization by tree-depth.
Our reduction relies on the recursive construction of~\cite{toct/LampisV24}, which allows us to
appropriately adapt the reduction of~\cite{iwpec/DomLSV08} while keeping the tree-depth linear.

Moving on, we consider the parameterization by clique-width.
At a high level, algorithms parameterized by clique-width rely on aggregating information over label classes.
For \CVC, however, capacities and assigned demands interact at the level of individual vertices.
It is therefore far from clear how one could summarize this information inside a label class without losing essential structure.
We show that this obstacle is genuine by proving that the problem remains NP-hard already on graphs of linear clique-width equal to $6$ (\cref{thm:cw:np-hard}).

Finally, cutwidth is restrictive enough to recover exact fixed-parameter tractability.
In fact, \CVC\ is already FPT for the more general parameterization by tree-cut width~\cite{siamdm/GanianKS22};
our contribution is to determine the correct single-exponential dependence for cutwidth.
In particular, we give an algorithm running in time $2^{\ctw} n^{\bO(1)}$ when a linear arrangement of cutwidth $\ctw$ is provided (\cref{thm:ctw:algo}).
This running time is optimal up to polynomial factors under the primal pathwidth Strong Exponential Time Hypothesis ($\pw$-SETH), a hypothesis recently postulated by Lampis~\cite{soda/Lampis25} and implied by both the SETH and the Set Cover Conjecture:
under this hypothesis, \textsc{Vertex Cover} admits no algorithm with running time $(2-\varepsilon)^{\ctw} n^{\bO(1)}$ for any $\varepsilon>0$~\cite{soda/Lampis25},
and this carries over directly to \CVC.%
\footnote{Lampis~\cite{soda/Lampis25} states the lower bound for the parameterization by pathwidth,
however a closer look to the construction shows that it holds even for cutwidth.}

\subparagraph{Further Related Work.}
Our focus is exact complexity in the standard \emph{hard-capacitated} setting, where each vertex may be selected at most once.
This should be contrasted with the \emph{soft-capacitated} variant, in which a vertex may be selected multiple times and each copy contributes the same capacity.
This distinction is central in the approximation literature.
For soft capacities, Guha, Hassin, Khuller, and Or~\cite{jal/GuhaHKO03} introduced \CVC\ and gave a $2$-approximation, while Bar-Yehuda, Flysher, Mestre, and Rawitz~\cite{siamdm/Bar-YehudaFMR10} studied the partial version.
For hard capacities, Chuzhoy and Naor~\cite{siamcomp/ChuzhoyN06} initiated the approximation study, Gandhi, Halperin, Khuller, Kortsarz, and Srinivasan~\cite{jcss/GandhiHKKS06} improved the approximation ratio to $2$ in the unweighted case,
and Grandoni, Koenemann, Panconesi, and Sozio~\cite{siamcomp/GrandoniKPS08} gave a bicriteria distributed algorithm.
The problem has also been studied with respect to its approximability on planar graphs~\cite{waoa/Becker17},
while later work has also obtained tight approximation guarantees for multigraphs and hypergraphs~\cite{soda/CheungGW14,algorithmica/Kao21,tcs/KaoSLL19,soda/Wong17}.
We also mention the recent work of Lokshtanov, Sahu, Saurabh, Surianarayanan, and Xue~\cite{soda/LokshtanovS0S025},
which develops parameterized approximation algorithms for weighted \textsc{Capacitated $d$-Hitting Set} with hard capacities.

\section{Preliminaries}\label{sec:preliminaries}
Throughout the paper we use standard graph notation~\cite{books/Diestel25},
and we assume familiarity with the basic notions of parameterized complexity~\cite{books/CyganFKLMPPS15}.
All graphs considered are undirected without loops, unless explicitly stated otherwise.
Let $\N$ denote the set of non-negative integers.
For $x, y \in \Z$, let $[x, y] = \setdef{z \in \Z}{x \leq z \leq y}$,
while $[x] = [1,x]$.
Due to space constraints, the proofs of statements marked with $(\appsymb)$ are deferred to the appendix.

We sometimes use a different formulation of \CVC, which is more convenient for our proofs and analysis.
Let $G = (V,E)$ denote the input graph, where $\capacity \colon V \to \mathbb{N}$ denotes its capacity function.
An \emph{orientation} of $G$ is a directed graph obtained by replacing each edge $\{u,v\} \in E$ with exactly one of the arcs $(u,v)$ or $(v,u)$.
For an orientation $O$ of $G$, let $\indeg_O(v)$ and $\outdeg_O(v)$ denote the in-degree and out-degree
of a vertex $v \in V$ in $O$ respectively.
We say that an orientation $O$ is \emph{feasible} if $\indeg_O(v) \leq \capacity(v)$ for all $v \in V$.
The \emph{size} of an orientation $O$, denoted by $\size(O)$, is defined as $|\setdef{v \in V}{\indeg_O(v) > 0}|$.
The goal of {\CVC} is to find a feasible orientation of minimum size.
It is straightforward to see that the two formulations are equivalent.
We will use both formulations interchangeably throughout the paper.

% \problemdef{\CVC}
% {A capacitated graph $G = (V,E)$ with a capacity function $\capacity \colon V \to \mathbb{N}$, as well as an integer $k$.}
% {Determine whether $G$ has a capacitated vertex cover of size at most $k$.}

\subparagraph{$N$-fold IP.}
Integer Linear Programming is a standard tool in the design of fixed-parameter algorithms.
One of the best-known examples is Lenstra's algorithm, which shows that ILP with a bounded number of variables is fixed-parameter tractable~\cite{mor/Lenstra83}.
In this work, we use the notion of \emph{$N$-fold Integer Programming}, which has recently found applications to parameterized complexity~\cite{mfcs/BlazejKPS24,disopt/GavenciakKK22,swat/KnopMV26}.

An $N$-fold IP consists of $N$ blocks, or \emph{bricks}, of variables
$x^{(1)},\ldots,x^{(N)}$,
where brick $i$ has dimension $t_i$.
The constraints have the form
\begin{align*}
    D_1x^{(1)} + D_2x^{(2)} + \cdots + D_Nx^{(N)} &= \mathbf{b}_0,\\
    A_i x^{(i)} &= \mathbf{b}_i && \forall i \in [N], \notag\\
    \mathbf{0} \le x^{(i)} &\le \mathbf{u}_i && \forall i \in [N]. \notag
\end{align*}
The first line consists of the \emph{linking constraints}, while the remaining ones are the \emph{local constraints}.
Here $D_i \in \Z^{r \times t_i}$ and $A_i \in \Z^{s_i \times t_i}$.
We write
\[
    s = \max_{i \in [N]} s_i,\qquad
    t = \max_{i \in [N]} t_i,\qquad
    d = \sum_{i \in [N]} t_i,\qquad
    a = r \cdot s \cdot \max_{i \in [N]} \big\{ \max\{\|D_i\|_\infty,\|A_i\|_\infty\} \big\}.
\]

\begin{proposition}[{\cite[Corollary~97]{corr/EisenbrandHKKLO19}}]\label{prop:nfold}
    An $N$-fold IP can be solved in time
    $a^{\bO(r^2s+rs^2)} \cdot d \log d \cdot L$,
    where $L$ is an upper bound on the optimum value of the objective function.
\end{proposition}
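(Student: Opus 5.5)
This proposition is not proved from scratch in the paper. It is imported from the survey of Eisenbrand, Hunkenschr\"oder, Klein, Kouteck\'y, Levin and Onn, so the plan is to check that our setting matches the hypotheses of their Corollary~97 and then reconstruct the main steps of their argument. First I would put the program into their normal form. Any objective we use (for us, a linear objective counting selected vertices) is separable linear. The bounds $\mathbf{0}\le x^{(i)}\le \mathbf{u}_i$ are the standard box constraints. The combined constraint matrix has the block structure with $D_1,\dots,D_N$ in the top $r$ rows and $A_1,\dots,A_N$ on the block diagonal. Heterogeneous blocks $D_i, A_i$ of varying dimensions are allowed, after padding each brick with zero columns up to dimension $t$ and zero rows up to $s$. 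This padding does not change $r$, $s$ or $\max\|\cdot\|_\infty$, and it only affects $d$ by a factor of at most $N t$, which is already accounted for in their statement.

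The underlying argument is the Graver-basis augmentation framework, in three steps.
\begin{enumerate}
\item Bound the $\ell_1$-norm of every Graver element of the $N$-fold matrix. The local part has Graver elements of norm at most $(s\Delta)^{\bO(s)}$ brick-wise. A Steinitz-type lemma applied to the sum of brick contributions to the linking rows bounds the number of nonzero bricks, giving a total bound of $g = (rs\Delta)^{\bO(rs)}$.
\item Use a proximity/scaling argument: starting from a feasible solution (found by an auxiliary $N$-fold instance), the optimum is approached by steepest or halving augmentations. The number of augmentation steps is $\bO(d\log d \cdot \log L)$, or linear in $L$ in the cruder form stated.
\item Find each best augmenting step $\lambda g'$, with $\|g'\|_1\le g$, by dynamic programming over the bricks. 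The DP state is the partial sum of linking-row contributions, whose entries are bounded in absolute value by $\Delta g$. This gives a DP with $(\Delta g)^{r}$ states per brick and per-brick local subproblems solvable in $(s\Delta)^{\bO(s^2)}$ time.
\end{enumerate}
Multiplying these costs yields $a^{\bO(r^2 s + r s^2)}\cdot d\log d\cdot L$.

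The main obstacle, if one wanted a self-contained proof, is the Graver norm bound via the Steinitz lemma. Obtaining the exponent $r^2s+rs^2$, rather than something cruder, needs the refined bound. Since this is exactly the content of the cited corollary, I would simply invoke it: the only real obligation here is to verify that the parameters $r,s,t,d,a$ as defined above coincide with theirs, and that $L$ upper-bounds the optimum, which for our application is at most $n$.
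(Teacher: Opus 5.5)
Your proposal matches the paper: the paper gives no proof of this proposition and simply imports it from Corollary~97 of Eisenbrand et al. The only real obligation is the one you identify, namely checking that $r,s,d,a$ and the bound $L$ (at most $n$ in the vertex-integrity application) fit the cited statement.

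Your reconstruction sketch is not needed, and nothing rests on it. It is loose in places, though. For instance, the $rs^2$ term in the exponent does not come from $(s\Delta)^{\bO(s^2)}$-time local subproblems. It comes from the augmentation DP tracking both the $r$ linking and the $s$ local partial sums, each bounded by $\Delta g$ with $g=(rs\Delta)^{\bO(rs)}$.
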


\section{Natural Parameter}\label{sec:natural_parameter}

In this section we consider {\CVC} under its natural parameterization.
We show that, unless the ETH fails, there is no $k^{o(k)} n^{\bO(1)}$-time algorithm for \CVC, where $k$ denotes the solution size.
To show this, we give a reduction from a variant of \OneInThreeSAT, called \SD, which is defined as follows.

\problemdef{\SD}
{A CNF formula $\psi$, every clause of which contains exactly 3 distinct variables and each variable appears in at most 4 clauses.}
{Determine whether there exists an assignment to the variables of $\psi$ such that each clause has exactly one True literal.}

It is known that there is no $2^{o(n)}$ time algorithm for \SD, where $n$ is the number of variables of the input formula, unless the ETH fails~\cite[Theorem~46]{toct/LampisV24}.
Our reduction relies on two ingredients from~\cite{toct/LampisV24}.
The first is a grouping lemma that partitions the variables and clauses of the formula into sufficiently small groups, while ensuring that within each clause group there is at most one occurrence of a variable from any fixed variable group.
This lets us encode, for each variable group, a choice of a partial assignment and then reason about its effect on an entire clause group at once.
The second ingredient is a compression step based on $d$-detecting families:
rather than testing each clause individually for satisfaction by a single literal,
we will test only a small set of carefully chosen subsets of clauses instead.

\begin{definition}
    Let $U$ be a finite set and let $d \in \N$.
    A family $\mathcal{F} \subseteq 2^U$ is a \emph{$d$-detecting family} for $U$ if, for every two distinct functions
    $f,g \colon U \to \{0,\ldots,d-1\}$, there exists a set $S \in \mathcal{F}$ such that
    $\sum_{x \in S} f(x) \neq \sum_{x \in S} g(x)$.
\end{definition}

In other words, the vector of subset sums over $\mathcal{F}$ uniquely determines any function $U \to \{0,\ldots,d-1\}$.
Thus, one may recover per-element information by checking only aggregated sums over the sets of $\mathcal{F}$.
We use the following result due to Lindstr\"om~\cite{cmb/Lindstrom65}.

\begin{theorem}[{\cite{cmb/Lindstrom65}}]\label{thm:lindstrom}
    For every constant $d \in \N$ and finite set $U$,
    one can construct in time polynomial in $|U|$
    a $d$-detecting family $\mathcal{F}$ on $U$ of size
    $\frac{2 |U|}{\log_d |U|} \cdot (1 + o(1))$.
\end{theorem}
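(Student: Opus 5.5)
Since the statement is a classical theorem quoted from Lindstr\"om~\cite{cmb/Lindstrom65}, I would not expect to reprove it from scratch in the paper; the plan below is how I would reconstruct it, and I have not checked that the construction reaches the constant $2$. First I reformulate the problem linearly. Encode a family $\mathcal{F}=\{S_1,\dots,S_m\}$ as its $0/1$ incidence matrix $M\in\{0,1\}^{m\times |U|}$. Then $\mathcal{F}$ is $d$-detecting if and only if $Mz\neq 0$ for every nonzero $z\in\{-(d-1),\dots,d-1\}^{U}$, taking $z=f-g$. So the goal is a polynomial-time constructible $0/1$ matrix with $m=\frac{2n}{\log_d n}(1+o(1))$ rows whose kernel avoids all nonzero vectors of $\ell_\infty$-norm below $d$, where $n=|U|$. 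It suffices to build such matrices for a sparse sequence of sizes $n_k$. For an arbitrary $n$, take the least $n_k\ge n$ and delete columns, since deleting columns preserves the property. This costs only a $(1+o(1))$ factor, provided $n_{k+1}/n_k\to 1$ or the construction is padded accordingly.

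The core is a Lindstr\"om-type recursive construction indexed by subsets of $[k]$. Rows are indexed by the $2^k$ subsets $R\subseteq[k]$. Columns are grouped into blocks $C_T$, one per nonempty $T\subseteq[k]$. Entries are chosen so that a M\"obius inversion over the Boolean lattice turns the measured row sums $(Mz)_R$ into one integer combination $\Lambda_T(z)=\sum_{c\in C_T} w_c z_c$ per block $T$. The coefficients $w_c$ are signed powers determined by the parity structure of $T$, and $\Lambda_T$ involves no columns of blocks $T'\not\subseteq T$. Induction on $|T|$ then decodes $z$ restricted to $C_T$ from $\Lambda_T(z)$, once the lower blocks are known to vanish. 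This requires the weights $w_c$ inside a block to be spaced so that the map $z|_{C_T}\mapsto \Lambda_T(z)$ is injective on $\{-(d-1),\dots,d-1\}^{C_T}$, which is a base-$d$ digit argument. Roughly $|T|/\log_2 d$ columns fit into $C_T$, since the available magnitudes grow like $2^{|T|}$. Summing over $T$ gives $n\approx \sum_T |T|/\log_2 d = k2^{k-1}/\log_2 d$ columns against $m=2^k$ rows. Since $\log_d n\approx k/\log_2 d$, this yields $m\approx 2n/\log_d n$, as required.

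The main obstacle is the digit-spacing step, which is where the exact constant $2$ is decided. A naive separation, with weights spaced by a factor $2d-1$ to absorb signed digits, would only give $\log_{2d-1}$ in place of $\log_d$. To avoid this loss, I would exploit that $z=f-g$ with $f,g\in[0,d-1]$: compare the two subset-sum vectors $Mf$ and $Mg$ directly and decode $f$ greedily block by block from the top. Base-$d$ spacing then suffices, because each block value is a genuine base-$d$ number rather than a signed one. If exact injectivity fails at the boundary, the fallback is to sacrifice $o(k)$ columns per block, which still preserves the $(1+o(1))$ factor.

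The remaining routine checks are the following. The M\"obius inversion must correctly isolate each block, which is a triangularity check over the subset lattice. The matrix must be writable in time $\mathrm{poly}(n)$, which holds since $2^k=O(n)$. Finally, the contributions with $|T|$ far from $k/2$ must be negligible, which follows by binomial concentration and is what gives the $(1+o(1))$ factor.
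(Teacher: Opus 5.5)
The paper gives no proof of this statement; it imports the result from Lindstr\"om. Your architecture is the standard M\"obius-inversion one, and your count ($\sum_T |T|/\log_2 d\approx k2^{k-1}/\log_2 d$ columns against $2^k$ rows) is right. The gap is that the step carrying all the content is only postulated (``entries are chosen so that\dots''): the existence of $0/1$ columns with the prescribed transform. In the orientation you state, it is false.

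For a column $c\colon 2^{[k]}\to\{0,1\}$ put $\hat c(S)=\sum_{Q\subseteq S}(-1)^{|S\setminus Q|}c(Q)$, so that $\Lambda_S(z)=\sum_c \hat c(S)\,z_c$. Requiring that $\Lambda_S$ involve no block $C_T$ with $T\not\subseteq S$ means $\hat c$ is supported on supersets of $T$ for every $c\in C_T$. Inverting gives $c(R)=\sum_{T\subseteq S\subseteq R}\hat c(S)$, so $w_c=\hat c(T)=c(T)\in\{0,1\}$, and each block holds only one useful column. With the superset form of M\"obius inversion your orientation is consistent, but then the attainable magnitude in $C_T$ is $2^{k-|T|-1}$, not $2^{|T|}$.

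The missing idea is to let each $c\in C_T$ depend only on $R\cap T$, i.e.\ $c(R)=g_c(R\cap T)$ for some $g_c\colon 2^{T}\to\{0,1\}$. Pairing $Q$ with $Q\triangle\{x\}$ for $x\in S\setminus T$ shows $\hat c(S)=0$ unless $S\subseteq T$. Meanwhile $\hat c(T)=\sum_{Q\subseteq T}(-1)^{|T\setminus Q|}g_c(Q)$ can be made any integer in $[-2^{|T|-1},2^{|T|-1}]$. Put $r_T=\lfloor(|T|-1)/\log_2 d\rfloor+1$ columns $c_0,\dots,c_{r_T-1}$ in $C_T$ with $\hat c_i(T)=d^{i}$. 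Then decode \emph{downwards} from $S=[k]$: if $Mz=0$ and $z$ vanishes on every $C_T$ with $T\supseteq S$, $T\neq S$, then $0=\Lambda_S(z)=\sum_i d^{i}z_{c_i}$.

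The obstacle you single out is not one: spacing by $d$ already handles signed digits. If $\sum_i d^i z_i=0$ with all $|z_i|\le d-1$ and $i_0$ is the least index with $z_{i_0}\neq 0$, then $d^{i_0+1}$ divides $z_{i_0}d^{i_0}$. Hence $d\mid z_{i_0}$, which is impossible. Your workaround would not work as described anyway: the transformed coefficients are signed and couple different blocks, so $\Lambda_T(f)$ is not a base-$d$ numeral even for $f\ge 0$.

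The construction above yields $n_k=\sum_{T\neq\emptyset}r_T=k2^{k-1}/\log_2 d+O(2^k)$ columns and $m_k=2^k$ rows, hence $m_k=\frac{2n_k}{\log_d n_k}(1+O(\log k/k))$.

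Finally, your interpolation step loses a constant factor. Here $n_{k+1}/n_k\to 2$, so taking the least $n_k\ge n$ can double the number of rows. Use instead that detecting families for the parts of a partition of $U$ together form one for $U$ (a block-diagonal $M$). If $n_{k-1}<n\le n_k$, split $U$ into $\lceil n/n_{k-t}\rceil$ parts of size at most $n_{k-t}$, with $t\to\infty$ and $t=o(k)$. The rounding costs a factor $1+O(2^{-t})$, and replacing $\log_d n$ by $\log_d n_{k-t}$ costs a factor $1+O(t/k)$.
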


In our reduction, the relevant function assigns to each clause the number of its literals that are made true by the selected partial assignments, which is a value in $\{0,1,2,3\}$.
Hence we will use $4$-detecting families.
Combined with the grouping lemma of~\cite[Lemma~47]{toct/LampisV24}, \Cref{thm:lindstrom} allows us to replace per-clause checks by only $\bO(\sqrt{n}/\log n)$ aggregate tests for each clause group.
The proof below follows this plan: for each variable group we select one partial assignment using a choice gadget, and for each clause group we use a small detecting family to verify, through aggregate constraints, that every clause receives exactly one true literal.

\begin{theorem}\label{thm:k:lb}
    There is no $k^{o(k)} n^{\bO(1)}$-time algorithm for \CVC,
    unless the ETH fails.
\end{theorem}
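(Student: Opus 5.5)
We reduce from \SD: given a formula $\psi$ on $n$ variables, we build in polynomial time an equivalent \CVC\ instance with $k = \bO(n/\log n)$. Then a $k^{o(k)} n^{\bO(1)}$ algorithm would run in time $2^{o(n/\log n)\cdot \log n} = 2^{o(n)}$ and refute the ETH by~\cite[Theorem~46]{toct/LampisV24}. The target bound $k=\bO(n/\log n)$ shapes the whole construction.

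We first apply the grouping lemma~\cite[Lemma~47]{toct/LampisV24}. It partitions the variables into $\bO(n/\log n)$ groups $V_1,\dots$ of size $\Theta(\log n)$, and the clauses into $\bO(\sqrt{n})$ groups $C_1,\dots$ of size $\bO(\sqrt{n})$. Within any clause group, each variable group contributes at most one literal occurrence.

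\textbf{Choice gadgets.} For each variable group $V_i$ we build a gadget in which a forced vertex $x_i$ (made forced by pendant neighbours) must lie in the solution. The gadget has one vertex $a_{i,\sigma}$ for each of the $2^{|V_i|}=n^{\bO(1)}$ partial assignments $\sigma$. The capacities and budget are arranged so that exactly one $a_{i,\sigma}$ per group is left unselected. Its incident edges must then be covered from the "other side". Equivalently, we can phrase this in the orientation view: the selected assignment is the one vertex that is forced to push its edges outward. The budget counts one selected vertex per variable group, plus one per checking vertex (below), plus the forced vertices.

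\textbf{Checking gadgets.} For each clause group $C_j$ we fix a $4$-detecting family $\mathcal{F}_j$ on $C_j$ from \cref{thm:lindstrom}. It has size $\bO(\sqrt{n}/\log n)$, so the total number of sets is $\bO(n/\log n)$. For each $S\in\mathcal{F}_j$ we add a checker vertex $c_{j,S}$ that is forced into the solution. We connect $a_{i,\sigma}$ to $c_{j,S}$ with multiplicity equal to the number of clauses in $S$ whose literal from $V_i$ is made true by $\sigma$. Multi-edges are subdivided, with the subdivision vertices made cheap and capacity-$1$, so that their cost is accounted for in the budget. The capacity of $c_{j,S}$ is exactly $|S|$. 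Consider the target function $g\equiv 1$ (one true literal per clause). By the detecting property, the sums over all $S$ match $|S|$ for every $S$ if and only if each clause has exactly one true literal. The detecting property only enforces equality, not an inequality, while capacities only give upper bounds. We therefore obtain "exactly $|S|$" by a counting argument: the total demand on all checkers of group $j$ equals $\sum_S|S|$ because every edge must be absorbed. Alternatively, we pair each checker with a complementary gadget whose capacity forces at least $|S|$ units.

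\textbf{Main obstacle.} The delicate step is making the gadgets rigid. The unselected assignment vertex $a_{i,\sigma}$ must send all of its edges to the checkers, while the non-chosen assignments must absorb their own checker edges without charge. Capacities must also enforce equality rather than only $\le$. My first attempt is the following. Each non-chosen $a_{i,\tau}$ is selected "for free" via a shared hub: one budget-counted vertex $h_i$ adjacent to all $a_{i,\tau}$, whose capacity forces exactly one $a_{i,\tau}$ to stay outside. To keep the budget at $\bO(n/\log n)$, the $a_{i,\tau}$ should then be non-solution vertices whose edges are covered by $h_i$ or by checkers. I will also add a complementary copy of each checker with capacity $\sum$ of potential demand minus $|S|$, which turns the upper bounds into equalities. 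Once rigidity holds, correctness in both directions reduces to the detecting-family property, and the parameter bound follows by summing the counts above.
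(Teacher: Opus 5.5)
Your outer frame is right and matches the paper: grouping lemma, a $4$-detecting family per clause group with target $g\equiv 1$, $k=\mathcal{O}(n/\log n)$, and the final $2^{o(n)}$ calculation. The gap is the gadget you yourself call the main obstacle. It is not resolved, and your attempt conflicts with how the problem works. A vertex absorbs edges only if it is in the solution, where it costs one unit of budget. Capacities can only force vertices \emph{into} the solution (a solution vertex of capacity $\deg(v)-d$ needs $d$ of its edges covered by solution neighbours), and only the budget keeps vertices out.

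Hence ``exactly one $a_{i,\sigma}$ per group is left unselected'' puts $2^{|V_i|}-1$ vertices of every group into the solution. That budget is polynomially larger than $\mathcal{O}(n/\log n)$, and it contradicts your own count of one selected vertex per group. Non-chosen vertices that ``absorb their own checker edges without charge'' or are ``selected for free via a hub'' do not exist, and no capacity on $h_i$ can force a neighbour to stay outside. If instead all $a_{i,\tau}$ are outside the solution, as your last sentences suggest, every checker absorbs all edges from assignment vertices and the chosen assignment leaves no trace.

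The missing idea is to invert the roles: the chosen assignment vertex is the \emph{unique} solution vertex of its group. In the paper, a pendant-forced hub $u_p$ has capacity $\deg(u_p)-1$, so at least one assignment vertex $v^p_q$ enters the solution. The budget $k=2n_V+2\sum_i s_i$ makes it exactly one per group, and $v^p_q$ has capacity $\deg(v^p_q)$. All non-chosen assignment vertices stay outside, so their edges must be absorbed by the hub and by the checkers.

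This also flips the checker constraint. A checker absorbs the edges of all its non-chosen neighbours, so its capacity must be close to its degree, not $|S|$. The paper uses $\deg(a_{i,j})-|C_{i,j}|$, which forces \emph{at least} $|C_{i,j}|$ chosen neighbours, i.e.\ $\sum_{c\in C_{i,j}}h(c)\ge |C_{i,j}|$, where $h(c)$ counts the true literals of $c$. Your counting argument for the reverse inequality is not valid. The total load on the checkers of a clause group, $\sum_{S}\sum_{c\in S}h(c)$, depends on the assignment, so one-sided bounds do not yield equality; for instance, $h\equiv 0$ satisfies all the upper bounds.

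Equality comes from the complementary checker $a'_{i,j}$, and the details matter. Its neighbourhood among the assignment vertices is the \emph{complement} of that of $a_{i,j}$, and its capacity is $\deg(a'_{i,j})-(n_V-|C_{i,j}|)$, with $n_V$ the number of variable groups. Since exactly $n_V$ assignment vertices are chosen and the two neighbourhoods partition them, both lower bounds are tight.

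Finally, subdividing multi-edges with ``cheap'' vertices is not available in this unweighted problem, since every solution vertex costs one unit. It is also unnecessary: the grouping lemma gives each variable group at most one occurrence per clause group, so all multiplicities are at most $1$.
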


\begin{proof}
    Let $\psi$ be an instance of {\SD} of $n$ variables.
    Assume without loss of generality that $n$ is a power of $4$
    (this can be achieved by adding dummy variables to the instance if needed).
    We begin by applying the grouping lemma of~\cite[Lemma~47]{toct/LampisV24}.
    In time $n^{\bO(1)}$, we obtain:
    \begin{itemize}
        \item a partition of $\psi$'s variables into $n_V = \bO(n / \log n)$ subsets
         $V_1, \ldots, V_{n_V}$, where for all $p \in [n_V]$, $|V_p| \leq \log n$, and
        \item a partition of $\psi$'s clauses into $n_C = \bO(\sqrt{n})$ subsets
        $C_1, \ldots, C_{n_C}$, where for all $i \in [n_C]$, $|C_i| \leq \sqrt{n}$,
    \end{itemize}
    such that for every pair $(V_p,C_i)$, at most one variable of $V_p$ appears in the clauses of $C_i$,
    and if such a variable exists, then it appears only once in $C_i$.

    Next, for every clause group $C_i$, we apply \Cref{thm:lindstrom} with $d=4$ and $U=C_i$.
    Since $|C_i| \leq \sqrt{n}$, this yields in polynomial time a $4$-detecting family
    $\{ C_{i,1}, \ldots, C_{i,s_i} \}$ of subsets of $C_i$,
    where $s_i = \bO\parens*{\frac{|C_i|}{\log |C_i|}} = \bO\parens*{\frac{\sqrt{n}}{\log n}}$.

    Define $k = 2 n_V + 2 \sum_{i \in [n_C]} s_i$.
    We will construct a capacitated graph $G = (V, E)$ with a capacity function $\capacity \colon V \to \mathbb{N}$
    such that $(G, \capacity, k)$ is a Yes instance of {\CVC} if and only if $\psi$ is a Yes instance of \SD.
    Intuitively, the construction has two parts:
    for every variable group $V_p$ we create a gadget that forces the solution to choose exactly one partial assignment of $V_p$,
    and for every set of the detecting family we create a pair of vertices that checks the corresponding aggregate constraint.

    \proofsubparagraph*{Choice Gadget.}
    For each variable subset $V_p$ we introduce a vertex $u_p$,
    as well as vertices $\mathcal{V}_p = \setdef{v^p_q}{q \in [n]}$.
    We add edges $\{u_p, v^p_q\}$ for all $q \in [n]$.
    We further attach $k+1$ unnamed leaves to $u_p$.
    We set the capacity of $u_p$ to $\capacity(u_p) = \deg_{G} (u_p) - 1$,
    and the capacity of each $v^p_q$ to $\capacity(v^p_q) = \deg_{G} (v^p_q)$.
    We fix an arbitrary one-to-one mapping so that every vertex of $\mathcal{V}_p$ corresponds to a different assignment for the variables of $V_p$.
    Since $2^{|V_p|} \leq n$, there are enough vertices to encode all assignments of $V_p$.
    Let $\mathcal{V} = \mathcal{V}_1 \cup \ldots \cup \mathcal{V}_{n_V}$ denote the set of all such vertices.
    The role of $u_p$ is to force the solution to choose one vertex of $\mathcal{V}_p$.

    \proofsubparagraph*{Clause Gadget.}
    For $i \in [n_C]$, let $C_i$ be a clause subset and $\{ C_{i, 1}, \ldots, C_{i, s_i} \}$ its 4-detecting family.
    For every set $C_{i,j}$ in this detecting family we introduce vertices $a_{i,j}$ and $a'_{i,j}$,
    and attach to each of them $k+1$ unnamed leaves.
    We connect $a_{i,j}$ to a vertex $v^p_q \in \mathcal{V}_p$ if the partial assignment represented by $v^p_q$
    satisfies the unique clause of $C_{i,j}$ that contains a variable of $V_p$.
    By the grouping lemma, each clause of $C_{i,j}$ contributes exactly three such variables, all from different variable groups.
    Hence there are exactly $3|C_{i,j}|$ relevant variable-group/clause incidences, and for each of them exactly half of the assignments of the corresponding variable group satisfy the clause.
    Therefore $a_{i,j}$ has exactly $|C_{i,j}| \cdot \frac{3n}{2}$ neighbors.
    Set the capacity of $a_{i,j}$ to $\capacity(a_{i,j}) = \deg_G (a_{i,j}) - |C_{i,j}|$.
    As for $a'_{i,j}$, we make its neighborhood on $\mathcal{V}$ complementary to that of $a_{i,j}$:
    for every $v \in \mathcal{V}$, we add the edge $\{ a'_{i,j}, v \}$ if $v \notin N(a_{i,j})$.
    Notice that $N(a_{i,j}) \cup N(a'_{i,j}) = \mathcal{V}$,
    while $N(a_{i,j}) \cap N(a'_{i,j}) = \varnothing$.
    Lastly, set the capacity of $a'_{i,j}$ to $\capacity(a'_{i,j}) = \deg_G (a'_{i,j}) - (n_V - |C_{i,j}|)$.

    This completes the construction of $G$.
    Let $\mathcal{I} = (G, \capacity, k)$ be the constructed instance of \CVC.

    \begin{claim}\label{claim:cvc_lb:sat->cvc}
        If $\psi$ is a Yes instance of \SD, then $\mathcal{I}$ is a Yes instance of \CVC.
    \end{claim}

    \begin{claimproof}
        Let $f \colon V_1 \cup \ldots \cup V_{n_V} \to \{ \true, \false \}$ be an assignment of the variables of $\psi$
        so that all of its clauses have exactly 1 True literal.
        Let $S \subseteq V(G)$ contain from each $\mathcal{V}_p$ the vertex corresponding to this assignment
        restricted to $V_p$; notice that this implies that $|S \cap \mathcal{V}_p| = 1$ for all $p \in [n_V]$.
        We further add into $S$ all vertices $u_p$, for $p \in [n_V]$, as well as all vertices $a_{i,j}$ and $a'_{i,j}$
        introduced due to the clause subsets.
        It holds that $|S| = 2 n_V + 2 \sum_{i \in [n_C]} s_i = k$.

        We argue that $S$ is a capacitated vertex cover of $G$.
        First, every chosen vertex of $S \cap \mathcal{V}$ has capacity equal to its degree,
        so we may orient all its incident edges towards it.
        Next, since $\capacity(u_p) = \deg_G (u_p) - 1$ for all $p \in [n_V]$,
        the only edge incident with $u_p$ not covered by $u_p$ is the one to the selected vertex of $\mathcal{V}_p$;
        all edges between $u_p$ and $\mathcal{V}_p \setminus S$ are covered by $u_p$.

        It remains to check the vertices $a_{i,j}$ and $a'_{i,j}$.
        We claim that for every set $C_{i,j}$, the chosen vertices $S \cap \mathcal{V}$ contain exactly $|C_{i,j}|$ neighbors of $a_{i,j}$.
        Indeed, every clause of $C_{i,j}$ has exactly one true literal under $f$,
        and by construction the chosen vertex of each variable group contributes to $N(a_{i,j})$
        exactly when its partial assignment satisfies the unique clause of $C_{i,j}$ involving that group.
        Hence each clause of $C_{i,j}$ contributes exactly one selected neighbor of $a_{i,j}$, and therefore
        $|(S \cap \mathcal{V}) \cap N_G(a_{i,j})| = |C_{i,j}|$.
        Since $|S \cap \mathcal{V}| = n_V$ and sets $N(a_{i,j}), N(a'_{i,j})$ partition $\mathcal{V}$,
        this also gives $|(S \cap \mathcal{V}) \cap N_G(a'_{i,j})| = n_V - |C_{i,j}|$.
        Thus exactly $|C_{i,j}|$ edges incident with $a_{i,j}$ and exactly $n_V-|C_{i,j}|$ edges incident with $a'_{i,j}$
        are covered by vertices of $S \cap \mathcal{V}$, and the remaining incident edges are within the capacities of $a_{i,j}$ and $a'_{i,j}$ by construction.
        This completes the proof.
    \end{claimproof}

    \begin{claim}\label{claim:cvc_lb:cvc->sat}
        If $\mathcal{I}$ is a Yes instance of \CVC, then $\psi$ is a Yes instance of \SD.
    \end{claim}

    \begin{claimproof}
        Let $S \subseteq V(G)$ be a capacitated vertex cover of $G$, with $|S| \leq k$.
        First, every vertex $u_p$, $a_{i,j}$, and $a'_{i,j}$ has $k+1$ pendant neighbors,
        so all these vertices must belong to $S$.
        There are exactly $k-n_V$ such vertices.
        Moreover, since $\capacity(u_p) = \deg_G(u_p)-1$, the set $S$ must contain at least one neighbor of each $u_p$.
        Because the neighborhoods $N(u_p)$ are pairwise disjoint and $|S| \leq k$,
        it follows that $S$ contains exactly one neighbor of each $u_p$.
        We may assume that this chosen neighbor lies in $\mathcal{V}_p$ rather than being a leaf,
        since replacing a chosen leaf by a vertex of $\mathcal{V}_p$ preserves feasibility.

        Now consider the assignment $f \colon V_1 \cup \ldots \cup V_{n_V} \to \{ \true, \false \}$
        of the variables of $\psi$ such that the vertex of $S \cap \mathcal{V}_p$ corresponds to the
        assignment $f$ restricted to the variables of $V_p$.
        Let $C_i$, where $i \in [n_C]$, be a clause subset resulting from the
        partition due to~\cite[Lemma~47]{toct/LampisV24}.
        Let $h \colon C_i \to \{ 0,1,2,3 \}$ map each clause $c \in C_i$ to the number of selected partial assignments that satisfy $c$.
        Since every clause has three literals and we selected one assignment from each variable group, indeed $h(c) \in \{0,1,2,3\}$.

        We argue that $\sum_{c \in C_{i,j}} h(c) = |C_{i,j}|$, for all $j \in [s_i]$.
        Indeed, since $\capacity(a_{i,j}) = \deg_G(a_{i,j}) - |C_{i,j}|$,
        at least $|C_{i,j}|$ edges incident with $a_{i,j}$ must be covered by vertices of $S \cap \mathcal{V}$.
        Likewise, since $\capacity(a'_{i,j}) = \deg_G(a'_{i,j}) - (n_V - |C_{i,j}|)$,
        at least $n_V - |C_{i,j}|$ edges incident with $a'_{i,j}$ must be covered by vertices of $S \cap \mathcal{V}$.
        Since $N(a_{i,j})$ and $N(a'_{i,j})$ partition $\mathcal{V}$,
        these lower bounds must in fact be tight,
        so $|(S \cap \mathcal{V}) \cap N_G(a_{i,j})| = |C_{i,j}|$.
        By construction, each selected neighbor of $a_{i,j}$ corresponds to one unit contributed to the sum $\sum_{c \in C_{i,j}} h(c)$.
        Therefore $\sum_{c \in C_{i,j}} h(c) = |C_{i,j}|$.

        Let $g \colon C_i \to \{0,1,2,3\}$ be the constant function $g \equiv 1$.
        Notice that $\sum_{c \in C_{i,j}} h(c) = \sum_{c \in C_{i,j}} g(c)$, for all $j \in [s_i]$.
        Since $\{ C_{i,1}, \ldots, C_{i, s_i} \}$ is a 4-detecting family, this implies that $h \equiv g$.
        Thus $h(c)=1$ for every clause $c \in C_i$, meaning that exactly one selected partial assignment satisfies $c$.
        Since $i \in [n_C]$ was arbitrary, every clause of $\psi$ has exactly one true literal under $f$.
    \end{claimproof}
    Correctness follows by \Cref{claim:cvc_lb:sat->cvc,claim:cvc_lb:cvc->sat}.
    Finally, assuming there exists a $k^{o(k)} n^{\bO(1)}$ algorithm for \CVC,
    one could decide {\SD} in time
    \[
        k^{o(k)} n^{\bO(1)} = \left( \frac{n}{\log n} \right)^{o(n / \log n)} n^{\bO(1)} =
        2^{(\log n - \log \log n) o(n / \log n)} =
        2^{o(n)},
    \]
    which contradicts the ETH due to~\cite[Theorem~46]{toct/LampisV24}.
\end{proof}

\section{Vertex Cover Number and Vertex Integrity}\label{sec:vertex_cover_integrity}

In this section we consider the parameterizations by vertex cover number and vertex integrity.
We begin by establishing a barrier on the parameter dependence in the exponent for the parameterization by vertex cover number.
The recent work of Rohwedder and W\k{e}grzycki~\cite{innovations/RohwedderW25}
establishes a fine-grained equivalence between several problems and \textsc{Integer Linear Programming} (ILP) with bounded entries parameterized by the number of constraints,
including {\SMC} parameterized by the size of the universe.
In particular, by~\cite[Theorem~2]{innovations/RohwedderW25},
there exists a $2^{\bO(m^{2-\varepsilon})} n^{\bO(1)}$ algorithm for {\SMC} with $m$ elements in the universe
if and only if ILP with bounded values admits an algorithm of the same running time, where $m$ denotes the number of constraints, when each variable $x_i$ is additionally subject to bounds $\ell_i \le x_i \le r_i$ that do not count towards $m$.
Since in the reduction below the vertex cover number of the constructed graph is at most the universe size, any significantly faster algorithm for {\CVC} parameterized by $\vc$ would therefore transfer to this ILP variant.

\begin{theorem}\label{thm:vc:lb}
    Let $\varepsilon > 0$ be any constant.
    {\CVC} cannot be solved in time $2^{\bO(\vc^{2 - \varepsilon})} n^{\bO(1)}$ where $\vc$ denotes the vertex cover number of the input graph,
    unless there exists a $2^{\bO(m^{2 - \varepsilon})} n^{\bO(1)}$ algorithm for {\SMC} where $m$ denotes the size of the universe.
\end{theorem}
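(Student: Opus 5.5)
We reduce from \SMC{} parameterized by the universe size $m$. An instance consists of a universe $U=\{e_1,\ldots,e_m\}$, a family $\mathcal{S}=\{S_1,\ldots,S_N\}$ of subsets of $U$, and an integer $\ell$. The question is whether some $\ell$ sets of $\mathcal{S}$ cover every element of $U$ exactly once, i.e.\ form a partition of $U$.

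Given such an instance, we build a bipartite graph in which the universe forms the vertex cover side. For every element $e \in U$ we create a vertex $x_e$ and attach $k+1$ pendant leaves to it, which forces $x_e \in S$. For every set $S_j \in \mathcal{S}$ we create a vertex $y_j$ adjacent exactly to $\{x_e : e \in S_j\}$ and give it capacity $\capacity(y_j)=\deg(y_j)=|S_j|$. The capacity of $x_e$ is set to $\deg(x_e)-1$, so that exactly one edge incident to $x_e$ must be covered from the other side. This other side is either a chosen $y_j$ or a chosen leaf, and we will argue that leaves can be exchanged away. Finally set $k=m+\ell$.

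If we stopped here, a chosen $y_j$ could cover only part of its edges, so we would only get a cover, not a partition. To force a selected $y_j$ to absorb all its edges, we use a global counting argument, in the same spirit as the $a_{i,j}$/$a'_{i,j}$ gadgets of \cref{thm:k:lb}. The $x_e$ need $m$ units of help in total. If exactly $\ell$ vertices $y_j$ are chosen, "each chosen $y_j$ takes all its $|S_j|$ edges" is not forced automatically. However, a chosen $y_j$ may as well take all its edges, since each such edge only relieves $x_e$; relieving it more is harmless, because the capacity $\deg(x_e)-1$ is an upper bound. Hence feasibility means exactly that the chosen sets cover $U$. That yields a Set Cover with $\ell$ sets, not an exact one.

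To obtain exactness we add an extra universal vertex $z$ with $k+1$ leaves (so $z \in S$) adjacent to all $y_j$. We give $z$ capacity $\deg(z)-\ell$, and we give $y_j$ capacity $|S_j|+1$. Then the $\ell$ chosen $y_j$ must each take their edge to $z$. We also change $\capacity(x_e)$ to $\deg(x_e)-1$, which forces at least one chosen set containing $e$ for each element. Summing, the chosen sets have total size at least $m$, and we need total size exactly $m$; for this we add a partner vertex $w$ as in the $a/a'$ trick. Concretely, $w$ is adjacent to each $y_j$ through $|S_j|$ private subdivision-free paths. Since multi-edges are not allowed, we instead connect $w$ to $|S_j|$ private leaves hanging off $y_j$, or alternatively weight via a vertex $w$ whose capacity deficit $\sum$ enforces $\sum_{j\text{ chosen}}|S_j|\le m$. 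The cleanest way is to let $w$ require $M-m$ units, where $M=\sum_j |S_j|$, covered by unchosen-side structure. I expect this exactness-enforcing gadget to be the main obstacle. The first fallback is to note that \cite[Theorem~2]{innovations/RohwedderW25} also covers the covering (non-exact) variant, in which case the first construction suffices.

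The vertex cover of the resulting graph is $\{x_e\}\cup\{z,w\}$, of size $m+\bO(1)$, and the construction is polynomial. Consequently, a $2^{\bO(\vc^{2-\varepsilon})}n^{\bO(1)}$ algorithm for \CVC{} yields a $2^{\bO(m^{2-\varepsilon})}n^{\bO(1)}$ algorithm for the set-cover problem in question. The correctness proof follows the two-claim pattern of \cref{thm:k:lb}, with pendant leaves forcing the high-degree vertices into $S$ and with leaf-to-$y_j$ exchanges used in the backward direction.
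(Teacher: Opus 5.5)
There is a genuine gap, and it is in your first line: you have misidentified the source problem. In the statement, \SMC{} is \textsc{Set Multicover}. Given sets $S_1,\dots,S_n$ over a universe of size $m$ and integers $b,k$, it asks whether some $k$ sets cover every element \emph{at least $b$ times}, where $b$ is part of the input. Neither problem you work with can serve as the source. Plain \textsc{Set Cover} ($b=1$) and \textsc{Exact Cover} are both solvable in time $2^{m} n^{\bO(1)}$ by dynamic programming over subsets of the universe. A reduction from either would therefore make the theorem vacuous, since its ``unless'' clause would hold unconditionally. For the same reason your fallback is false. The equivalence class of Rohwedder and W\k{e}grzycki cannot contain plain \textsc{Set Cover}, because otherwise ILP with bounded entries would have a $2^{\bO(m)}$ algorithm. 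The $m^2$ in the exponent comes precisely from the large multiplicity $b$: the natural DP for the multicover version has $(b+1)^m$ states.

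The fix is a one-line change to your first construction. Give $x_e$ capacity $\deg(x_e)-b$ instead of $\deg(x_e)-1$, and keep $\capacity(y_j)=\deg(y_j)$ and $k=m+\ell$. Your own argument then carries over: chosen set vertices may absorb all their edges, each $x_e$ needs $b$ of its edges covered by chosen neighbours, and $\{x_e\}$ is a vertex cover of size $m$. This is exactly the paper's proof.

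With this change the $z$/$w$ machinery becomes unnecessary. That is just as well, since you never specify it, and its ``private leaves'' variant would break $\vc\le m+\bO(1)$ because the edges between $y_j$ and those vertices are not covered by $\{x_e\}\cup\{z,w\}$.

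One point needs care in the backward direction. The leaf-exchange step requires every element to lie in at least $b$ sets. In that case a chosen leaf of $x_e$ can be swapped for an unchosen set neighbour of $x_e$, or dropped if all set neighbours are already chosen. Instances in which some element lies in fewer than $b$ sets are trivially No and should be mapped to a trivial No instance beforehand. Otherwise chosen leaves can stand in for missing sets and the reduction is unsound.
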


\begin{proof}
    Let $\mathcal{I}$ be an instance of {\SMC} with universe $U = [m]$,
    sets $S_1,\ldots,S_n \subseteq U$,
    and integers $b,k \in \N$.
    We ask whether there exists a subfamily of at most $k$ sets such that
    every element of $U$ belongs to at least $b$ chosen sets.

    We construct an instance $\mathcal{J} = (G,\capacity,k')$ of {\CVC} as follows.
    For each element $i \in [m]$ we introduce an \emph{element} vertex $v_i$ and $k'+1$ unnamed leaves adjacent only to $v_i$.
    For each set $S_j$, $j \in [n]$, we introduce a \emph{set} vertex $u_j$.
    We add the edge $\{v_i,u_j\}$ whenever $i \in S_j$.
    For the capacities, we set $\capacity(v_i) = \deg_G(v_i) - b$ for all $i \in [m]$,
    and $\capacity(u_j) = \deg_G(u_j)$ for all $j \in [n]$.
    Finally, we let $k' = m + k$.

    \begin{claim}\label{claim:SMC->CVC}
        If $\mathcal{I}$ is a Yes instance of \SMC, then $\mathcal{J}$ is a Yes instance of \CVC.
    \end{claim}

    \begin{claimproof}
        Let $I \subseteq [n]$ be a family of at most $k$ sets such that every element $i \in [m]$
        belongs to at least $b$ sets $S_j$ with $j \in I$.
        Define $X = \setdef{v_i}{i \in [m]} \cup \setdef{u_j}{j \in I}$.
        Clearly, $|X| \leq m+k = k'$.

        We show that $X$ is a capacitated vertex cover.
        First, for every $i \in [m]$, $v_i$ covers the edges to its leaves.
        For each chosen set vertex $u_j \in X$, since $\capacity(u_j)=\deg_G(u_j)$,
        it may cover all edges incident with it.
        Consider now an element vertex $v_i$.
        By assumption, at least $b$ neighbors of $v_i$ among the set vertices belong to $X$.
        Hence at most $\deg_G(v_i)-b=\capacity(v_i)$ edges incident with $v_i$
        need to be covered by $v_i$ itself.
        Therefore the capacity bound of $v_i$ is respected, and all edges are covered.
    \end{claimproof}

    \begin{claim}\label{claim:CVC->SMC}
        If $\mathcal{J}$ is a Yes instance of \CVC, then $\mathcal{I}$ is a Yes instance of \SMC.
    \end{claim}

    \begin{claimproof}
        Let $X$ be a capacitated vertex cover of $G$ with $|X| \leq k'$.
        First, every element vertex $v_i$ must belong to $X$.
        Indeed, if $v_i \notin X$, then all of its $k'+1$ incident leaves must do so,
        contradicting the fact that $|X| \leq k'$.
        Furthermore, we may assume that no leaf belongs to $X$; if that is the case,
        exchange the leaf for a set vertex incident to the same element vertex.
        Let $I = \setdef{j \in [n]}{u_j \in X}$, where $|I| \leq k$.

        Fix any element $i \in [m]$.
        Because $v_i \in X$ and $\capacity(v_i)=\deg_G(v_i)-b$,
        at least $b$ edges incident with $v_i$ must be covered by its neighbors.
        Since its leaves do not belong to $X$, these $b$ neighboring vertices must all be set vertices.
        Therefore at least $b$ vertices $u_j \in X$ satisfy $i \in S_j$.
        Equivalently, $i$ belongs to at least $b$ sets $S_j$ with $j \in I$.
        Since $i \in [m]$ was arbitrary, the family $\setdef{S_j}{j \in I}$ is a feasible solution of size at most $k$
        for the {\SMC} instance.
    \end{claimproof}
    Correctness follows from \cref{claim:SMC->CVC,claim:CVC->SMC}.
    Observe that the set $\setdef{v_i}{i \in [m]}$ is a vertex cover of $G$,
    hence $\vc(G) \leq m$.
    Consequently, a $2^{\bO(\vc^{2-\varepsilon})} n^{\bO(1)}$ algorithm for {\CVC} would imply a
    $2^{\bO(m^{2-\varepsilon})} n^{\bO(1)}$ algorithm for \SMC.
    This completes the proof.
\end{proof}

We now turn to vertex integrity.
{\CVC} is already known to be FPT in that case~\cite{tcs/GimaHKKO22},
however the algorithm of Gima et al.~has a running time with double-exponential parametric dependence:
they formulate the problem as an ILP with an FPT number of variables ($2^{\bO(\vi^2)}$),
and then use standard tools such as Lenstra's algorithm~\cite{mor/Lenstra83} to solve it.
Our main algorithmic result in this section is a significant improvement over this bound,
obtained by expressing the problem as an $N$-fold Integer Program instead.
In particular, the running time of our algorithm matches the barrier of \cref{thm:vc:lb}.

\begin{theoremrep}[\appsymb]\label{thm:vi:algo}
    There is an algorithm that solves {\CVC} in time
    $\vi^{\bO(\vi^2)} n^{\bO(1)}$,
    where $\vi$ is the vertex integrity of the input graph.
\end{theoremrep}

\begin{proof}
    If $k \le \vi$, then we may simply use one of the known algorithms for the natural parameterization~\cite{iwpec/DomLSV08,sofsem/RooijR19},
    which runs in time $k^{\bO(k)} n^{\bO(1)} \le \vi^{\bO(\vi)} n^{\bO(1)}$.
    Hence, in the remainder, we may assume that $\vi < k$.

    Let $(G,\capacity,k)$ be an instance of {\CVC}, where $G=(V,E)$ and $\capacity \colon V \to \mathbb{N}$.
    Let $U \subseteq V$ be a set witnessing the vertex integrity of $G$, that is,
    for every connected component $C$ of $G-U$ we have $|U| + |V(C)| \le \vi(G)$.
    Such a set can be computed in time ${\vi}^{\bO(\vi)} n^{\bO(1)}$~\cite{algorithmica/DrangeDH16}.
    Write $U=\{u_1,\ldots,u_{|U|}\}$ and let $\cc(G-U)=\{C^1,\ldots,C^m\}$.
    By definition, each component $C^j$ has size at most $\vi-|U| \le \vi$.
    We guess the behavior of a solution on $U$, and then solve the remaining task on the components of $G-U$ by an $N$-fold IP.

    We now try all possible behaviors of a solution on the modulator $U$.
    More precisely, we guess a set $S \subseteq U$, intended to be the set of vertices of $U$ with positive in-degree,
    and an orientation $O_U$ of the edges of $G[U]$.
    We call such a guess \emph{valid} if $\indeg_{O_U}(u_i) \le \capacity(u_i)$ for every $u_i \in U$,
    and $\indeg_{O_U}(u_i) > 0$ implies $u_i \in S$.
    There are at most $2^{|U|}$ choices for $S$ and at most $2^{|E(G[U])|}$ orientations of $G[U]$,
    and since $|U| \le \vi$ and $|E(G[U])| \le \vi^2$ it follows that
    the number of valid guesses is at most $2^{\bO(\vi^2)}$.

    Fix one valid guess $(S,O_U)$.
    For every $i \in [|U|]$, define
    \[
        b_i =
        \begin{cases}
            \capacity(u_i) - \indeg_{O_U}(u_i) & \text{if } u_i \in S, \\
            0 & \text{if } u_i \notin S.
        \end{cases}
    \]
    Thus $b_i$ is the remaining capacity of $u_i$ available for edges coming from the components of $G-U$.

    \proofsubparagraph{Local constraints.}
    For each component $C^j$, let $F^j$ be the set of all edges with at least one endpoint in $V(C^j)$.
    A \emph{partial orientation} of $C^j$ is an orientation of all edges in $F^j$.
    Such a partial orientation is \emph{valid} if (i) it never orients an edge towards a vertex of $U \setminus S$,
    and (ii) it respects the capacity of every vertex of $C^j$.

    Once $(S,O_U)$ has been fixed, each component interacts with the rest of the graph only through the vertices of $U$.
    Therefore, for every valid partial orientation, it suffices to record two pieces of information:
    its contribution to the load of each vertex of $U$, and its contribution to the number of vertices of positive in-degree.

    For each $j \in [m]$, enumerate all valid partial orientations of $F^j$ and denote the resulting family by
    \[
        \mathcal{O}^j = \{O^j_1,\ldots,O^j_{\mu_j}\}.
    \]
    Since $|U|+|V(C^j)| \le \vi$ and all edges of $F^j$ lie in the graph induced by $U \cup V(C^j)$,
    we have $\mu_j \le 2^{\bO(\vi^2)}$.
    If $\mu_j=0$, then the current guess can be discarded immediately.
    For every $q \in [\mu_j]$ and every $i \in [|U|]$, let
    \[
        a^{j,q}_i
        =
        |\setdef{u_i v \in E(G)}{v \in V(C^j)\text{ and }O^j_q(u_i v)=v \to u_i}|,
    \]
    that is, $a^{j,q}_i$ is the number of additional edges covered by $u_i$ when the component $C^j$ uses the partial orientation $O^j_q$.
    We also define
    \[
        d^{j,q}
        =
        |\setdef{v \in V(C^j)}{\indeg_{O^j_q}(v) > 0}|,
    \]
    which is the number of vertices of $C^j$ that contribute to the solution size under $O^j_q$.

    We introduce, for every component $C^j$ and every $q \in [\mu_j]$, a binary variable $I^j_q \in \{0,1\}$.
    The intended meaning is that $I^j_q=1$ if and only if we choose the partial orientation $O^j_q$ for the component $C^j$.
    The local constraint of brick $j$ is simply
    \[
        \sum_{q \in [\mu_j]} I^j_q = 1,
    \]
    enforcing that exactly one valid partial orientation is selected for $C^j$.

    \proofsubparagraph{Global constraints.}
    The bricks are coupled through the vertices of the modulator and through the solution-size budget.
    For every $i \in [|U|]$, we require
    \[
        \sum_{j \in [m]} \sum_{q \in [\mu_j]} a^{j,q}_i I^j_q \le b_i,
    \]
    ensuring that the total additional load on $u_i$ does not exceed its remaining capacity.
    We also require
    \[
        \sum_{j \in [m]} \sum_{q \in [\mu_j]} d^{j,q} I^j_q \le k-|S|,
    \]
    ensuring that outside the modulator at most $k-|S|$ vertices obtain positive in-degree.

    The resulting formulation has one brick per connected component of $G-U$, one local equality per brick,
    at most $|U|+1 \le \vi+1$ global constraints, coefficients bounded by $\vi$,
    and at most $2^{\bO(\vi^2)}$ variables per brick.
    Hence it is an $N$-fold IP whose relevant parameters are functions of~$\vi$.

    \begin{claim}
        The current guess $(S,O_U)$ can be extended to a feasible orientation of size at most $k$
        if and only if the above integer program is feasible.
    \end{claim}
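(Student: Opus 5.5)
We prove the two directions separately. In both, the bridge is the correspondence between a global orientation that agrees with $O_U$ on $G[U]$ and a choice of one valid partial orientation per component. This works because the edge sets $E(G[U]), F^1, \ldots, F^m$ partition $E(G)$: every edge either has both endpoints in $U$ or has an endpoint in exactly one component $C^j$ (there are no edges between distinct components).

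\emph{($\Rightarrow$)} Let $O$ be a feasible orientation of size at most $k$ that extends $(S,O_U)$. Extending the guess means that $O$ agrees with $O_U$ on $G[U]$ and that the vertices of $U$ with positive in-degree in $O$ are contained in $S$. For each $j$, restrict $O$ to $F^j$. This restriction is a valid partial orientation:
\begin{itemize}
\item no edge of $F^j$ is oriented towards $U\setminus S$, since those vertices have in-degree $0$ in $O$;
\item the capacities of vertices of $C^j$ hold, since all their incident edges lie in $F^j$.
\end{itemize}
So the restriction equals $O^j_{q_j}$ for some $q_j$, and we set $I^j_{q_j}=1$ and all other variables of brick $j$ to $0$; the local constraints then hold.

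For the load constraints, fix $u_i \in S$. The in-degree of $u_i$ in $O$ is $\indeg_{O_U}(u_i) + \sum_j a^{j,q_j}_i$, and this is at most $\capacity(u_i)$, which gives $\sum_j a^{j,q_j}_i \le b_i$. For $u_i \notin S$ every $a^{j,q_j}_i$ is $0 = b_i$.

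For the budget constraint, the vertices of positive in-degree outside $U$ are counted exactly by $\sum_j d^{j,q_j}$, because the in-degree of a vertex $v \in C^j$ depends only on $F^j$. Those inside $U$ are a subset of $S$. Hence $\sum_j d^{j,q_j} \le k - |S|$ holds if the size of $O$ is measured as $|S|$ plus the vertices outside $U$ with positive in-degree.

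This accounting convention is the delicate point. I interpret ``extends'' as allowing us to count all of $S$ towards the budget. This is harmless: an actual solution has some set $S^*$ of modulator vertices with positive in-degree, and the enumeration also tries the guess $S = S^*$, for which the two counts coincide.

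\emph{($\Leftarrow$)} Take a feasible integer solution. By the local constraint, each brick $j$ selects exactly one $q_j$. Define $O$ to be $O_U$ on $G[U]$ and $O^j_{q_j}$ on $F^j$; this is a well-defined orientation because the edge sets partition $E(G)$. We check feasibility vertex by vertex.
\begin{itemize}
\item Vertices of $C^j$ respect their capacities by validity of $O^j_{q_j}$.
\item A vertex $u_i \in S$ has in-degree $\indeg_{O_U}(u_i) + \sum_j a^{j,q_j}_i \le \indeg_{O_U}(u_i) + b_i = \capacity(u_i)$.
\item A vertex of $U\setminus S$ receives no arcs from the components (by condition (i)) and none from $O_U$ (by validity of the guess), so its in-degree is $0$.
\end{itemize}
The set of vertices with positive in-degree lies in $S$ together with the counted vertices outside $U$, so the size of $O$ is at most $|S| + (k - |S|) = k$.

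The main obstacle is only the bookkeeping: using the partition of the edge set so that loads and in-degrees decompose additively, and keeping the size convention for $S$ consistent. No deeper argument is needed. Finally, the two ``$\le$'' global constraints can be turned into equalities using bounded slack variables, placed in one extra brick, so that the program has the exact form of an $N$-fold IP as required by \Cref{prop:nfold}.
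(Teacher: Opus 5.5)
Your proof is correct and follows the paper's argument essentially step for step. For the forward direction you restrict a solution to each $F^j$ to select one valid partial orientation per brick, and for the reverse direction you glue $O_U$ with the selected $O^j_{q_j}$ along the edge partition $E(G[U]), F^1, \ldots, F^m$. The accounting subtlety you flag is handled in the paper by assuming in the forward direction that $S$ is exactly the set of vertices of $U$ with positive in-degree, so no special size convention is needed. Under your weaker ``contained in'' reading the forward direction can genuinely fail when the budget is tight, which your convention patches. Your slack-variable remark is a harmless formality that the paper leaves implicit.
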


    \begin{claimproof}
        For the forward direction, suppose there exists a feasible orientation $O$ of size at most $k$
        whose restriction to $G[U]$ is $O_U$ and such that $S$ is the set of vertices of $U$ with positive in-degree.
        For each component $C^j$, let $q_j$ be such that the restriction of $O$ to $F^j$ is the partial orientation $O^j_{q_j}$.
        Since $O$ is feasible and no vertex of $U \setminus S$ has positive in-degree, every $O^j_{q_j}$ is valid.
        Set $I^{j}_{q_j}=1$ and all other variables of brick $j$ to $0$.
        Then every local constraint is satisfied.

        For every $i \in [|U|]$, the left-hand side of the corresponding global constraint is exactly the number of edges entering $u_i$
        from the components of $G-U$, which is at most $b_i$ because $O$ is feasible.
        Moreover,
        \[
            \sum_{j \in [m]} \sum_{q \in [\mu_j]} d^{j,q} I^j_q
        \]
        is precisely the number of vertices of $V \setminus U$ with positive in-degree in $O$,
        and this is at most $k-|S|$.
        Hence the integer program is feasible.

        For the reverse direction, suppose the integer program is feasible.
        Since every variable is binary and each brick satisfies $\sum_{q \in [\mu_j]} I^j_q=1$,
        each component $C^j$ selects exactly one valid partial orientation, say $O^j_{q_j}$.
        We define an orientation $O$ of $G$ by taking $O_U$ on $G[U]$ and, for every component $C^j$, taking $O^j_{q_j}$ on $F^j$.
        This is well defined because the sets $F^1,\ldots,F^m$ are pairwise edge-disjoint and disjoint from $E(G[U])$.

        By validity of each selected partial orientation, every vertex of $V \setminus U$ respects its capacity bound in $O$,
        and no edge is oriented towards a vertex of $U \setminus S$.
        For every $u_i \in S$, the number of additional incoming edges coming from the components of $G-U$ is
        \[
            \sum_{j \in [m]} \sum_{q \in [\mu_j]} a^{j,q}_i I^j_q,
        \]
        which is at most $b_i$ by the corresponding global constraint.
        Since $\indeg_{O_U}(u_i) + b_i = \capacity(u_i)$, the capacity of $u_i$ is also respected.
        Therefore $O$ is a feasible orientation.

        Finally, the number of vertices of $U$ with positive in-degree in $O$ is at most $|S|$,
        while the last global constraint guarantees that at most $k-|S|$ vertices of $V \setminus U$ have positive in-degree.
        Hence $\size(O) \le |S| + (k-|S|) = k$, so the current guess indeed yields a solution.
        This completes the proof.
    \end{claimproof}

    \proofsubparagraph{Running time.}
    There are at most $2^{\bO(\vi^2)}$ valid guesses $(S,O_U)$.
    For a fixed guess and a fixed component $C^j$, the set $F^j$ contains at most $\vi^2$ edges,
    hence all valid partial orientations of $F^j$ can be enumerated in time $2^{\bO(\vi^2)}$.
    Since there are at most $n$ connected components, the total time needed to construct all bricks for one fixed guess is
    $2^{\bO(\vi^2)} n^{\bO(1)}$.

    We now consider the optimization version of the above program:
    we drop the final inequality
    \[
        \sum_{j \in [m]} \sum_{q \in [\mu_j]} d^{j,q} I^j_q \le k-|S|
    \]
    and instead minimize its left-hand side.
    By the claim above, the current guess $(S,O_U)$ succeeds if and only if the optimum value of this objective is at most $k-|S|$.

    Let us now bound the parameters of this $N$-fold IP in the notation of \cref{prop:nfold}.
    In the optimization version, the number of global constraints is
    $r = |U| \le \vi$.
    Each brick has exactly one local equality, so
    $s = 1$.
    The number of variables in brick $j$ is
    $t_j = \mu_j \le 2^{\bO(\vi^2)}$,
    and thus
    $t = \max_j t_j \le 2^{\bO(\vi^2)}$.
    Moreover, every coefficient appearing in the constraint matrices is at most $\vi$:
    the local matrix contains only coefficients in $\{0,1\}$,
    while each value $a^{j,q}_i$ or $d^{j,q}$ is at most $|V(C^j)| \le \vi$.
    Hence
    $a = r \cdot s \cdot \max_{i \in [N]} \big\{ \max\{\|D_i\|_\infty,\|A_i\|_\infty\} \big\}
        = \bO(\vi^2)$.
    The exponent in \cref{prop:nfold} is
    $r^2 s + rs^2 = \bO(\vi^2)$,
    and therefore
    $a^{r^2 s + rs^2} = \vi^{\bO(\vi^2)}$.

    The dimension of the instance is
    $d = \sum_{j \in [m]} t_j \le n \cdot 2^{\bO(\vi^2)}$,
    so $d \log d = 2^{\bO(\vi^2)} n^{\bO(1)}$.
    Finally, the optimum value of the objective is at most $n$.
    Hence we may take $L \le n$.
    Plugging these bounds into \cref{prop:nfold}, one fixed guess can be processed in time
    \[
        \vi^{\bO(\vi^2)} \cdot 2^{\bO(\vi^2)} n^{\bO(1)}
        =
        \vi^{\bO(\vi^2)} n^{\bO(1)}.
    \]
    Multiplying by the $2^{\bO(\vi^2)}$ possible valid guesses preserves the same asymptotic bound.
\end{proof}

\section{Tree-depth}\label{sec:td}

In this section we strengthen the hardness result of
Dom, Lokshtanov, Saurabh, and Villanger~\cite{iwpec/DomLSV08} for bounded tree-depth graphs
by combining their reduction with the recursive verification framework of~\cite{toct/LampisV24}.
The reduction of~\cite{iwpec/DomLSV08} starts from an instance $(G,k)$ of {\kMC}
and produces an equivalent instance of {\CVC} whose graph has tree-depth $\bO(k^2)$,
which yields only an $n^{o(\sqrt{\td})}$ lower bound under the ETH.%
\footnote{Dom et al.~\cite{iwpec/DomLSV08} state a treewidth bound of $\bO(k^3)$ for their construction;
however, the same graph in fact has tree-depth $\bO(k^2)$.}
Here we keep the same high-level encoding of one chosen vertex per color class,
but replace the quadratic family of pairwise consistency gadgets with a recursive gadget.
This reduces the tree-depth to $\bO(k)$ and therefore improves the lower bound to $n^{o(\td)}$.
In particular, the standard $n^{\bO(\tw)}$ algorithm is already optimal even on graphs of bounded tree-depth.

\begin{theoremrep}[\appsymb]\label{thm:td:lb}
    For any computable function $f$,
    if there exists an algorithm that solves {\CVC} in time $f(\td) n^{o(\td)}$,
    where $\td$ denotes the tree-depth of the input graph, then the ETH is false.
\end{theoremrep}

\begin{proof}
    Recall that in \kMC, we are given a graph $G = (V, E)$ and a partition of $V$ into $k$ independent sets $V_1, \ldots, V_k$, each of size $n$,
    and we are asked to determine whether $G$ contains a $k$-clique.
    It is well-known that this problem does not admit any $f(k) n^{o(k)}$ algorithm,
    where $f$ is any computable function,
    unless the ETH is false~\cite{books/CyganFKLMPPS15}.

    Let $(G,k)$ be an instance of \kMC,
    such that every vertex of $G$ has a self loop,
    i.e., $\braces{v,v} \in E(G)$ for all $v \in V(G)$.
    We assume that~$G$ is given to us partitioned into $k$ independent sets $V_1, \ldots, V_k$,
    where $V_i = \{ v^i_1, \ldots, v^i_n \}$.
    Assume without loss of generality that $k = 2^z$ for some $z \in \mathbb{N}$
    (one can do so by adding dummy independent sets connected to all the other vertices of the graph).
    Moreover, let $E^{i_1,i_2} \subseteq E(G)$ denote the edges of $G$ with one endpoint in $V_{i_1}$ and the other in $V_{i_2}$.
    We will construct in polynomial time an equivalent instance $(H, \capacity, k')$ of \CVC,
    where $H$ is a graph of tree-depth $\td(H) = \bO(k)$,
    $\capacity \colon V \to \mathbb{N}$ is a capacity function,
    and $k'$ is an integer,
    such that $G$ has a $k$-clique if and only if $H$ has a capacitated vertex cover of size at most $k'$.

    We move on to describe the construction of graph $H$.
    As in~\cite{iwpec/DomLSV08}, when we say that a vertex is \emph{marked}, we attach to it $k'+1$ leaves, each of which of capacity $1$.
    Furthermore, when we say that we connect vertices $u$ and $v$ via an \emph{$A$-edge}, we introduce an independent set on
    $A$ marked vertices, each of capacity $(k'+1) + 1$, and add edges between each vertex of the independent set and vertices $u,v$.

    \begin{figure}[ht]
        \centering
        \begin{tikzpicture}[scale=0.8, transform shape]
            \node[vertex] (u1) at (4,4.5) {};
            \node[] () at (3.6,4.5) {$u_1$};
            \node[vertex] (u2) at (6,4.5) {};
            \node[] () at (6.4,4.5) {$u_2$};

            \node[black_vertex] (v1) at (5,5) {};
            % \node[] () at (5,4.6) {$\vdots$};
            \node[black_vertex] (v2) at (5,4.5) {};
            \node[black_vertex] (v3) at (5,4) {};

            \draw[] (u1)--(v1)--(u2);
            \draw[] (u1)--(v2)--(u2);
            \draw[] (u1)--(v3)--(u2);
        \end{tikzpicture}
        \caption{Example of a $3$-edge connecting vertices $u_1$ and $u_2$.
        Black vertices have $k'+1$ leaves attached and are of capacity $k'+2$.}
        \label{fig:td_lb_A_edge}
    \end{figure}
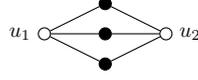

    \proofsubparagraph*{Choice and Copy Gadgets.}
    For an independent set $V_i$, we construct the \emph{choice gadget} $\hat{C}_i$ as follows.
    We introduce vertices $v^i_j$ for $j \in [n]$, as well as a marked vertex $\hat{x}_i$.
    We add all edges $\{ \hat{x}_i, v^i_j \}$.
    As for the capacities of the introduced vertices, we set $\capacity(\hat{x}_i) = \deg_H (\hat{x}_i) - 1$, while $\capacity(v^i_j) = \deg_H (v^i_j)$.
    Given two instances $\mathcal{I}_1,\mathcal{I}_2$ of a choice gadget $\hat{C}_i$,
    when we say that we add a \emph{copy gadget $(\mathcal{I}_1, \mathcal{I}_2)$},
    we introduce two marked \emph{copy vertices} $\hat{g}_1,\hat{g}_2$,
    and for every vertex $v^i_j$ of $\mathcal{I}_1$ (resp., $\mathcal{I}_2$),
    we introduce a $j$-edge (resp., $(n-j)$-edge) between $v^i_j$ and $\hat{g}_1$,
    as well as a $(n-j)$-edge (resp., $j$-edge) between $v^i_j$ and $\hat{g}_2$.
    We set $\capacity(\hat{g}_1) = \deg_H (\hat{g}_1) - n$ and $\capacity(\hat{g}_2) = \deg_H (\hat{g}_2) - n$.
    See also \Cref{fig:td_lb_copy_choice_gadget} for an illustration.

    \begin{figure}[ht]
        \centering
        \begin{tikzpicture}[scale=0.8, transform shape]

            %%%%%%%%%%%%%%%%%%%%%%%%%%%%%%%% I_1
            \node[] () at (4.5,7) {$\mathcal{I}_1$};

            \node[vertex] (v11) at (5,6) {};
            \node[] () at (4.6,6.1) {$v^i_1$};

            \node[] () at (5,5.26) {$\vdots$};

            \node[black_vertex] (x1) at (4,4.5) {};
            \node[] () at (3.6,4.5) {$\hat{x}_i$};

            \node[vertex] (v12) at (5,4.5) {};
            \node[] () at (4.6,4.8) {$v^i_j$};

            \node[] () at (5,3.76) {$\vdots$};

            \node[vertex] (v1n) at (5,3) {};
            \node[] () at (4.6,3) {$v^i_n$};

            \draw[] (x1)--(v11);
            \draw[] (x1)--(v12);
            \draw[] (x1)--(v1n);

            \node[black_vertex] (g1) at (7,5) {};
            \node[] () at (7,5.4) {$\hat{g}_1$};
            \node[black_vertex] (g2) at (7,4) {};
            \node[] () at (7,3.6) {$\hat{g}_2$};

            \draw[dashed] (g1)--(v12) node[midway, above] {$j$};
            \draw[dashed] (g2)--(v12) node[midway, below] {$n-j$};

            % \draw[dashed] (g1)--(v11);
            % \draw[dashed] (g2)--(v11);
            % \draw[dashed] (g1)--(v1n);
            % \draw[dashed] (g2)--(v1n);

            \begin{scope}[shift={(4,0)}]
                %%%%%%%%%%%%%%%%%%%%%%%%%%%%%%%% I_2

                \node[] () at (5.5,7) {$\mathcal{I}_2$};

                \node[vertex] (v21) at (5,6) {};
                \node[] () at (5.4,6.1) {$v^i_1$};

                \node[] () at (5,5.26) {$\vdots$};

                \node[black_vertex] (x2) at (6,4.5) {};
                \node[] () at (6.4,4.5) {$\hat{x}_i$};

                \node[vertex] (v22) at (5,4.5) {};
                \node[] () at (5.4,4.8) {$v^i_j$};

                \node[] () at (5,3.76) {$\vdots$};

                \node[vertex] (v2n) at (5,3) {};
                \node[] () at (5.4,3) {$v^i_n$};

                \draw[] (x2)--(v21);
                \draw[] (x2)--(v22);
                \draw[] (x2)--(v2n);

                \draw[dashed] (g1)--(v22) node[midway, above] {$n-j$};
                \draw[dashed] (g2)--(v22) node[midway, below] {$j$};

            \end{scope}

        \end{tikzpicture}
        \caption{Addition of a copy gadget $(\mathcal{I}_1, \mathcal{I}_2)$.
        Marked vertices are in black, and dashed edges are used for $A$-edges, with $A$ being written on top of them,
        where only those incident to $v^i_j$ are depicted for the sake of readability.}
        \label{fig:td_lb_copy_choice_gadget}
    \end{figure}
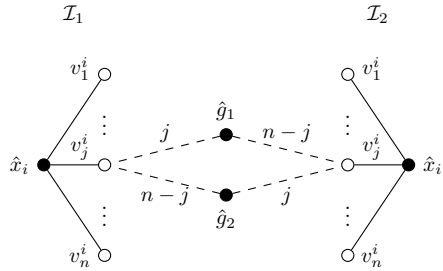

    \proofsubparagraph*{Adjacency Gadget.}
    We now proceed to describe the \emph{adjacency gadgets} of our construction.%
    \footnote{For some high-level intuition regarding the adjacency gadgets,
    we refer the reader to~\cite[Section~4]{toct/LampisV24}.}
    Let $i,i' \in [k]$.
    We construct the adjacency gadget $\hat{A}(i, i, i', i')$ as follows
    (for an illustration see \Cref{fig:td_lb_adj_gadget}).
    \begin{itemize}
        \item We introduce instances $\mathcal{I}_1$ and $\mathcal{I}_2$ of the choice gadgets $\hat{C}_{i}$ and $\hat{C}_{i'}$ respectively.

        \item We introduce a marked vertex $\hat{x}_{i,i'}$, as well as a vertex $v_e$ for every edge $e \in E^{i,i'}$.
        We add all edges $\{ \hat{x}_{i,i'}, v_e \}$.
        As for the capacities of the introduced vertices,
        we set $\capacity(\hat{x}_{i,i'}) = \deg_H (\hat{x}_{i,i'}) - 1$
        as well as $\capacity(v_e) = \deg_H (v_e)$ for all vertices $v_e$.

        \item We introduce marked vertices $\hat{\alpha}_{i,i'}, \hat{\beta}_{i,i'}, \hat{\kappa}_{i,i'}, \hat{\lambda}_{i,i'}$,
        which we refer to as \emph{validation vertices}.
        For all $\hat{z}_{i,i'} \in \{ \hat{\alpha}_{i,i'}, \hat{\beta}_{i,i'}, \hat{\kappa}_{i,i'}, \hat{\lambda}_{i,i'} \}$,
        we set $\capacity(\hat{z}_{i,i'}) = \deg_H (\hat{z}_{i,i'}) - n$.

        \item For every vertex $v^i_j$ in $\mathcal{I}_1$, we add a $j$-edge (resp., a $(n-j)$-edge)
        between $v^i_j$ and $\hat{\alpha}_{i,i'}$ (resp., $\hat{\beta}_{i,i'}$).

        \item For every vertex $v^{i'}_{j'}$ in $\mathcal{I}_2$, we add a $j'$-edge (resp., a $(n-j')$-edge)
        between $v^{i'}_{j'}$ and $\hat{\kappa}_{i,i'}$ (resp., $\hat{\lambda}_{i,i'}$).

        \item For $e = \{ v^i_j, v^{i'}_{j'} \} \in E^{i,i'}$,
        we add
        \begin{multicols}{2}
        \begin{itemize}
            \item a $(n-j)$-edge between $v_e$ and $\hat{\alpha}_{i,i'}$,
            \item a $j$-edge between $v_e$ and $\hat{\beta}_{i,i'}$,
            \item a $(n-j')$-edge between $v_e$ and $\hat{\kappa}_{i,i'}$,
            \item a $j'$-edge between $v_e$ and $\hat{\lambda}_{i,i'}$.
        \end{itemize}
        \end{multicols}
    \end{itemize}

    Now let $i_1, i_2, i'_1, i'_2 \in [k]$, where $i_1 < i_2$ and $i'_1 < i'_2$.
    Then, let the adjacency gadget $\hat{A}(i_1,i_2,i'_1,i'_2)$ contain instances of choice gadgets $\hat{C}_{i}$ and $\hat{C}_{i'}$,
    where $i \in [i_1,i_2]$ and $i' \in [i'_1, i'_2]$,
    which we will refer to as the \emph{original choice gadgets} of $\hat{A}(i_1,i_2,i'_1,i'_2)$,
    as well as the adjacency gadgets
    \begin{multicols}{2}
    \begin{itemize}
        \item $\hat{A}\parens*{i_1, \floor*{\frac{i_1+i_2}{2}}, i'_1, \floor*{\frac{i'_1+i'_2}{2}}}$,
        \item $\hat{A}\parens*{i_1, \floor*{\frac{i_1+i_2}{2}}, \ceil*{\frac{i'_1+i'_2}{2}}, i'_2}$,
        \item $\hat{A}\parens*{\ceil*{\frac{i_1+i_2}{2}}, i_2, i'_1, \floor*{\frac{i'_1+i'_2}{2}}}$,
        \item $\hat{A}\parens*{\ceil*{\frac{i_1+i_2}{2}}, i_2, \ceil*{\frac{i'_1+i'_2}{2}}, i'_2}$.
    \end{itemize}
    \end{multicols}
    Lastly, let $\mathcal{I}$ denote the original choice gadget $\hat{C}_p$, where $p \in [i_1, i_2] \cup [i'_1, i'_2]$.
    Notice that there are exactly two instances of the choice gadget $\hat{C}_p$ appearing as original choice gadgets in the adjacency gadgets just introduced,
    say instances $\mathcal{I}_1$ and $\mathcal{I}_2$.
    Add copy gadgets $(\mathcal{I}, \mathcal{I}_1)$ and $(\mathcal{I}, \mathcal{I}_2)$.

    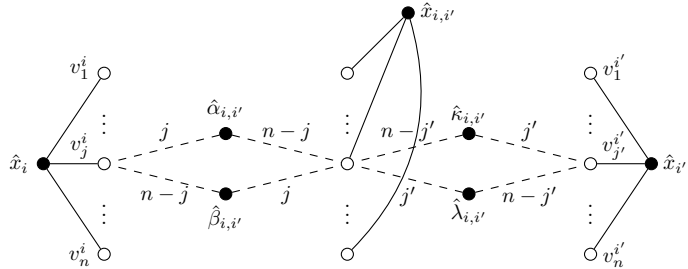
\begin{figure}[ht]
    \centering
    \begin{tikzpicture}[scale=0.8, transform shape]
        %%%%%%%%%%%%%%%%%%%%%%%%%%%%%%%% I_1
        \node[vertex] (v11) at (5,6) {};
        \node[] () at (4.6,6.1) {$v^i_1$};

        \node[] () at (5,5.26) {$\vdots$};

        \node[black_vertex] (x1) at (4,4.5) {};
        \node[] () at (3.6,4.5) {$\hat{x}_i$};

        \node[vertex] (v12) at (5,4.5) {};
        \node[] () at (4.6,4.8) {$v^i_j$};

        \node[] () at (5,3.76) {$\vdots$};

        \node[vertex] (v1n) at (5,3) {};
        \node[] () at (4.6,3) {$v^i_n$};

        \draw[] (x1)--(v11);
        \draw[] (x1)--(v12);
        \draw[] (x1)--(v1n);

        \node[black_vertex] (a) at (7,5) {};
        \node[] () at (7,5.4) {$\hat{\alpha}_{i,i'}$};
        \node[black_vertex] (b) at (7,4) {};
        \node[] () at (7,3.6) {$\hat{\beta}_{i,i'}$};

        \draw[dashed] (a)--(v12) node[midway, above] {$j$};
        \draw[dashed] (b)--(v12) node[midway, below] {$n-j$};

        \begin{scope}[shift={(4,0)}]
            \node[black_vertex] (k) at (7,5) {};
            \node[] () at (7,5.3) {$\hat{\kappa}_{i,i'}$};
            \node[black_vertex] (l) at (7,4) {};
            \node[] () at (7,3.7) {$\hat{\lambda}_{i,i'}$};
        \end{scope}

        \begin{scope}[shift={(8,0)}]
            \node[vertex] (v21) at (5,6) {};
            \node[] () at (5.4,6.1) {$v^{i'}_1$};

            \node[] () at (5,5.26) {$\vdots$};

            \node[black_vertex] (x2) at (6,4.5) {};
            \node[] () at (6.4,4.5) {$\hat{x}_{i'}$};

            \node[vertex] (v22) at (5,4.5) {};
            \node[] () at (5.4,4.8) {$v^{i'}_{j'}$};

            \node[] () at (5,3.76) {$\vdots$};

            \node[vertex] (v2n) at (5,3) {};
            \node[] () at (5.4,3) {$v^{i'}_n$};

            \draw[] (x2)--(v21);
            \draw[] (x2)--(v22);
            \draw[] (x2)--(v2n);

            \draw[dashed] (k)--(v22) node[midway, above] {$j'$};
            \draw[dashed] (l)--(v22) node[midway, below] {$n-j'$};
        \end{scope}

        \begin{scope}[shift={(2,0)}]
            \node[vertex] (u1) at (7,6) {};
            \node[] () at (7,5.26) {$\vdots$};
            \node[vertex] (u2) at (7,4.5) {};
            \node[] () at (7,3.76) {$\vdots$};
            \node[vertex] (u3) at (7,3) {};
            \draw[dashed] (a)--(u2) node[midway, above] {$n-j$};
            \draw[dashed] (b)--(u2) node[midway, below] {$j$};
            \draw[dashed] (k)--(u2) node[midway, above] {$n-j'$};
            \draw[dashed] (l)--(u2) node[midway, below] {$j'$};

            \node[] () at (8.5,7) {$\hat{x}_{i,i'}$};
            \node[black_vertex] (xx) at (8,7) {};
            \draw[] (xx)--(u1);
            \draw[] (xx)--(u2);
            \draw[] (xx) edge [bend left] (u3);
        \end{scope}

    \end{tikzpicture}
    \caption{Adjacency gadget $\hat{A}(i,i,i',i')$.
    Marked vertices are in black.
    Dashed edges are used for $A$-edges, with $A$ being written on top of them;
    only those incident to $v^i_j$, $v^{i'}_{j'}$, and $v_e$ are depicted for the sake of readability,
    where $e = \{ v^i_j, v^{i'}_{j'} \} \in E^{i,i'}$.}
    \label{fig:td_lb_adj_gadget}
    \end{figure}

    Let graph $H$ be the adjacency gadget $\hat{A}(1,k,1,k)$ and let $M \subseteq V(H)$ denote the set of its marked vertices.
    Set $k' = k^2 + \gamma + \delta$, where $\gamma = 2k (2k - 1)$ and $\delta = |M|$.
    This concludes the construction of the instance $(H, \capacity, k')$.

    \begin{claim}\label{claim:td_lb_helper}
        For all $i,i' \in [k]$, there is exactly one instance of the adjacency gadget $\hat{A}(i,i,i',i')$ present in $H$.
        Moreover, the number of instances of choice gadgets present in $H$ is equal to $\gamma$.
    \end{claim}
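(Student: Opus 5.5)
Both parts of the claim follow by induction on the recursive structure of $\hat{A}(1,k,1,k)$, using $k = 2^z$. The recursion acts on pairs of intervals $([i_1,i_2],[i'_1,i'_2])$, and since $k$ is a power of two every interval produced is dyadic: $[1,k]$ halves exactly into $[1,k/2]$ and $[k/2+1,k]$, and so on. Thus $\lfloor (i_1+i_2)/2 \rfloor$ and $\lceil (i_1+i_2)/2 \rceil$ are consecutive integers that split an interval of length $2^\ell$ into two halves of length $2^{\ell-1}$. The two intervals of any gadget always have the same length, because both start from length $k$ and are halved in lockstep. Hence the gadgets at recursion depth $\ell$ are exactly the $\hat{A}(I,I')$ with $I,I'$ dyadic intervals of length $k/2^\ell$, and there are $4^\ell$ of them. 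The recursion bottoms out at depth $z$, where both intervals are singletons and the base case $\hat{A}(i,i,i',i')$ applies.

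For the first statement, I prove by induction on $\ell$ that each pair $(I,I')$ of dyadic intervals of length $k/2^\ell$ arises exactly once at depth $\ell$. The step is that every such pair has a unique parent pair $(\hat{I},\hat{I}')$, namely the dyadic intervals of twice the length containing $I$ and $I'$. It is produced by exactly one of the four children of that parent, since the four children are indexed by (left or right half)$\times$(left or right half). At $\ell = z$ this says that $\hat{A}(i,i,i',i')$ occurs exactly once for every $i,i' \in [k]$, including $i = i'$, which is permitted because $G$ has self loops.

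For the count of choice gadgets, I would check how the definition creates instances. A gadget at depth $\ell < z$ with intervals of length $m = k/2^\ell$ introduces $2m$ original choice gadgets, one $\hat{C}_p$ for each $p \in I$ and one for each $p \in I'$. The base gadgets at depth $z$ introduce two each, which matches $2m$ with $m = 1$. The copy gadgets create no new instances; they only connect existing ones. The total is therefore
\[
  \sum_{\ell=0}^{z} 4^\ell \cdot 2 \cdot \frac{k}{2^\ell} = 2k \sum_{\ell=0}^{z} 2^\ell = 2k(2^{z+1}-1) = 2k(2k-1) = \gamma .
\]

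The main obstacle is interpretive rather than technical. The definition lists $\hat{C}_i$ for $i \in [i_1,i_2]$ and $\hat{C}_{i'}$ for $i' \in [i'_1,i'_2]$ as separate instances even when the two intervals overlap, for example at the root, where both are $[1,k]$. I must read this as giving two distinct instances when $p$ lies in both intervals, which is consistent with the copy gadget description ("exactly two instances \ldots in the adjacency gadgets just introduced"), since each $p$ appears in exactly two of the four children. I also need the dyadic halving for floors and ceilings to be exact, which holds because $i_1 - 1$ is divisible by the interval length.
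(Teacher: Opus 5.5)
Your proposal is correct and follows essentially the same argument as the paper. The paper also uses the power-of-two interval lengths to get a unique root-to-leaf path in the recursion tree for each pair $(i,i')$. It then obtains the same count $\sum_{r} 4^r \cdot 2k/2^r = 2k(2k-1)$ by unrolling $T(\ell)=2\ell+4T(\ell/2)$, which is your level-by-level sum, and it reads overlapping intervals as giving $2\ell$ separate original choice gadgets exactly as you do.
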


    \begin{claimproof}
        First observe that every adjacency gadget $\hat{A}(i_1,i_2,i'_1,i'_2)$ appearing in $H$
        satisfies
        \[
            i_2-i_1+1 = i'_2-i'_1+1 = 2^c
        \]
        for some non-negative integer $c$.
        This is true for the root gadget $\hat{A}(1,k,1,k)$ since $k$ is a power of two,
        and it is preserved by the recursive construction because each interval is always split exactly in half.
        Hence every ordered pair $(i,i') \in [k]^2$ determines a unique root-to-leaf path in the recursion tree,
        and therefore a unique leaf gadget $\hat{A}(i,i,i',i')$.

        For the number of choice gadget instances, let $T(\ell)$ denote the number of instances appearing in
        $\hat{A}(i_1,i_2,i'_1,i'_2)$ when both intervals have size $\ell$.
        Then $T(1)=2$ and, for $\ell>1$,
        \[
            T(\ell)=2\ell+4T(\ell/2),
        \]
        because the gadget contains $2\ell$ original choice gadgets and four recursive subgadgets.
        Consequently,
        \[
            T(k) = \sum_{r=0}^{\log k} 4^r \cdot 2\frac{k}{2^r}
            = 2k \sum_{r=0}^{\log k} 2^r
            = 2k(2k-1)
            = \gamma,
        \]
        and the statement follows.
    \end{claimproof}

    \begin{claim}\label{claim:td_lb_td}
        It holds that $\td(H) = \bO(k)$.
    \end{claim}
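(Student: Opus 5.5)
The plan is to use the recursive definition of tree-depth: $\td(X) \le 1 + \td(X - v)$ for any vertex $v$, and the tree-depth of a disconnected graph is the maximum over its components. Let $D(\ell)$ bound the tree-depth of an adjacency gadget $\hat{A}(i_1,i_2,i'_1,i'_2)$ with both intervals of size $\ell$. Here the gadget is taken together with everything hanging off it (leaves, $A$-edge middle vertices, copy gadgets to its original choice gadgets), but we treat the original choice-gadget vertices of the parent as already removed. The target is a recurrence $D(\ell) \le D(\ell/2) + \bO(\ell)$, which solves to $D(k) = \bO(k)$ since $\sum_r k/2^r = \bO(k)$.

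\textbf{Leaf case $\ell = 1$.} Consider $\hat{A}(i,i,i',i')$ and delete the constant number of marked high-degree vertices $\hat{x}_i,\hat{x}_{i'},\hat{x}_{i,i'},\hat{\alpha},\hat{\beta},\hat{\kappa},\hat{\lambda}$, together with any copy vertices $\hat{g}_1,\hat{g}_2$ attached to its choice gadgets. Every remaining component is then small:
\begin{itemize}
\item a vertex $v^i_j$ or $v_e$ together with its $A$-edge middle vertices, which are marked and carry leaves, so this component is a tree of depth at most $3$ rooted at $v^i_j$ or $v_e$;
\item or an isolated leaf.
\end{itemize}
Hence $D(1) = \bO(1)$.

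\textbf{Recursive step.} For $\hat{A}(i_1,i_2,i'_1,i'_2)$ with interval size $\ell$, delete:
\begin{itemize}
\item the marked vertices $\hat{x}_p$ of its $2\ell$ original choice gadgets;
\item the two copy vertices of each of the $4\ell$ copy gadgets linking these originals to the original choice gadgets of the four children.
\end{itemize}
This removes $\bO(\ell)$ vertices. What remains falls into two kinds of components. First, pieces attached to the $v^p_j$ of the current originals; each is a star-like tree of constant depth, since the $A$-edge middle vertices only lead to the deleted copy vertices. Second, the four child gadgets. These are now pairwise disconnected, because the only connections between children and the rest of the gadget ran through the original choice gadgets and copy vertices at this level. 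So the maximum over components gives $D(\ell) \le \bO(\ell) + \max\{\bO(1), D(\ell/2)\}$.

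\textbf{Main obstacle.} The delicate point is verifying that, after these deletions, nothing outside the child subtrees still links the four children. Concretely, I would check that a copy gadget between the parent's instance $\mathcal{I}$ and a child's instance $\mathcal{I}_1$ touches only $\hat{g}_1$, $\hat{g}_2$ and their $A$-edge middle vertices. Each middle vertex has exactly two non-leaf neighbors, a $v$-vertex and a $\hat{g}$. So once the $\hat{g}$'s are removed, each middle vertex hangs off a single $v$-vertex.

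I would also keep the child's original choice-gadget vertices $v^p_j$ inside the child's component; they are handled at the child's level, where their $\hat{x}_p$ and outgoing copy vertices get deleted. With this bookkeeping each level contributes $\bO(\ell)$ deletions, and the geometric sum yields $\td(H) = \bO(k)$.
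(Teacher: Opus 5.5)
Your argument is correct and is essentially the paper's proof: both obtain a recurrence of the form $D(\ell) \le \bO(\ell) + D(\ell/2)$ by deleting, at each level, the copy vertices joining the $2\ell$ original choice gadgets to those of the four children, with a constant-tree-depth base case for the leaf gadgets $\hat{A}(i,i,i',i')$. The paper deletes exactly these $8\ell$ copy vertices, and in the base case only the four validation vertices. Your additional deletions of the $\hat{x}$ vertices are unnecessary but harmless, because once the copy vertices on both sides are removed, what remains of each choice gadget is already a tree of bounded depth.
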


    \begin{claimproof}
        Let $T(\ell)$ denote the maximum tree-depth of an adjacency gadget
        $\hat{A}(i_1,i_2,i'_1,i'_2)$ where both intervals have size $\ell$.
        By \Cref{claim:td_lb_helper}, only powers of two occur.

        For the base case $\ell=1$, the gadget is $\hat{A}(i,i,i',i')$.
        After removing the four validation vertices
        $\hat{\alpha}_{i,i'}$, $\hat{\beta}_{i,i'}$, $\hat{\kappa}_{i,i'}$, and $\hat{\lambda}_{i,i'}$,
        every remaining connected component has constant tree-depth,
        hence $T(1)=\bO(1)$.

        Consider now a gadget $\hat{A}(i_1,i_2,i'_1,i'_2)$ with interval size $\ell>1$.
        It contains the four recursive adjacency gadgets
        \begin{multicols}{2}
            \begin{itemize}
                \item $\hat{A}\parens*{i_1, \floor*{\frac{i_1+i_2}{2}}, i'_1, \floor*{\frac{i'_1+i'_2}{2}}}$,
                \item $\hat{A}\parens*{i_1, \floor*{\frac{i_1+i_2}{2}}, \ceil*{\frac{i'_1+i'_2}{2}}, i'_2}$,
                \item $\hat{A}\parens*{\ceil*{\frac{i_1+i_2}{2}}, i_2, i'_1, \floor*{\frac{i'_1+i'_2}{2}}}$,
                \item $\hat{A}\parens*{\ceil*{\frac{i_1+i_2}{2}}, i_2, \ceil*{\frac{i'_1+i'_2}{2}}, i'_2}$,
            \end{itemize}
        \end{multicols}
        and exactly $2\ell$ original choice gadgets.
        Each original choice gadget is linked to the recursive part only through two copy gadgets,
        that is, through four copy vertices.
        Removing all these copy vertices disconnects the gadget into the four recursive adjacency gadgets
        and a collection of components of constant tree-depth.
        Since we remove $8\ell$ vertices, we obtain
        \[
            T(\ell) \leq 8\ell + T(\ell/2).
        \]
        Consequently,
        \[
            T(k) \le 8 \sum_{r=0}^{\log k} \frac{k}{2^r} = \bO(k),
        \]
        and the statement follows.
    \end{claimproof}

    \begin{claim}\label{claim:td_lb_cor1}
        If $G$ contains a $k$-clique,
        then $(H, \capacity, k')$ is a Yes instance of \CVC.
    \end{claim}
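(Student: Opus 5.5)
We exhibit an explicit capacitated vertex cover of size exactly $k'=k^2+\gamma+\delta$ together with a feasible orientation. Let $\{v^1_{j_1},\ldots,v^k_{j_k}\}$ be a $k$-clique of $G$ with $v^i_{j_i}\in V_i$. The set $S$ will consist of three parts.
\begin{itemize}
\item All $\delta=|M|$ marked vertices. This includes the middle vertices of every $A$-edge, the copy vertices, the validation vertices, and the vertices $\hat{x}_i$, $\hat{x}_{i,i'}$.
\item In every instance of a choice gadget $\hat{C}_i$, the vertex $v^i_{j_i}$. By \Cref{claim:td_lb_helper} there are $\gamma$ such instances.
\item In every leaf gadget $\hat{A}(i,i,i',i')$, the vertex $v_e$ with $e=\{v^i_{j_i},v^{i'}_{j_{i'}}\}$. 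This edge exists because of the clique, and for $i=i'$ it is the self loop. By \Cref{claim:td_lb_helper} there are $k^2$ such gadgets.
\end{itemize}
Hence $|S|=k'$. The rest of the proof is to define the orientation and check every capacity.

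The orientation follows a uniform rule.
\begin{itemize}
\item Every leaf edge is oriented towards its marked vertex.
\item Every selected unmarked vertex ($v^i_{j}$ or $v_e$) has capacity equal to its degree, so it takes all of its incident edges.
\item Every edge from $\hat{x}_i$ or $\hat{x}_{i,i'}$ to an unselected neighbor goes to $\hat{x}$. Exactly one neighbor of $\hat{x}$ is selected, so $\hat{x}$ receives exactly $\deg-1=\capacity(\hat{x})$ edges.
\item For a middle vertex $w$ of an $A$-edge between an unmarked endpoint $y$ and a marked endpoint $z$ (a copy or validation vertex): if $y\in S$, the edge $wy$ goes to $y$ and $w$ takes $wz$. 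If $y\notin S$, then $w$ takes $wy$ and $wz$ is oriented towards $z$.
\end{itemize}
In both cases $w$ receives its $k'+1$ leaves plus exactly one more edge, matching $\capacity(w)=k'+2$.

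It remains to check each marked endpoint $z$ of $A$-edges, whose capacity is $\deg_H(z)-n$. By the rule above, the number of edges at $z$ not oriented towards $z$ equals the total multiplicity $A$ of the $A$-edges joining $z$ to selected vertices. So it suffices to show that this total is exactly $n$ for each such $z$.
\begin{itemize}
\item For a copy gadget $(\mathcal{I}_1,\mathcal{I}_2)$ of $\hat{C}_i$, both instances select $v^i_{j_i}$. Then $\hat{g}_1$ collects $j_i+(n-j_i)=n$, and likewise for $\hat{g}_2$.
\item In a leaf gadget, $\hat{\alpha}_{i,i'}$ collects $j_i$ from the choice instance and $n-j_i$ from $v_e$. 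Symmetrically, $\hat{\beta}_{i,i'}$ collects $(n-j_i)+j_i$, while $\hat{\kappa}_{i,i'}$ and $\hat{\lambda}_{i,i'}$ use $j_{i'}$ in the same way.
\end{itemize}
Thus each of these vertices receives exactly $\deg-n$ edges, so every capacity holds with equality and all edges are covered.

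The main obstacle is bookkeeping rather than any single idea. One must consistently handle the middle vertices, which have capacity $\deg-1$ and must offload exactly one edge. The key observation is that offloading towards the selected side (or, failing that, towards the marked side) makes the load on each copy or validation vertex equal to a sum of $A$-values that the complementary $j$ and $n-j$ weights force to be exactly $n$.
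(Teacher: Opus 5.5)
Your proposal is correct and follows the paper's proof: the same set $S$ (all marked vertices, $v^i_{j_i}$ in every instance of $\hat{C}_i$, and $v_e$ for the clique edge in each leaf gadget) with the same orientation, where middle vertices adjacent to a selected vertex send their one extra edge to the copy or validation vertex. Your count that each such vertex loses exactly $j_i+(n-j_i)=n$ edges is a more explicit version of the paper's argument with the set $Z$.
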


    \begin{claimproof}
        Consider $s \colon [k] \to [n]$ such that $\mathcal{V} = \setdef{v^i_{s(i)}}{i \in [k]} \subseteq V(G)$ is a $k$-clique of $G$.
        Let $S \subseteq V(H)$ be the subset of vertices of $H$ such that
        \begin{itemize}
            \item $S$ contains all vertices of $M$,

            \item for all $i \in [k]$, $S$ contains the vertex $v^i_{s(i)}$ of every instance of the choice gadget $\hat{C}_i$,

            \item for every ordered pair $(i,i') \in [k]^2$, if $e_{i,i'}$ denotes the edge
            $\{v^i_{s(i)}, v^{i'}_{s(i')}\}$ of $G$, then $S$ contains the vertex $v_{e_{i,i'}}$
            of the unique leaf gadget $\hat{A}(i,i,i',i')$.
        \end{itemize}
        By \Cref{claim:td_lb_helper} it follows that $|S| = \delta + \gamma + k^2 = k'$.
        It suffices to argue that $S$ is a capacitated vertex cover of $H$.

        First, notice that every edge incident to vertices $\hat{x}_i$ and $\hat{x}_{i,i'}$ is indeed covered:
        the corresponding marked vertex covers all but one of its incident edges, and $S$ contains one of its neighbors
        whose capacity is equal to its degree.

        Next we argue about edges present in choice gadgets.
        Let $\mathcal{I}_1$ and $\mathcal{I}_2$ denote two instances of the choice gadget $\hat{C}_i$,
        connected by a copy gadget which contains vertices $\hat{g}_1$ and $\hat{g}_2$ (see \Cref{fig:td_lb_copy_choice_gadget}).
        Recall that $\hat{g}_1$ is connected via a $j$-edge (resp., $(n-j)$-edge) with vertex $v^i_j$ of $\mathcal{I}_1$ (resp., of $\mathcal{I}_2$).
        Let $Z \subseteq M$ denote the set of the $n$ marked vertices that are neighbors of $\hat{g}_1$ and are introduced
        due to the $j$-edge and the $(n-j)$-edge
        that connects $\hat{g}_1$ with the vertex $v^i_{s(i)}$ of instances $\mathcal{I}_1$ and $\mathcal{I}_2$ respectively.
        Let $\hat{g}_1$ cover all of its incident edges apart from those connecting it with vertices in $Z$;
        in that case, the vertices of $N_H(\hat{g}_1) \setminus Z$ can cover the edge that connects them with vertices in the respective choice gadget.
        As for the edges between $v^i_{s(i)}$ and $Z$, recall that $v^i_{s(i)} \in S$ with $\capacity(v^i_{s(i)}) = \deg_H (v^i_{s(i)})$.
        Similarly, the edges concerning $\hat{g}_2$ or its incident vertices are also covered.
        Lastly, in an analogous way one can argue that all edges present in an adjacency gadget $\hat{A}(i,i,i',i')$ are covered.
    \end{claimproof}

    \begin{claim}\label{claim:td_lb_cor2}
        If $(H, \capacity, k')$ is a Yes instance of \CVC,
        then $G$ contains a $k$-clique.
    \end{claim}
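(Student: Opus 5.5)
Let $S$ be a capacitated vertex cover of $H$ with $|S|\le k'$. The plan is a counting argument that forces $S$ to consist of $M$ plus exactly one non-marked vertex per choice gadget instance and per leaf gadget. After that, the validation and copy vertices will force all copies to agree and every chosen pair to be adjacent in $G$.

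\emph{Step 1 (budget).} Every marked vertex has $k'+1$ pendant leaves, so $M\subseteq S$; this uses $\delta$ of the budget. Every $\hat{x}_i$ (one per choice gadget instance, $\gamma$ in total) and every $\hat{x}_{i,i'}$ ($k^2$ in total, one per leaf gadget by \Cref{claim:td_lb_helper}) has capacity $\deg-1$. Hence each needs at least one neighbor in $S$ other than its leaves, and these neighborhoods $\{v^i_j\}_j$ and $\{v_e\}_e$ are pairwise disjoint. As in the proof of \Cref{thm:k:lb}, a selected leaf can be swapped for a non-leaf neighbor without loss. Since $k'=\delta+\gamma+k^2$, we get exactly one selected $v^i_j$ per choice instance and exactly one selected $v_e$ per leaf gadget. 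Moreover, no other unmarked vertex is in $S$. In particular, the independent-set vertices of $A$-edges are marked, so they lie in $S$. Each such vertex has capacity $(k'+1)+1$ and degree $k'+3$, so exactly one of its two non-leaf edges must be covered by the other endpoint.

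\emph{Step 2 (threshold vertices).} Consider a vertex $\hat z$ of capacity $\deg-n$ among $\hat{g}_1,\hat{g}_2,\hat\alpha,\hat\beta,\hat\kappa,\hat\lambda$. At least $n$ of its edges must be covered by its neighbors. Its neighbors are leaves (not in $S$) and $A$-edge middle vertices. Such a middle vertex $w$ between $\hat z$ and $u$ can cover $w\hat z$ only if $wu$ is covered by $u$, i.e. $u\in S$. Therefore the number of edges of $\hat z$ covered externally is at most the sum of the $A$-values of its $A$-edges to selected vertices. For a copy gadget with selections $v^i_a$ in $\mathcal{I}_1$ and $v^i_b$ in $\mathcal{I}_2$, this yields $a+(n-b)\ge n$ from $\hat g_1$ and $(n-a)+b\ge n$ from $\hat g_2$, hence $a=b$. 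By induction over the recursion tree, all instances of $\hat C_i$ select the same index $s(i)$.

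\emph{Step 3 (leaf gadgets).} In $\hat A(i,i,i',i')$, let the selected choices be $s(i),s(i')$ and the selected edge vertex be $v_e$ with $e=\{v^i_j,v^{i'}_{j'}\}$. Then $\hat\alpha,\hat\beta$ give $s(i)+(n-j)\ge n$ and $(n-s(i))+j\ge n$, so $j=s(i)$. Similarly $\hat\kappa,\hat\lambda$ give $j'=s(i')$. Hence $v^i_{s(i)}v^{i'}_{s(i')}\in E(G)$ for all $i,i'$, and $\{v^i_{s(i)}\}$ is a $k$-clique; the self loops handle $i=i'$.

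The main obstacle is Step 2. There one must check carefully that the only edges of $\hat z$ coverable from outside are those through selected vertices, which relies on Step 1 excluding all other unmarked vertices from $S$. The capacity $k'+2$ of the middle vertices makes this work: they cover their leaves plus exactly one more edge, so they cannot relieve both endpoints. Combining this with the tree-depth bound of \Cref{claim:td_lb_td}, the $n$-vertex choice gadgets, and the $f(k)n^{o(k)}$ lower bound for \kMC{} will give the theorem.
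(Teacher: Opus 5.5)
Your proposal is correct and follows essentially the same route as the paper. The budget $k'=\delta+\gamma+k^2$ pins $S$ down to $M$ plus one unmarked neighbor of each $\hat{x}_i$ and $\hat{x}_{i,i'}$. The copy vertices $\hat{g}_1,\hat{g}_2$ force equal indices across every copy gadget, and the validation vertices force the selected $v_e$ to join the chosen vertices. Your direct bound on externally covered edges replaces the paper's $Z_1,Z_2$ count, and your leaf swap makes explicit an assumption the paper leaves implicit.

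One step that both you and the paper gloss over: the root contains two instances of each $\hat{C}_i$, one per interval, and these are not joined by copy gadgets. So the copy gadgets only give two index functions $s_L,s_R$. It is the diagonal gadget $\hat{A}(i,i,i,i)$, where $E^{i,i}$ consists only of self loops, that forces $s_L(i)=s_R(i)$. That is the real role of the self loops in this direction, not covering the trivial case $i=i'$ of the conclusion.
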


    \begin{claimproof}
        Let $S \subseteq V(H)$ be a capacitated vertex cover of $H$ of cardinality $|S| \le k'$.
        We first argue that $M \subseteq S$.
        Indeed, let $v \in M \setminus S$, in which case it follows that $N_H(v) \subseteq S$,
        implying that $|S| \ge k'+1$, a contradiction.
        It follows that $S$ contains $\delta$ marked vertices.

        Let $\hat{X}$ be the set containing
        \begin{itemize}
            \item the marked vertex of every instance of a choice gadget, and
            \item the marked vertex $\hat{x}_{i,i'}$ of every leaf gadget $\hat{A}(i,i,i',i')$.
        \end{itemize}
        Notice that for all $\hat{x} \in \hat{X}$ it holds that $\capacity(\hat{x}) = \deg_H (\hat{x}) - 1$,
        while the neighborhoods of any distinct $\hat{x}, \hat{x}' \in \hat{X}$ are disjoint and composed of non-marked vertices.
        Consequently, it follows that $|S \cap N_H(\hat{x})| \ge 1$, for a total of at least $\gamma + k^2$ vertices due to \Cref{claim:td_lb_helper}.
        Since $|S| \le k' = \delta + \gamma + k^2$, and due to the previous paragraph,
        it follows that $|S \cap N_H(\hat{x})| = 1$, for all $\hat{x} \in \hat{X}$.

        Now we show that there exists a function $s \colon [k] \to [n]$ such that,
        for $i \in [k]$, for every instance of the choice gadget $\hat{C}_i$
        it holds that $v^i_{s(i)} \in S$.
        Assume there exists a copy gadget $(\mathcal{I}_1, \mathcal{I}_2)$,
        where $\mathcal{I}_1$ and $\mathcal{I}_2$ are instances of $\hat{C}_i$
        and $\hat{g}_1, \hat{g}_2$ denote the marked vertices introduced due to the copy gadget.
        Assume for the sake of contradiction that there exist vertices $v^i_{j_1}$ in $\mathcal{I}_1$
        and $v^i_{j_2}$ in $\mathcal{I}_2$ with $v^i_{j_1}, v^i_{j_2} \in S$, and no other vertex of those choice gadgets belongs to $S$.
        Let $Z_1$ (resp., $Z_2$) denote the set of neighbors of $\hat{g}_1$ (resp., $\hat{g}_2$) that are neighboring
        \begin{itemize}
            \item either the vertices of $\mathcal{I}_1$ except from $v^i_{j_1}$,
            \item or the vertices of $\mathcal{I}_2$ except from $v^i_{j_2}$.
        \end{itemize}
        It follows that
        \[
            |Z_1| = 2\sum_{w=1}^n w - (j_1 + n - j_2) \qquad
            \text{and} \qquad
            |Z_2| = 2\sum_{w=1}^n w - (n - j_1 + j_2).
        \]
        For every vertex $z \in Z_1 \cup Z_2$ it holds that its neighbor in the corresponding choice gadget does not belong to $S$,
        thus $z$ has to use its capacity in order to cover (i) its attached leaves, and (ii) the edge that it has with the vertex in the choice gadget.
        Consequently, it follows that $\hat{g}_1$ (resp., $\hat{g}_2$) has to use its capacity in order to cover all its incident edges with vertices
        in $Z_1$ (resp., $Z_2$), as well as its edges towards its leaves.
        It follows that
        \begin{itemize}
            \item $\capacity(\hat{g}_1) \ge |Z_1| + k'+1 \iff n \le j_1 + n - j_2 \iff j_2 \le j_1$, and
            \item $\capacity(\hat{g}_2) \ge |Z_2| + k'+1 \iff n \le n - j_1 + j_2 \iff j_1 \le j_2$.
        \end{itemize}
        If $j_1 \neq j_2$ at least one of the previous two inequalities cannot hold, leading to a contradiction.
        Hence any two choice gadget instances of type $\hat{C}_i$ that are connected by a copy gadget select the same index.
        Since every instance of $\hat{C}_i$ is connected to the original copy of $\hat{C}_i$ in the root gadget by a path of copy gadgets,
        there exists a well-defined function $s \colon [k] \to [n]$ such that,
        in every instance of $\hat{C}_i$, the unique selected vertex is $v^i_{s(i)}$.

        Let $\mathcal{V} = \setdef{v^i_{s(i)}}{i \in [k]} \subseteq V(G)$,
        where $|\mathcal{V} \cap V_i| = 1$, for all $i \in [k]$.
        We will prove that $\mathcal{V}$ is a clique.
        Let $i, i' \in [k]$ and consider the adjacency gadget $\hat{A}(i,i,i',i')$.
        Analogously to the previous paragraph, one can argue that the unique vertex $v_e \in S \cap N_H(\hat{x}_{i,i'})$
        must be connected via
        (i) a $(n-s(i))$-edge with $\hat{\alpha}_{i,i'}$,
        (ii) a $s(i)$-edge with $\hat{\beta}_{i,i'}$,
        (iii) a $(n-s(i'))$-edge with $\hat{\kappa}_{i,i'}$,
        and (iv) a $s(i')$-edge with $\hat{\lambda}_{i,i'}$.
        By construction of the leaf gadget, this is possible only when
        $e = \{ v^i_{s(i)}, v^{i'}_{s(i')} \} \in E(G)$.
        Since this holds for any two vertices belonging to~$\mathcal{V}$, it follows that $G$ has a $k$-clique.
    \end{claimproof}
    Therefore, in polynomial time we can construct a capacitated graph $H$ of tree-depth $\td = \bO(k)$ due to \Cref{claim:td_lb_td},
    such that, due to \Cref{claim:td_lb_cor1,claim:td_lb_cor2},
    deciding whether $H$ has a capacitated vertex cover of size at most $k'$
    is equivalent to deciding whether $G$ has a $k$-clique.
\end{proof}

\section{Clique-width}\label{sec:cw}

Our main result in this section is that {\CVC} is NP-complete even when restricted to graphs of bounded clique-width.
Before presenting the proof, let us briefly give some intuition.
A natural approach for obtaining a polynomial-time algorithm on graphs of constant clique-width would be dynamic
programming over label classes, storing for each class some amount of information bounded by a polynomial of $n$.
For instance, one could try to remember, for each label class, the \emph{aggregate} remaining capacity of the selected vertices in a class.
Unfortunately, respecting the aggregate capacity of a label class is necessary but not sufficient for feasibility:
one must ensure that the capacity of each \emph{individual} vertex is respected.
This is precisely the feature exploited by our reduction,
where we use vertex capacities to encode the incidence of literals in clauses of a \OneInThreeSAT\ instance.

We remark that the phenomenon whereby a problem that is W[1]-hard (but in XP)
parameterized by treewidth becomes paraNP-complete parameterized by clique-width has been observed before.
Similar behavior is exhibited, for example, by \textsc{Grundy Coloring}~\cite{siamdm/BelmonteKLMO22},
\textsc{List Coloring}~\cite{JansenS97}, and \textsc{Defective Coloring}~\cite{BelmonteLM22}.
Our reduction is somewhat similar in spirit to the one of~\cite{siamdm/BelmonteKLMO22},
as we also reduce from a variant of \textsc{SAT} and use extra problem-specific
information to encode literal-clause incidences while representing most of the instance by essentially a large complete bipartite graph.
However, our reduction is technically more involved, since encoding this information through capacities
is subtler than doing so through colors.

\begin{theoremrep}[\appsymb]\label{thm:cw:np-hard}
    {\CVC} is NP-hard even for graphs of linear clique-width at most $6$.
\end{theoremrep}

\begin{proof}
    We reduce from \OneInThreeSAT, which is NP-complete~\cite[LO4]{books/GareyJ79}.
    Recall that an instance of {\OneInThreeSAT} is a CNF formula in which every clause has exactly three literals,
    and the question is whether there exists a truth assignment that makes exactly one literal true in every clause.

    Let $\psi$ be an instance of {\OneInThreeSAT} with variables $X=\{x_1,\ldots,x_n\}$ and clauses $c_1,\ldots,c_m$.
    For each $j \in [m]$, write $c_j=(\ell_{j,i_1}\lor \ell_{j,i_2}\lor \ell_{j,i_3})$,
    where $\ell_{j,i}$ is either $x_{i}$ or $\neg x_{i}$.

    We construct an equivalent instance $\mathcal I=(G,\capacity,k)$ of \CVC.
    Throughout the proof, saying that a vertex has \emph{demand} $d$ means that $\deg_G(v)-\capacity(v)=d$.
    Also, when we introduce a \emph{marked} vertex, we additionally attach $k+1$ pendant neighbors to it.
    Consequently, every marked vertex must belong to any solution of size at most $k$.

    We now describe the graph $G$.
    For each clause we create a positive side and a negative side.
    Intuitively, the positive side tracks the unique true literal of the clause,
    while the negative side tracks the two false literals.
    For each variable $x_i$ we create two selector vertices $v_i$ and $\overline{v}_i$,
    corresponding to setting $x_i$ to \true\ and \false, respectively.

    \proofsubparagraph*{Clause gadgets.}
    For each clause $c_j = (\ell_{j,i_1} \lor \ell_{j,i_2} \lor \ell_{j,i_3})$, where $j \in [m]$, we introduce:
    \begin{itemize}
        \item a marked vertex $c^+_j$ of demand $3(j-1)+1$;
        \item three marked vertices $\ell^+_{j,i_1}, \ell^+_{j,i_2}, \ell^+_{j,i_3}$, each of demand $j$;
        \item a marked vertex $c^-_j$ of demand $3(j-1)+2$;
        \item three marked vertices $\ell^-_{j,i_1}, \ell^-_{j,i_2}, \ell^-_{j,i_3}$, each of demand $j$.
    \end{itemize}
    We let
    $C^+ = \setdef{c^+_j}{j \in [m]}$,
    $C^- = \setdef{c^-_j}{j \in [m]}$,
    $L^+$ to be the set of all vertices $\ell^+_{j,i}$ introduced, and
    $L^-$ to be the set of all vertices $\ell^-_{j,i}$ introduced.
    For each $j \in [m]$, let
    $L^+_j = \{ \ell^+_{j,i_1}, \ell^+_{j,i_2}, \ell^+_{j,i_3} \}$ and
    $L^-_j = \{ \ell^-_{j,i_1}, \ell^-_{j,i_2}, \ell^-_{j,i_3} \}$.
    We add all edges between $C^+$ and $L^+$, and all edges between $C^-$ and $L^-$.
    See \cref{fig:cw_clause_gadget} for an illustration of the two sides of the construction on two consecutive clauses.

    \begin{figure}[ht]
        \centering
        \begin{tikzpicture}[scale=0.8, transform shape, decorate]
            % positive side
            \node[black_vertex] (cpj) at (1.5,3.5) {};
            \node[] () at (1.5,4.0) {$c_j^+$};
            \node[draw=black!35, rounded corners, fill=white, inner sep=1.5pt] () at (2.8,4) {\scriptsize $d=3(j-1)+1$};
            \node[black_vertex] (cpjp) at (5.9,3.5) {};
            \node[] () at (5.9,4.0) {$c_{j+1}^+$};
            \node[draw=black!35, rounded corners, fill=white, inner sep=1.5pt] () at (7,4) {\scriptsize $d=3j+1$};

            \node[black_vertex] (lp1) at (0.0,1.15) {};
            \node[black_vertex] (lp2) at (1.5,1.15) {};
            \node[black_vertex] (lp3) at (3.0,1.15) {};

            \node[black_vertex] (lq1) at (4.4,1.15) {};
            \node[black_vertex] (lq2) at (5.9,1.15) {};
            \node[black_vertex] (lq3) at (7.4,1.15) {};

            \foreach \u in {cpj,cpjp}{
                \foreach \v in {lp1,lp2,lp3,lq1,lq2,lq3}{
                    \draw[thin,gray!70] (\u)--(\v);
                }
            }

            % representative intended orientation on positive side
            \draw[->] (cpj) -- (lp2);
            \draw[->] (cpjp) -- (lp1);
            \draw[->] (cpjp) -- (lp2);
            \draw[->] (cpjp) -- (lp3);
            \draw[->] (lq1) -- (cpj);
            \draw[->] (lq2) -- (cpj);
            \draw[->] (lq3) -- (cpj);

            \draw[decorate,decoration={brace,mirror,amplitude=4pt}] (-0.25,0.45) -- (3.25,0.45);
            \node[] () at (1.5,-0.05) {$L_j^+$};
            \node[] () at (1.5,-0.55) {\scriptsize each demand $j$};

            \draw[decorate,decoration={brace,mirror,amplitude=4pt}] (4.15,0.45) -- (7.65,0.45);
            \node[] () at (5.9,-0.05) {$L_{j+1}^+$};
            \node[] () at (5.9,-0.55) {\scriptsize each demand $j+1$};

            \node[] () at (3.7,4.55) {\small positive side};

            % negative side
            \begin{scope}[shift={(9,0)}]
                \node[black_vertex] (cmj) at (1.5,3.5) {};
                \node[] () at (1.5,4.0) {$c_j^-$};
                \node[draw=black!35, rounded corners, fill=white, inner sep=1.5pt] () at (2.8,4) {\scriptsize $d=3(j-1)+2$};
                \node[black_vertex] (cmjp) at (5.9,3.5) {};
                \node[] () at (5.9,4.0) {$c_{j+1}^-$};
                \node[draw=black!35, rounded corners, fill=white, inner sep=1.5pt] () at (7,4) {\scriptsize $d=3j+2$};

                \node[black_vertex] (lm1) at (0.0,1.15) {};
                \node[black_vertex] (lm2) at (1.5,1.15) {};
                \node[black_vertex] (lm3) at (3.0,1.15) {};

                \node[black_vertex] (ln1) at (4.4,1.15) {};
                \node[black_vertex] (ln2) at (5.9,1.15) {};
                \node[black_vertex] (ln3) at (7.4,1.15) {};

                \foreach \u in {cmj,cmjp}{
                    \foreach \v in {lm1,lm2,lm3,ln1,ln2,ln3}{
                        \draw[thin,gray!70] (\u)--(\v);
                    }
                }

                % representative intended orientation on negative side
                \draw[->] (lm2) -- (cmj);
                \draw[->] (cmjp) -- (lm1);
                \draw[->] (cmjp) -- (lm2);
                \draw[->] (cmjp) -- (lm3);
                \draw[->] (ln1) -- (cmj);
                \draw[->] (ln2) -- (cmj);
                \draw[->] (ln3) -- (cmj);

                \draw[decorate,decoration={brace,mirror,amplitude=4pt}] (-0.25,0.45) -- (3.25,0.45);
                \node[] () at (1.5,-0.05) {$L_j^-$};
                \node[] () at (1.5,-0.55) {\scriptsize each demand $j$};

                \draw[decorate,decoration={brace,mirror,amplitude=4pt}] (4.15,0.45) -- (7.65,0.45);
                \node[] () at (5.9,-0.05) {$L_{j+1}^-$};
                \node[] () at (5.9,-0.55) {\scriptsize each demand $j+1$};

                \node[] () at (3.7,4.55) {\small negative side};
            \end{scope}
        \end{tikzpicture}
        \caption{The positive and negative sides on two consecutive clauses.
        On each side, every clause vertex is adjacent to every literal vertex.
        The arrows indicate the intended orientation pattern used in the proof.
        On clause $j$, the figure shows one representative edge directed out of $c_j^+$ and one representative edge directed into $c_j^-$.
        Between consecutive clauses, the edges from $c_{j+1}^{\pm}$ to $L_j^{\pm}$ are oriented away from $c_{j+1}^{\pm}$,
        while the edges from $L_{j+1}^{\pm}$ to $c_j^{\pm}$ are oriented towards $c_j^{\pm}$.
        All depicted vertices are marked, and pendant leaves are omitted.}
        \label{fig:cw_clause_gadget}
    \end{figure}
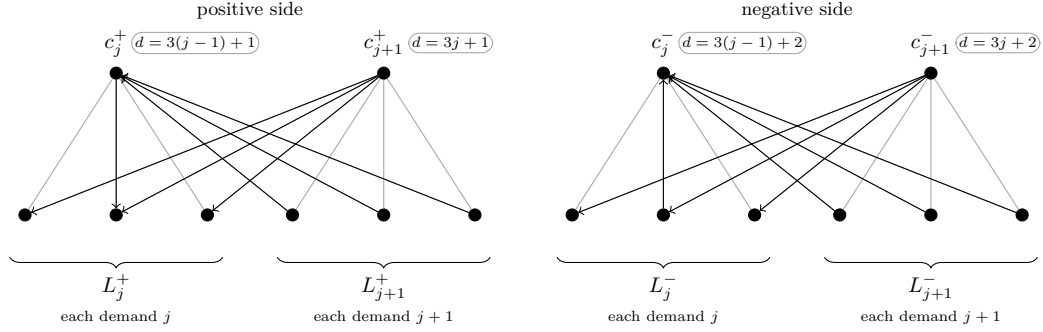

    \proofsubparagraph*{Variable gadgets.}
    For each variable $x_i$, where $i \in [n]$, we introduce two vertices $v_i$ and $\overline{v}_i$,
    both of demand $0$, and connect them by an edge.
    We refer to $v_i$ and $\overline{v}_i$ as the \emph{selector} vertices of $x_i$.

    \proofsubparagraph*{Incidence edges.}
    The selector vertices are connected to literal vertices according to whether they make the corresponding literal true or false.
    More precisely, for each variable $x_i$, we add:
    \begin{itemize}
        \item edges from $v_i$ to every vertex $\ell^+_{j,i}$ such that $\ell_{j,i}=x_i$;
        \item edges from $v_i$ to every vertex $\ell^-_{j,i}$ such that $\ell_{j,i}=\neg x_i$;
        \item edges from $\overline{v}_i$ to every vertex $\ell^+_{j,i}$ such that $\ell_{j,i}=\neg x_i$;
        \item edges from $\overline{v}_i$ to every vertex $\ell^-_{j,i}$ such that $\ell_{j,i}=x_i$.
    \end{itemize}
    Thus, a selector vertex is adjacent on the positive side exactly to the literal occurrences it satisfies,
    and on the negative side exactly to the literal occurrences it falsifies.
    This incidence pattern is depicted in \cref{fig:cw_incidence}.

    \begin{figure}[ht]
        \centering
        \begin{tikzpicture}[scale=1, transform shape, decorate]
            \node[vertex,label=left:$v_i$] (v) at (1.4,2.35) {};
            \node[vertex,label=left:$\overline{v}_i$] (vb) at (1.4,0.95) {};
            \draw (v)--(vb);

            \node[black_vertex,label={[yshift=0.1cm]above:{\scriptsize $\ell^+_{j,i}$}}] (pp) at (5.2,2.9) {};
            \node[black_vertex,label={[yshift=0.1cm]above:{\scriptsize $\ell^+_{j',i}$}}] (pn) at (8.5,2.9) {};

            \node[black_vertex,label={[yshift=0.1cm]above:{\scriptsize $\ell^-_{j,i}$}}] (np) at (5.2,0.45) {};
            \node[black_vertex,label={[yshift=0.1cm]above:{\scriptsize $\ell^-_{j',i}$}}] (nn) at (8.5,0.45) {};

            \draw (v)--(pp);
            \draw (v)--(nn);
            \draw (vb)--(pn);
            \draw (vb)--(np);

            \node[draw=black!40, rounded corners, inner sep=5pt, fit=(pp)(pn)] {};
            \node[] () at (6.85,4.02) {\small positive copies};

            \node[draw=black!40, rounded corners, inner sep=5pt, fit=(np)(nn)] {};
            \node[] () at (6.85,1.58) {\small negative copies};

            \node[] () at (5.2,-0.65) {$\ell_{j,i}=x_i$};
            \node[] () at (8.5,-0.65) {$\ell_{j',i}=\neg x_i$};
        \end{tikzpicture}
        \caption{Incidence pattern for a variable $x_i$.
        The left column corresponds to an occurrence with $\ell_{j,i}=x_i$, and the right column to an occurrence with $\ell_{j',i}=\neg x_i$.
        Hence $v_i$ is adjacent to $\ell^+_{j,i}$ and $\ell^-_{j',i}$, while $\overline{v}_i$ is adjacent to $\ell^-_{j,i}$ and $\ell^+_{j',i}$.
        Pendant leaves of marked vertices are omitted.}
        \label{fig:cw_incidence}
    \end{figure}
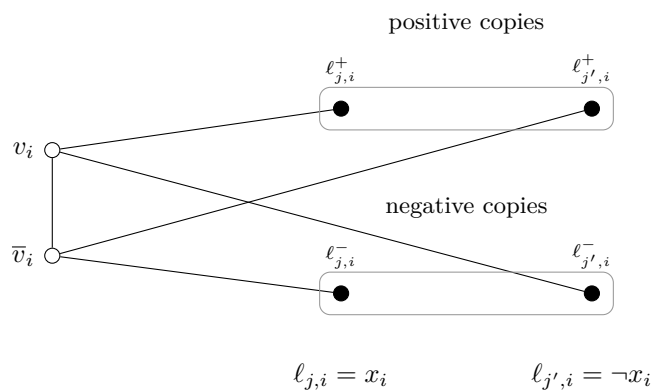

    Finally, we set $k = 8m + n$.
    This completes the construction.

    \begin{claim}
        The graph $G$ has linear clique-width at most $6$.
    \end{claim}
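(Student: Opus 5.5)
The plan is to write an explicit linear clique-width expression for $G$ that uses labels $\{1,\ldots,6\}$. Each vertex is introduced one at a time, and only the operations $\eta_{a,b}$ (join all vertices labeled $a$ to all labeled $b$) and $\rho_{a\to b}$ (relabel) are applied in between. The fixed meaning of the labels will be:
\begin{itemize}
\item label $1$ holds all finished vertices of $L^+$;
\item label $2$ holds all finished vertices of $L^-$;
\item label $3$ is a ``dead'' label, never used in a join after a vertex receives it;
\item labels $4,5,6$ are working labels.
\end{itemize}
The key observation is that the only edges needing a non-trivial label class are the complete bipartite joins $C^+\!-\!L^+$ and $C^-\!-\!L^-$. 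I would create all clause vertices $c^\pm_j$ \emph{last}, so that each of these joins is a single $\eta$ against label $1$ or $2$. All other edges (the selector edge $v_i\overline{v}_i$, the incidence edges, and the pendant leaves) are handled locally with working labels.

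The first phase processes the variables $x_1,\ldots,x_n$ in order.
\begin{enumerate}
\item For $x_i$, introduce $v_i$ with label $4$.
\item For each literal vertex $\ell^\pm_{j,i}$ adjacent to $v_i$ (by the incidence rules), introduce it with label $6$ and apply $\eta_{4,6}$. Only $v_i$ carries label $4$ and only this literal carries label $6$, so exactly the intended edge is created.
\item Attach the $k+1$ pendant leaves of this literal one by one: create the leaf with label $5$, apply $\eta_{5,6}$, then $\rho_{5\to 3}$.
\item Apply $\rho_{6\to 1}$ or $\rho_{6\to 2}$, depending on whether the literal lies in $L^+$ or $L^-$.
\item Once all neighbours of $v_i$ are done, introduce $\overline{v}_i$ with label $5$, apply $\eta_{4,5}$ for the edge $v_i\overline{v}_i$, and then $\rho_{4\to 3}$.
\item Repeat steps 2--4 for the literal vertices adjacent to $\overline{v}_i$, with the roles of labels $4$ and $5$ swapped (leaves now use temporary label $4$). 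Finish with $\rho_{5\to 3}$.
\end{enumerate}
Every literal vertex has exactly one selector neighbour, so each literal is handled exactly once and receives exactly its incidence edge and its leaves.

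The second phase adds the clause vertices. For each $c^+_j$, introduce it with label $4$ and apply $\eta_{4,1}$, which adds all edges to $L^+$; note that $C^+$ is complete to $L^+$, so this is exactly right. Then add its $k+1$ leaves via label $5$ as before, and finally apply $\rho_{4\to 3}$. Symmetrically, each $c^-_j$ is joined via $\eta_{4,2}$. Since $C^+$ and $C^-$ are independent sets in $G$, no clause--clause edges are needed.

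The last step is to verify that the expression generates exactly $E(G)$. The only subtle point, and the part I expect to need care, is this check. Each $\eta$ is applied when the two labels involved contain exactly the intended vertex sets: a single working vertex against a single working vertex, or a single clause vertex against the whole class $1$ or $2$. The dead label $3$ is never joined, and no vertex ever returns from label $3$. Since the expression introduces one vertex at a time and uses at most $6$ labels, it witnesses linear clique-width at most $6$.
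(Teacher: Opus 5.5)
Your proposal is correct and is essentially the paper's construction: labels $1,2$ hold $L^+,L^-$, there is a garbage label, and two selector labels alternate between $v_i$ and $\overline{v}_i$, with each selector's literal neighbours processed before the next selector is introduced. As in the paper, the clause vertices come last, each attached by a single join against label $1$ or $2$; the only difference is a permutation of label roles (the paper gives each literal the currently free selector label and keeps a dedicated pendant label, while you give literals the dedicated label $6$ and let pendants borrow the free selector label), which still uses six labels.
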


    \begin{claimproof}
        We give a linear $6$-expression for $G$.
        Labels $1$ and $2$ are the final labels of vertices in $L^+$ and $L^-$, respectively;
        labels $3$ and $4$ are temporary labels for the selector vertices currently being processed;
        label $5$ is used for the pendant neighbors of the marked vertex currently under construction;
        and label $6$, denoted by $l_g$, is a garbage label for vertices whose neighborhoods are already complete.

        Whenever we say that we introduce a marked vertex with label $a$,
        we mean the following sequence of operations:
        create the marked vertex with label $a$,
        create its $k+1$ pendant neighbors with label $5$,
        join labels $a$ and $5$,
        and finally relabel $5$ to $l_g$.

        We first construct all selector vertices and all literal vertices.
        At the end of this phase, every vertex of $L^+$ has label $1$, every vertex of $L^-$ has label $2$,
        and every selector vertex has been relabeled to $l_g$.
        For each variable $x_i$, processed in increasing order of $i$, we perform the following steps:
        \begin{itemize}
            \item introduce $v_i$ with label $3$;
            \item for each neighbor of $v_i$ in $L^+$, introduce that literal vertex as a marked vertex with label $4$,
                join labels $3$ and $4$, and relabel $4$ to $1$;
            \item for each neighbor of $v_i$ in $L^-$, introduce that literal vertex as a marked vertex with label $4$,
                join labels $3$ and $4$, and relabel $4$ to $2$;
            \item introduce $\overline{v}_i$ with label $4$, join labels $3$ and $4$, and relabel $3$ to $l_g$;
            \item for each neighbor of $\overline{v}_i$ in $L^+$, introduce that literal vertex as a marked vertex with label $3$,
                join labels $4$ and $3$, and relabel $3$ to $1$;
            \item for each neighbor of $\overline{v}_i$ in $L^-$, introduce that literal vertex as a marked vertex with label $3$,
                join labels $4$ and $3$, and relabel $3$ to $2$;
            \item relabel $4$ to $l_g$.
        \end{itemize}
        This creates exactly the intended selector-literal edges: every literal vertex is first created with a temporary label, joined only to the selector label to which it should be adjacent, and only then relabeled to its final class $1$ or $2$. Once a selector vertex is moved to $l_g$, it never receives new edges.

        Finally, we create the clause vertices.
        For each vertex of $C^+$, introduce it as a marked vertex with label $3$,
        join labels $3$ and $1$, and relabel $3$ to $l_g$.
        Hence each vertex of $C^+$ becomes adjacent to all vertices of $L^+$ and to no other non-pendant vertices.
        Similarly, for each vertex of $C^-$, introduce it as a marked vertex with label $3$,
        join labels $3$ and $2$, and relabel $3$ to $l_g$.
        Hence each vertex of $C^-$ becomes adjacent to all vertices of $L^-$ and to no other non-pendant vertices.

        We have therefore constructed $G$ with at most six labels, and the claim follows.
    \end{claimproof}

    \begin{lemma}\label{lem:cw:SAT->CVC}
        If $\psi$ is a yes-instance of \OneInThreeSAT, then $\mathcal{I}$ is a yes-instance of \CVC.
    \end{lemma}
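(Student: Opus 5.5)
The plan is to build the solution explicitly from a satisfying assignment $\phi$ of $\psi$ in which every clause has exactly one true literal. We take $S$ to consist of all marked vertices together with exactly one selector per variable: $v_i$ if $\phi(x_i)=\true$ and $\overline{v}_i$ otherwise. Each clause contributes $8$ marked vertices ($c^+_j$, $c^-_j$, $L^+_j$, $L^-_j$), so $|S| = 8m+n = k$.

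We then describe an orientation $O$ in which only vertices of $S$ receive edges. Since a vertex of demand $d$ is feasible exactly when its out-degree is at least $d$, it suffices to check that every vertex of $S$ has out-degree at least its demand. The orientation is defined as follows.
\begin{itemize}
\item All pendant edges point into their marked vertex.
\item The edge $v_i\overline{v}_i$ points into the selected selector.
\item Every edge between an unselected selector and a literal vertex points into the literal vertex.
\item Every edge between a selected selector and a literal vertex points into the selector. This is allowed because selectors have demand $0$, so their capacity equals their degree.
\item On each side, for $j'<j$, every edge between $c^{\pm}_j$ and $L^{\pm}_{j'}$ points into $L^{\pm}_{j'}$, and every edge between $c^{\pm}_{j'}$ and $L^{\pm}_{j}$ points into $c^{\pm}_{j'}$. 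This matches the pattern of \cref{fig:cw_clause_gadget}.
\item Inside clause $j$ on the positive side, the edge $c^+_j\ell^+_{j,i}$ points into $\ell^+_{j,i}$ if $\ell_{j,i}$ is true under $\phi$, and into $c^+_j$ otherwise.
\item Inside clause $j$ on the negative side, the edge $c^-_j\ell^-_{j,i}$ points into $\ell^-_{j,i}$ if $\ell_{j,i}$ is false, and into $c^-_j$ otherwise.
\end{itemize}

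The out-degree count is as follows. Since $v_i$ and $\overline{v}_i$ are adjacent, every edge is either incident to a selected selector or has both endpoints marked, so every edge has an endpoint in $S$ and $O$ covers everything.

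A literal vertex in $L^{\pm}_j$ sends exactly $j-1$ edges to the clause vertices $c^{\pm}_{j'}$ with $j'<j$, and exactly one more edge. On the positive side, a true literal sends that edge to its satisfying selector, which is selected; a false literal's selector is unselected, so the extra edge goes to $c^+_j$. On the negative side, a false literal sends it to its falsifying selector, which is selected; a true literal sends it to $c^-_j$. In every case the out-degree is $j$, matching the demand.

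For $c^+_j$, the out-edges are the $3(j-1)$ edges to earlier literal groups plus one edge to the unique true literal of $c_j$, giving $3(j-1)+1$. For $c^-_j$, the out-edges are the $3(j-1)$ edges to earlier groups plus one edge to each of the exactly two false literals, giving $3(j-1)+2$. Both match the prescribed demands. Selected selectors have demand $0$, and unselected vertices have in-degree $0$, so $O$ is feasible of size $k$.

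The main point to get right is the within-clause accounting: the demands are set so that the positive side absorbs exactly one true literal and the negative side exactly two false literals. I expect the only care needed is in aligning the direction of the cross-clause edges with the demands $3(j-1)+1$, $3(j-1)+2$ and $j$, which the convention above does.
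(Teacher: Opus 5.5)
Your proposal is correct and is essentially the paper's proof: the same set $S$ (all $8m$ marked vertices plus the selector matching the assignment), the same orientation of the clause–literal edges (later clause vertices into earlier literal groups, later literal groups into earlier clause vertices, within-clause edges decided by truth value), and the same out-degree count against the demands $j$, $3(j-1)+1$ and $3(j-1)+2$. One harmless slip: pendant edges, and edges from an unselected selector to a literal vertex, have only one marked endpoint, so the sentence claiming every edge is ``incident to a selected selector or has both endpoints marked'' is inaccurate; coverage nonetheless follows from your correct observation that every vertex outside $S$ has in-degree $0$.
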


    \begin{nestedproof}
        Let $\alpha \colon X \to \{\true, \false\}$ be a truth assignment such that every clause of $\psi$ has exactly one true literal.
        Let
        \[
            S = \setdef{v_i}{i \in [n], \, \alpha(x_i) = \true} \cup
                \setdef{\overline{v}_i}{i \in [n], \, \alpha(x_i) = \false} \cup
                L^+ \cup L^- \cup C^+ \cup C^-,
        \]
        where $|S| = n + 8m = k$.

        We show that $S$ is a capacitated vertex cover of $G$ by defining an orientation in which
        (i) every vertex outside $S$ has in-degree $0$,
        and (ii) every vertex in $S$ has out-degree at least its demand.

        First orient every leaf-edge towards its marked endpoint.
        Next, for each variable $x_i$, orient every edge incident with the selector in $S$ towards that selector and every edge incident with the other selector away from it.
        Thus the unique selector vertex of $x_i$ contained in $S$ has in-degree equal to its degree,
        whereas the other selector has in-degree $0$.

        It remains to orient the edges between clause vertices and literal vertices.
        Fix a clause $c_j = (\ell_{j,i_1} \lor \ell_{j,i_2} \lor \ell_{j,i_3})$, and let $\ell_{j,t}$ be its unique true literal under $\alpha$.

        On the positive side, every literal vertex $\ell^+_{j,r}$ is treated identically with respect to clause vertices other than $c_j^+$:
        \[
            \ell^+_{j,r} \to c^+_1, \ldots, c^+_{j-1}
            \qquad\text{and}\qquad
            c^+_{j+1}, \ldots, c^+_m \to \ell^+_{j,r}.
        \]
        The only edge whose orientation depends on truth is the one incident with $c_j^+$:
        we orient $c_j^+ \to \ell^+_{j,t}$, and orient $\ell^+_{j,r} \to c_j^+$ for the two false literals.

        The negative side is symmetric. For every literal vertex $\ell^-_{j,r}$ we orient
        \[
            \ell^-_{j,r} \to c^-_1, \ldots, c^-_{j-1}
            \qquad\text{and}\qquad
            c^-_{j+1}, \ldots, c^-_m \to \ell^-_{j,r},
        \]
        and only the edge to $c_j^-$ depends on truth:
        we orient $\ell^-_{j,t} \to c_j^-$, and orient $c_j^- \to \ell^-_{j,r}$ for the two false literals.

        We verify feasibility by checking that every vertex of $S$ has out-degree at least its demand.
        Every selector in $S$ has demand $0$, so there is nothing to show.
        Consider a positive literal vertex of clause $c_j$.
        If it corresponds to the true literal, then it has one outgoing selector edge and $j-1$ outgoing clause-edges, for a total out-degree of $j$.
        If it corresponds to a false literal, then it has no outgoing selector edge and exactly $j$ outgoing clause-edges.
        Thus every vertex of $L_j^+$ has out-degree exactly $j$.
        The same argument on the negative side shows that every vertex of $L_j^-$ also has out-degree exactly $j$.

        Finally, consider the clause vertices.
        For each $j' < j$, the three vertices of $L_{j'}^+$ receive the edge from $c_j^+$, and inside clause $c_j$ only the edge to the true literal is oriented out of $c_j^+$.
        Hence
        \[
            \outdeg(c_j^+) = 3(j-1)+1.
        \]
        Symmetrically,
        \[
            \outdeg(c_j^-) = 3(j-1)+2.
        \]
        Therefore every clause vertex also has out-degree equal to its demand.

        We have thus defined an orientation in which every vertex outside $S$ has in-degree $0$ and every vertex in $S$ has out-degree at least its demand.
        Equivalently, every vertex in $S$ has in-degree at most its capacity, so $S$ is a capacitated vertex cover of $G$.
    \end{nestedproof}

    \begin{lemma}\label{lem:cw:CVC->SAT}
        If $\mathcal{I}$ is a yes-instance of \CVC, then $\psi$ is a yes-instance of \OneInThreeSAT.
    \end{lemma}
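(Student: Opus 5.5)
The plan is a counting argument. First pin down the structure of a solution, then compare total demand against available out-degree on each side of the construction, and finally localize the comparison to suffixes of clauses. Let $S$ be a capacitated vertex cover with $|S|\le k=8m+n$, viewed as a feasible orientation in which every vertex outside $S$ has in-degree $0$ and every vertex of $S$ has out-degree at least its demand.

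\textbf{Step 1 (structure of $S$).} Every marked vertex lies in $S$, since otherwise its $k+1$ pendant neighbors would all be needed. This accounts for $|C^+\cup C^-\cup L^+\cup L^-|=8m$ vertices. Each edge $v_i\overline{v}_i$ must be covered, so $S$ contains at least one selector of each variable, hence exactly one. No pendant leaf is in $S$, so all leaf-edges point into marked vertices and contribute nothing to their out-degrees. Define $\alpha(x_i)=\true$ if and only if $v_i\in S$. A selector outside $S$ has all its edges oriented away from it. Consequently a literal vertex $\ell^{\pm}_{j,i}$ can gain out-degree from its unique selector edge only if that selector is in $S$. By the incidence pattern, this happens exactly when the occurrence is true under $\alpha$ (for $\ell^+$) or false under $\alpha$ (for $\ell^-$). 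Let $t_j$ be the number of true literals of $c_j$ under $\alpha$, so that $3-t_j$ is the number of false ones. Then the selector edges can supply at most $t_j$ units of out-degree to $L^+_j$ and at most $3-t_j$ units to $L^-_j$.

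\textbf{Step 2 (suffix counting).} Fix $j\in[m]$ and let $N=m-j+1$. On the positive side, consider the suffix set $\{c^+_a : a\ge j\}\cup\bigcup_{b\ge j}L^+_b$. Its total demand is $\sum_{a=j}^m(3(a-1)+1)+3\sum_{b=j}^m b = 3N(m+j)-2N$. Its total out-degree is bounded by three contributions:
\begin{itemize}
\item the $3N^2$ edges inside the suffix, each giving one unit to exactly one endpoint;
\item all edges from the suffix to the prefix clause and literal vertices, at most $6N(j-1)$;
\item the selector edges, at most $T_{\ge j}:=\sum_{a\ge j}t_a$.
\end{itemize}
Since $3N^2+6N(j-1)=3N(m+j)-3N$, feasibility forces $T_{\ge j}\ge N$. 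The same computation on the negative side, where demands are larger by one per clause vertex, gives $F_{\ge j}:=\sum_{a\ge j}(3-t_a)\ge 2N$. Since $T_{\ge j}+F_{\ge j}=3N$, both inequalities hold with equality: $T_{\ge j}=N$ for every $j$. Taking differences $t_j=T_{\ge j}-T_{\ge j+1}$, with $T_{\ge m+1}=0$, yields $t_j=1$ for all $j$. Hence $\alpha$ makes exactly one literal of each clause true.

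\textbf{Main obstacle.} The delicate point is Step 2. A single global count over all clauses only yields $\sum_j t_j=m$, which does not prevent one clause from having two true literals while another has none. The fix is to run the count on every suffix of clauses. In doing so, one must correctly bound the edges leaving the suffix: these are edges from $c^+_a$ with $a\ge j$ to $L^+_{<j}$, and from $L^+_{\ge j}$ to $c^+_{<j}$, and each such edge may be credited as out-degree to its suffix endpoint. One also has to check that the demands $3(j-1)+1$ and $3(j-1)+2$ were chosen exactly so that these bounds become tight.
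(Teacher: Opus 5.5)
Your proof is correct, and it takes a genuinely different route from the paper's.

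After the same structural step (all $8m$ marked vertices and exactly one selector per variable lie in $S$, and no leaves do), the paper runs a reverse induction on $j$ from $m$ down to $1$. This induction pins down the orientation of every clause--literal edge. The paper first normalizes all edges at chosen selectors to point into them. It then uses the previously forced orientations to show that $c_j^{+}$ (resp.\ $c_j^{-}$) must absorb its $k+1$ leaf edges and all $3(m-j)$ edges from $L^{\pm}_{j+1}\cup\cdots\cup L^{\pm}_m$. That leaves room for at most $2$ (resp.\ $1$) edges from $L_j^{\pm}$. Hence at least one vertex of $L_j^+$ and at least two of $L_j^-$ must use their selector edge, so exactly one and two. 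Finally, the edges towards $c_1^{\pm},\dots,c_{j-1}^{\pm}$ are shown to be oriented as intended, which feeds the next step.

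You replace this local forcing by a double count over each suffix. This gives $T_{\ge j}\ge N$ and $F_{\ge j}\ge 2N$, whose sum is exactly $3N$, and you recover $t_j=1$ by differencing. The arithmetic checks out: $3N^2+6N(j-1)=3N(m+j)-3N$, compared with total demands $3N(m+j)-2N$ and $3N(m+j)-N$. Your upper bound on the suffix's out-degree is also valid, since leaf edges contribute nothing and a selector edge can contribute only when its selector is in $S$.

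Your version needs neither the normalization of selector edges nor a strengthened inductive hypothesis about the exact orientation. It also explains why the demands $j$, $3(j-1)+1$ and $3(j-1)+2$ were chosen: they make every suffix inequality tight. The paper's version gives more, namely that every solution realizes precisely the intended orientation pattern from the forward direction. That extra structure is not needed for the lemma.
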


    \begin{nestedproof}
        Let $O$ be a feasible orientation of $G$ of size at most $k$,
        and let $S = \setdef{v \in V(G)}{\indeg_O(v) > 0}$ be the corresponding capacitated vertex cover,
        where $|S| \leq k$.

        We first identify the structure of $S$.
        Every marked vertex belongs to $S$.
        Indeed, if a marked vertex did not belong to $S$, then all its incident edges would be oriented away from it,
        so all its $k+1$ pendant neighbors would belong to $S$, contradicting $|S| \leq k$.
        Therefore $L^+ \cup L^- \cup C^+ \cup C^- \subseteq S$.
        Consider now a variable $x_i$.
        Since $v_i$ and $\overline{v}_i$ are adjacent, at least one of them must belong to $S$.
        Because there are $n$ variables and $|S| \leq 8m+n = k$, while the $8m$ marked vertices are already in $S$,
        we conclude that for every $i \in [n]$ the set $S$ contains exactly one of $v_i$ and $\overline{v}_i$.
        Hence $S$ consists precisely of the $8m$ marked vertices and one selector vertex per variable.
        In particular, no pendant vertex belongs to $S$, thus every leaf-edge is oriented towards its marked endpoint.

        Since the selector vertices have capacity equal to their degree, every selector in $S$ may cover all of its incident edges.
        Therefore we assume, without loss of generality, that every edge incident with a selector in $S$ is oriented towards that selector.

        \begin{claim}\label{claim:cw:orientation-structure}
            For every $j \in [m]$,
            exactly one vertex of $L_j^+$ has an outgoing edge
            to its neighboring selector vertex.
        \end{claim}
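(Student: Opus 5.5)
Each literal vertex $\ell^+_{j,i}$ has exactly one selector neighbour. By the normalisation above, its edge to that selector is outgoing if and only if the selector lies in $S$. Write $t_j$ for the number of vertices of $L_j^+$ with an outgoing selector edge; the claim is $t_j = 1$ for every $j$.

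\textbf{Key observation.} The incidence edges are arranged so that $\ell^-_{j,i}$ is adjacent to the complementary selector of $\ell^+_{j,i}$. Since exactly one selector per variable is in $S$, the number of vertices of $L_j^-$ with an outgoing selector edge is exactly $3-t_j$. The plan is a double-counting argument on each side, applied to suffixes of clauses. The positive side yields a lower bound on $\sum_{j'>j} t_{j'}$, the negative side the matching upper bound, and differencing consecutive suffixes gives $t_j=1$.

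\textbf{Counting on a suffix.} Fix $j \in [0,m-1]$ and consider the marked vertices $C^+_{>j}=\{c^+_{j'} : j'>j\}$ and $L^+_{>j}=\bigcup_{j'>j}L^+_{j'}$. No pendant vertex is in $S$, so leaf-edges give no out-degree. Every marked vertex $v$ needs out-degree at least its demand, and each edge contributes out-degree to only one endpoint. Hence the total demand of $C^+_{>j}\cup L^+_{>j}$ must be covered by the following out-edges:
\begin{itemize}
\item edges inside the complete bipartite graph between $C^+_{>j}$ and $L^+_{>j}$, contributing at most $3(m-j)^2$;
\item the cross edges between $C^+_{>j}$ and $L^+_{\le j}$ and between $L^+_{>j}$ and $C^+_{\le j}$, contributing at most $6j(m-j)$;
\item the selector edges of $L^+_{>j}$, contributing exactly $\sum_{j'>j} t_{j'}$.
\end{itemize}
The total demand is $\sum_{j'>j}\bigl((3j'-2)+3j'\bigr)$. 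Using $\sum_{j'>j}6j' = 3(m-j)(m+j+1) = 3(m-j)^2+6j(m-j)+3(m-j)$, this inequality rearranges to $\sum_{j'>j}t_{j'} \ge m-j$.

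\textbf{Negative side.} The same computation on the negative side uses demands $3j'-1$ for $c^-_{j'}$ and $3j'$ for the literal vertices. The selector contribution is now $\sum_{j'>j}(3-t_{j'})$. After the same simplification this gives $\sum_{j'>j}t_{j'} \le m-j$.

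\textbf{Conclusion.} Thus $\sum_{j'>j}t_{j'} = m-j$ for all $j\in[0,m-1]$, and subtracting consecutive identities yields $t_j=1$ for every $j \in [m]$.

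The main obstacle is getting the bookkeeping exactly right: the arithmetic identity above must hold, and the cross edges must be bounded by $6j(m-j)$ rather than double counted. The constants in the demands ($+1$ versus $+2$) are what make the two inequalities meet. I would first verify the case $j=0$ (whole-graph count: $T\ge m$ and $T\le m$) and then redo it with general $j$.
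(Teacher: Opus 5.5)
Your proof is correct, and it takes a different route from the paper.

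\textbf{The paper's route.} The paper proves a stronger statement by downward induction on $j$. It pins down the orientation of every clause--literal edge: edges from $c^+_z$ to $L^+_j$ with $z>j$ point into $L^+_j$, edges from $L^+_z$ to $c^+_j$ point into $c^+_j$, and inside clause $j$ exactly one edge leaves $c^+_j$. It then uses the \emph{local} capacity of single clause vertices. After absorbing its leaf-edges and the $3(m-j)$ forced edges from later groups, $c^+_j$ can take at most two more incoming edges, which rules out $t_j=0$. Likewise $c^-_j$ can take at most one more, which rules out $3-t_j\le 1$. It finishes with the same observation you use: exactly three selector out-edges lie in $L^+_j\cup L^-_j$.

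\textbf{Your route.} You instead sum all demand constraints over the suffix $C^{\pm}_{>j}\cup L^{\pm}_{>j}$ and compare with the available out-degree. The arithmetic checks out. The per-clause demands $6j'-2$ and $6j'-1$ exceed $3(m-j)^2+6j(m-j)$ by exactly $m-j$ and $2(m-j)$, so $\sum_{j'>j}t_{j'}=m-j$. Differencing then gives $t_j=1$; for $t_m$, use the $j=m-1$ identity on its own.

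\textbf{Comparison.} Your argument is shorter and needs no inductive hypothesis about how cross edges are oriented, which the paper's inductive statement carries somewhat implicitly. It also makes clear why the demand offsets $+1$ and $+2$ are chosen: they split the three selector out-edges per clause as $1$ versus $2$. The paper's argument yields more, namely the full rigid orientation pattern, mirroring the forward direction. You could recover that pattern too, since tightness of both of your inequalities forces every cross edge to point away from the suffix and every marked vertex to meet its demand exactly. The claim itself does not need this.
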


        \begin{claimproof}
            We will prove the following stronger statement, and the claim follows immediately from it.
            For every $j \in [m]$:
            \begin{itemize}
                \item There exists a single vertex in $L^+_j$ that has an outgoing edge towards its neighboring selector vertex,
                and an incoming edge from $c^+_j$.
                The remaining two vertices in $L^+_j$ have an incoming edge from their
                neighboring selector vertex, and an outgoing edge to $c^+_j$.

                \item Every edge between $L^+_j$ and $\{c^+_{j+1},\ldots,c^+_m\}$ is oriented towards $L^+_j$, and every edge between $c^+_j$ and $L^+_{j+1} \cup \cdots \cup L^+_m$ is oriented towards $c^+_j$.

                \item There exist two vertices in $L^-_j$ that have an outgoing edge towards their neighboring selector vertex,
                and an incoming edge from $c^-_j$.
                The remaining vertex in $L^-_j$ has an incoming edge from its
                neighboring selector vertex, and an outgoing edge to $c^-_j$.

                \item Every edge between $L^-_j$ and $\{c^-_{j+1},\ldots,c^-_m\}$ is oriented towards $L^-_j$, and every edge between $c^-_j$ and $L^-_{j+1} \cup \cdots \cup L^-_m$ is oriented towards $c^-_j$.
            \end{itemize}

            We prove the statement by induction on $j$.
            Let $j \in [m]$ and suppose that the statement holds for all $z \in [j+1,m]$
            (this is vacuous when $j=m$).
            We will show that the statement also holds for $j$.

            Let $c_j = (\ell_{j,i_1} \lor \ell_{j,i_2} \lor \ell_{j,i_3})$.
            For each variable index $i \in \{i_1,i_2,i_3\}$,
            exactly one of the two vertices $\ell^+_{j,i}$ and $\ell^-_{j,i}$ is adjacent to the unique selector in $S \cap \{v_i,\overline{v}_i\}$,
            because one of these two copies is adjacent to $v_i$ and the other to $\overline{v}_i$.
            Since every edge incident with a selector in $S$ is oriented towards that selector,
            it follows that exactly one of $\ell^+_{j,i}$ and $\ell^-_{j,i}$ has an outgoing edge to a selector vertex in $S$.
            Consequently, exactly three vertices of $L_j^+ \cup L_j^-$ obtain one unit of out-degree from a selector edge.

            Suppose first that no vertex of $L_j^+$ has an outgoing selector edge.
            By assumption, every edge between $L_j^+$ and $\{c_{j+1}^+,\ldots,c_m^+\}$ is oriented towards $L_j^+$.
            Consequently, every vertex of $L_j^+$ must use all of its remaining clause-edges to $\{c_1^+,\ldots,c_j^+\}$ as outgoing edges,
            so all three edges between $c_j^+$ and $L_j^+$ are oriented towards $c_j^+$.
            However,
            \[
                \capacity(c_j^+) = \deg(c_j^+) - (3(j-1)+1) = (k+1) + 3(m-j) + 2,
            \]
            and $c_j^+$ must cover its $k+1$ leaf-edges as well as, by assumption,
            $3(m-j)$ additional edges with $L^+_{j+1} \cup \cdots \cup L^+_m$,
            meaning it can only cover at most two additional edges from $L_j^+$, a contradiction.
            Therefore at least one vertex of $L_j^+$ has an outgoing selector edge.

            Suppose next that at most one vertex of $L_j^-$ has an outgoing selector edge.
            By assumption, every edge between $L_j^-$ and $\{c_{j+1}^-,\ldots,c_m^-\}$ is oriented towards $L_j^-$.
            Consequently, at least two vertices of $L_j^-$ must use all of their remaining clause-edges to $\{c_1^-,\ldots,c_j^-\}$ as outgoing edges,
            so at least two edges between $c_j^-$ and $L_j^-$ are oriented towards $c_j^-$.
            However,
            \[
                \capacity(c_j^-) = \deg(c_j^-) - (3(j-1)+2) = (k+1) + 3(m-j) + 1,
            \]
            and $c_j^-$ must cover its $k+1$ leaf-edges as well as, by assumption,
            $3(m-j)$ additional edges with $L^-_{j+1} \cup \cdots \cup L^-_m$,
            meaning it can only cover at most one additional edge from $L_j^-$, a contradiction.
            Therefore at least two vertices of $L_j^-$ have an outgoing selector edge.

            Since exactly three vertices of $L_j^+ \cup L_j^-$ have outgoing selector edges,
            it follows that exactly one vertex of $L_j^+$ and exactly two vertices of $L_j^-$ do.
            In that case, for the vertices of $L_j^+$ and $L_j^-$ that do not have outgoing selector edges,
            all their clause-edges to $\{c_1^+,\ldots,c_j^+\}$ and $\{c_1^-,\ldots,c_j^-\}$ respectively must be outgoing so that the orientation respects their demand.
            Therefore, $c_j^+$ has at least two incoming edges from $L_j^+$ and $c_j^-$ has at least one incoming edge from $L_j^-$.
            However, $c_j^+$ and $c_j^-$ have capacity to cover at most two and one additional edges apart from those of $L_{j+1}^+ \cup \cdots \cup L_m^+$ and $L_{j+1}^- \cup \cdots \cup L_m^-$, respectively,
            thus it follows that all edges between $c_j^+$ and $L_1^+ \cup \cdots \cup L_{j-1}^+$ and all edges between $c_j^-$ and $L_1^- \cup \cdots \cup L_{j-1}^-$ are oriented away from $c_j^+$ and $c_j^-$, respectively.

            From the previous paragraph, it follows that for the vertices of $L_j^+$ and $L_j^-$ that do have outgoing selector edges,
            they have an incoming edge from $c_j^+$ and $c_j^-$, respectively.
            Consequently, to respect their demand, all their remaining clause-edges to $\{c_1^+,\ldots,c_{j-1}^+\}$ and $\{c_1^-,\ldots,c_{j-1}^-\}$ respectively must be outgoing.
            This completes the proof.
        \end{claimproof}

        We define an assignment $\alpha \colon X \to \{\true,\false\}$ by setting
        $\alpha(x_i)=\true$ if $v_i \in S$, and $\alpha(x_i)=\false$ if $\overline{v}_i \in S$.
        We argue that $\alpha$ satisfies every clause of $\psi$ with exactly one true literal.
        Indeed, consider a clause $c_j$.
        By \Cref{claim:cw:orientation-structure}, exactly one vertex of $L_j^+$ has an outgoing edge to its neighboring selector vertex.
        By construction, a vertex $\ell^+_{j,i} \in L_j^+$ has an outgoing edge to the selected selector if and only if the literal $\ell_{j,i}$ is true under $\alpha$.
        Therefore exactly one literal of $c_j$ is true under $\alpha$.
    \end{nestedproof}
    Correctness now follows from \cref{lem:cw:SAT->CVC,lem:cw:CVC->SAT}.
    This completes the proof.
\end{proof}

\section{Cutwidth}\label{sec:ctw}

In this section we give a $2^{\ctw} n^{\bO(1)}$ algorithm for {\CVC} parameterized by the cutwidth $\ctw$ of the input graph.
Since $\ctw \le \pw$, this is a particularly restrictive width parameter.
Moreover, as a $(2-\varepsilon)^{\ctw} n^{\bO(1)}$ algorithm for \textsc{Vertex Cover}
would falsify the $\pw$-SETH~\cite{soda/Lampis25},
our algorithm is optimal up to polynomial factors.

\begin{theorem}\label{thm:ctw:algo}
    There is an algorithm that, given a capacitated graph
    together with a linear arrangement of cutwidth $\ctw$,
    determines the minimum size of a feasible orientation in time $2^{\ctw} n^{\bO(1)}$.
\end{theorem}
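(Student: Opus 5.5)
We process the vertices in the order $v_1,\ldots,v_n$ of the given linear arrangement and run a dynamic program whose states are orientations of the cut edges. For $p \in [0,n]$ let $E_p$ be the set of edges with one endpoint in $\{v_1,\ldots,v_p\}$ and the other in $\{v_{p+1},\ldots,v_n\}$, so $|E_p| \le \ctw$.

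A state at position $p$ is an orientation $\sigma$ of $E_p$, which gives at most $2^{\ctw}$ states. The table entry $T_p[\sigma]$ stores the minimum size, counted only over $\{v_1,\ldots,v_p\}$, of an orientation of all edges incident to $v_1,\ldots,v_p$ that agrees with $\sigma$ on $E_p$ and respects the capacity of every $v_q$ with $q \le p$. The key point is that once $v_q$ has been processed, all its incident edges are oriented. Hence its feasibility and its contribution to the size (whether its in-degree is positive) are fully determined. Since vertices to the right only see left vertices through $E_p$, the pair $(\sigma, T_p[\sigma])$ carries all information the future needs. This is exactly where capacities behave well under cutwidth but not under treewidth: we store orientations of cut edges instead of residual capacities of bag vertices.

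For the transition from $p-1$ to $p$, the edges incident to $v_p$ split into left edges $L$, which lie in $E_{p-1}$ and are already oriented by $\sigma$, and right edges $R$, which lie in $E_p$. Moreover, $E_p = (E_{p-1}\setminus L)\cup R$. Given a state $\sigma'$ of $E_p$, the orientation of $R$ is read off from $\sigma'$ and the orientation of $E_{p-1}\setminus L$ must agree with $\sigma'$. We then minimize over orientations $\tau$ of $L$ that are compatible with a state $\sigma$ of $E_{p-1}$. The vertex $v_p$ is feasible iff its in-degree, computed from $\tau$ on $L$ and $\sigma'$ on $R$, is at most $\capacity(v_p)$, and we add $1$ to the cost iff that in-degree is positive. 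Thus
\[
T_p[\sigma'] = \min_{\sigma} \bigl( T_{p-1}[\sigma] + [\indeg(v_p) > 0] \bigr),
\]
where $\sigma$ ranges over the unique completions of $\sigma'|_{E_{p-1}\setminus L}$ by some $\tau$ on $L$ for which $v_p$ is feasible.

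The main obstacle is the running time. Naively, a transition costs $2^{|L|}\cdot 2^{|R|}\cdot 2^{|E_{p-1}\setminus L|}$, which can be as large as $2^{|E_{p-1}|+|E_p|}$, up to $4^{\ctw}$. We avoid this by noting that the feasibility and cost of $v_p$ depend on $\tau$ only through the number $t$ of edges of $L$ oriented into $v_p$. So we first aggregate $T_{p-1}$ over the $L$-coordinates. For each orientation $\rho$ of $E_{p-1}\setminus L$ and each $t\in[0,|L|]$, let $B[\rho,t]$ be the minimum of $T_{p-1}[\sigma]$ over all $\sigma$ that extend $\rho$ and have exactly $t$ edges of $L$ pointing into $v_p$. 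Computing $B$ takes time $2^{|E_{p-1}|}\cdot n^{\bO(1)}$. Then each $\sigma'$ is evaluated by minimizing over the at most $n$ values of $t$, reading the orientation of $R$ from $\sigma'$, so this step takes $2^{|E_p|}\cdot n^{\bO(1)}$.

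Initialization is $T_0[\emptyset]=0$, and the answer is $T_n[\emptyset]$ (with $+\infty$ meaning infeasible). Correctness follows by induction on $p$ from the separation property above. The total running time is $\sum_p (2^{|E_{p-1}|}+2^{|E_p|})\, n^{\bO(1)} = 2^{\ctw} n^{\bO(1)}$.
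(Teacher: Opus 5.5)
Your proposal is correct and follows essentially the same approach as the paper: the same dynamic program over orientations of the cut edges, and the same speed-up, which fixes the orientation of the common edges $E_{p-1}\cap E_p$ and aggregates the predecessor table by the number $t$ of left edges entering $v_p$, giving $(2^{|E_{p-1}|}+2^{|E_p|})\,n^{\mathcal{O}(1)}$ work per layer. The differences are cosmetic: you store partial orientations of the edges incident to $v_1,\ldots,v_p$ rather than full orientations, and your roles of $\sigma$ and $\sigma'$ are swapped relative to the paper.
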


\begin{proof}
    Let $G=(V,E)$ be the input capacitated graph, where $\capacity \colon V \to \mathbb{N}$ is the capacity function,
    and let $\pi \colon V \to [n]$ be the given linear arrangement of cutwidth $\ctw$.
    For $i \in [0,n]$, let $V_i$ be the set of the first $i$ vertices of $\pi$ and let $\bar V_i = V \setminus V_i$.
    We denote by $\delta_i = \setdef{uv \in E}{u \in V_i,\ v \in \bar V_i}$
    the set of edges crossing the cut $(V_i,\bar V_i)$.
    Since $\pi$ has cutwidth $\ctw$, we have $|\delta_i| \le \ctw$ for all $i \in [0,n]$.

    For an orientation $O$ of $G$, its \emph{$i$-signature} is the function
    $\sigma \colon \delta_i \to \{\leftarrow,\rightarrow\}$ defined as follows:
    for every edge $uv \in \delta_i$ with $u \in V_i$ and $v \in \bar V_i$, we set
    $\sigma(uv)=\rightarrow$ if and only if $O$ orients the edge from left to right, that is, as $(u,v)$.
    Thus a signature records the directions of all edges currently crossing the cut.
    Let $\Sigma_i$ denote the set of all $i$-signatures; clearly $|\Sigma_i|=2^{|\delta_i|} \le 2^{\ctw}$.

    We say that an orientation $O$ is \emph{$i$-feasible} if
    $\indeg_O(v) \le \capacity(v)$ for all $v \in V_i$.
    Its \emph{$i$-size} is the number of vertices of $V_i$ with positive in-degree,
    denoted by $\size_i(O)= |\setdef{v \in V_i}{\indeg_O(v)>0}|$.
    Thus an orientation is feasible if and only if it is $n$-feasible,
    in which case its objective value is $\size_n(O)$.

    We perform dynamic programming along the ordering $\pi$.
    For every $i \in [0,n]$ and every $\sigma \in \Sigma_i$,
    let $\DP_i[\sigma]$ be the minimum $i$-size of an $i$-feasible orientation
    with $i$-signature $\sigma$;
    if no such orientation exists, we set $\DP_i[\sigma]=+\infty$.
    Since $\delta_n=\varnothing$, we simply need to compute $\DP_n[\varnothing]$.

    \proofsubparagraph{Base case.}
    For $i=0$, the cut $\delta_0$ is empty and the empty orientation is trivially $0$-feasible of size $0$.
    Thus $\Sigma_0=\{\varnothing\}$ and $\DP_0[\varnothing]=0$.

    \proofsubparagraph{Transition.}
    Fix $i \in [n]$ and let $v_i$ be the vertex with $\pi(v_i)=i$.
    Define $\delta_{i-1}^{-} = \setdef{u v_i \in E}{u \in V_{i-1}}$
    and $\delta_i^{+} = \setdef{v_i u \in E}{u \in \bar V_i}$.
    Then $\delta_i = (\delta_{i-1} \setminus \delta_{i-1}^{-}) \cup \delta_i^{+}$,
    that is, when we move $v_i$ from right to left, the old crossing edges incident with $v_i$
    disappear, and the new crossing edges are exactly the edges from $v_i$ to vertices still on the right.

    For every $\sigma' \in \Sigma_{i-1}$,
    let $a_i(\sigma') = |\setdef{u v_i \in \delta_{i-1}}{\sigma'(u v_i)=\rightarrow}|$,
    which is the number of edges entering $v_i$ from the left.
    For every $\sigma \in \Sigma_i$, let $b_i(\sigma) = |\setdef{v_i u \in \delta_i}{\sigma(v_i u)=\leftarrow}|$,
    which is the number of edges entering $v_i$ from the right.

    Fix $\sigma \in \Sigma_i$ and let
    $\hat{\Sigma}_{i-1}^{\sigma} =
    \setdef{\sigma' \in \Sigma_{i-1}}{\text{$\sigma'$ agrees with $\sigma$ on $\delta_{i-1} \cap \delta_i$}}$.
    Thus $\hat{\Sigma}_{i-1}^{\sigma}$ is the set of predecessor signatures that can be extended to $\sigma$.
    For every $\sigma' \in \hat{\Sigma}_{i-1}^{\sigma}$, the vertex $v_i$ receives exactly
    $a_i(\sigma')+b_i(\sigma)$ incoming edges.
    Hence such an extension is feasible exactly when
    $a_i(\sigma')+b_i(\sigma)\le \capacity(v_i)$.
    Moreover, the only possible change in the objective value is whether $v_i$ now has positive in-degree.
    So the increase in size is%
    \footnote{For a predicate $P$, the Iverson bracket $[P]$ is equal to $1$ if $P$ is true and to $0$ otherwise.}
    $\big[a_i(\sigma') + b_i(\sigma) > 0\big]$.
    We obtain the recurrence
    \[
        \DP_i[\sigma] =
        \min_{\substack{\sigma' \in \hat{\Sigma}_{i-1}^{\sigma}\\
        a_i(\sigma') + b_i(\sigma) \le \capacity(v_i)}}
        \Big\{
            \DP_{i-1}[\sigma'] + [a_i(\sigma') + b_i(\sigma) > 0]
        \Big\},
    \]
    with the convention that the minimum over the empty set is $+\infty$.

    \begin{claim}
        For every $i \in [0,n]$ and every $\sigma \in \Sigma_i$,
        $\DP_i[\sigma]$ is equal to the minimum $i$-size
        of an $i$-feasible orientation with $i$-signature $\sigma$.
    \end{claim}

    \begin{claimproof}
        We prove the statement by induction on $i$.
        The base case $i=0$ has already been established.
        Assume the claim holds for $i-1$, and fix $\sigma \in \Sigma_i$.

        To show that $\DP_i[\sigma]$ is a lower bound, let $O$ be any $i$-feasible orientation
        with $i$-signature $\sigma$, and let $\sigma'$ be its $(i-1)$-signature.
        Then $\sigma' \in \hat{\Sigma}_{i-1}^{\sigma}$.
        Since $O$ is $i$-feasible, it is also $(i-1)$-feasible.
        Moreover, $\indeg_O(v_i)=a_i(\sigma')+b_i(\sigma)$, so
        $a_i(\sigma')+b_i(\sigma)\le \capacity(v_i)$, and
        \[
            \size_i(O)=\size_{i-1}(O) + [a_i(\sigma')+b_i(\sigma)>0].
        \]
        By the induction hypothesis, $\size_{i-1}(O) \ge \DP_{i-1}[\sigma']$.
        Therefore $\size_i(O) \ge \DP_i[\sigma]$.

        For the converse, let $\sigma' \in \hat{\Sigma}_{i-1}^{\sigma}$ attain the minimum in the recurrence.
        By the induction hypothesis, there exists an $(i-1)$-feasible orientation $O'$
        with $(i-1)$-signature $\sigma'$ and $(i-1)$-size $\DP_{i-1}[\sigma']$.
        We obtain an orientation $O$ by starting from $O'$ and reorienting every edge of $\delta_i^{+}$
        according to $\sigma$.
        The edges of $\delta_{i-1} \cap \delta_i$ are not modified, and they already have the direction prescribed by $\sigma$
        because $\sigma' \in \hat{\Sigma}_{i-1}^{\sigma}$.
        Hence the resulting orientation has $i$-signature $\sigma$.

        No vertex of $V_{i-1}$ changes its in-degree, since the only modified edges are incident with $v_i$
        and vertices of $\bar V_i$.
        The vertex $v_i$ receives exactly $a_i(\sigma')+b_i(\sigma)$ incoming edges,
        which is at most $\capacity(v_i)$ by the choice of $\sigma'$.
        Therefore $O$ is $i$-feasible, has $i$-signature $\sigma$, and
        \[
            \size_i(O)=\DP_{i-1}[\sigma'] + [a_i(\sigma')+b_i(\sigma)>0]=\DP_i[\sigma].
        \]
        This completes the induction.
    \end{claimproof}

    It remains to evaluate the recurrence in time $2^{\ctw} n^{\bO(1)}$ per layer.
    A direct implementation would take time $\sO(4^{\ctw})$,
    since for each target signature $\sigma \in \Sigma_i$ one may try all predecessor signatures in $\Sigma_{i-1}$.
    We avoid this by first fixing the orientation of the common part.
    Let $C = \delta_{i-1} \cap \delta_i$,
    $L = \delta_{i-1} \setminus \delta_i$,
    and $R = \delta_i \setminus \delta_{i-1}$;
    notice that $L=\delta_{i-1}^{-}$ and $R=\delta_i^{+}$.

    Fix a partial signature $\tau$ on $C$, that is, $\tau \colon C \to \{ \rightarrow, \leftarrow \}$.
    Let $P_\tau$ be the set of predecessor signatures in $\Sigma_{i-1}$ extending $\tau$,
    and let $Q_\tau$ be the set of target signatures in $\Sigma_i$ extending $\tau$.
    Since only the edges of $L$ remain free in a predecessor signature extending $\tau$,
    we have $|P_\tau|=2^{|L|}$; similarly, $|Q_\tau|=2^{|R|}$.

    We first scan all signatures in $P_\tau$ and build an array $B_\tau[\cdot]$ indexed
    by $t \in \{0,\ldots,|L|\}$,
    where $B_\tau[t]$ stores the minimum value of $\DP_{i-1}[\sigma']$ over all $\sigma' \in P_\tau$
    such that $a_i(\sigma')=t$.
    Each signature in $P_\tau$ can be processed in polynomial time,
    since we only have to compute $a_i(\sigma')$ and update one array entry.

    We then scan all signatures in $Q_\tau$.
    For a fixed $\sigma \in Q_\tau$, the value $b_i(\sigma)$ is known,
    so the best compatible predecessor is obtained by trying all feasible values of $t$ and using $B_\tau[t]$.
    Thus
    \[
        \DP_i[\sigma]
        =
        \min_{\substack{t \in \{0,\ldots,|L|\}\\ t+b_i(\sigma)\le \capacity(v_i)}}
        \Big\{B_\tau[t] + [t+b_i(\sigma)>0]\Big\}.
    \]
    Hence, for a fixed $\tau$, the total work is polynomial in $n$ and proportional to
    $|P_\tau|+|Q_\tau| = 2^{|L|}+2^{|R|}$.
    Finally, there are $2^{|C|}$ choices for $\tau$, so one layer takes time polynomial in $n$ and proportional to
    $2^{|C|}(2^{|L|}+2^{|R|}) = 2^{|\delta_{i-1}|} + 2^{|\delta_i|} \le 2^{\ctw+1}$.
    Hence one layer is processed in time $2^{\ctw} n^{\bO(1)}$, and the total running time is $2^{\ctw} n^{\bO(1)}$.
    This concludes the proof.
\end{proof}

\section{Conclusion}\label{sec:conclusion}

We revisited the parameterized complexity of \CVC\ and obtained a significantly sharper picture of its exact complexity across several standard structural parameters.
In particular, we proved a tight lower bound for the natural parameter, identified a barrier for vertex cover number and complemented it with an exponentially improved algorithm for vertex integrity, established a tight lower bound for treewidth, showed NP-hardness for constant clique-width, and obtained an optimal $2^{\ctw} n^{\bO(1)}$ algorithm for cutwidth.
Taken together, these results clarify where capacities fundamentally obstruct the usual width-based algorithmic paradigm and where fixed-parameter tractability can still be recovered with optimal or near-optimal dependence on the parameter.

A few natural questions remain open.
While our cutwidth algorithm is tight under the $\pw$-SETH,
the picture for other parameters that yield fixed-parameter tractability is not yet complete.
In particular, the reduction of Dom et al.~\cite{iwpec/DomLSV08} already shows that \CVC\ is W[1]-hard parameterized by feedback vertex set number,
so it is natural to ask about more restrictive parameterizations that render the problem fixed-parameter tractable.
The most immediate remaining case is the feedback edge set number $\fes$.
Here a $2^\fes n^{\bO(1)}$-time algorithm is easy to obtain by guessing the orientation of the edges in the feedback edge set and then solving the resulting forest instance in polynomial time.
Can one obtain a matching $(2-\varepsilon)^{\fes} n^{\bO(1)}$ lower bound?

Another natural direction concerns the parameterization by vertex cover number.
Our lower bound there is conditional on the fine-grained equivalences of Rohwedder and W\k{e}grzycki~\cite{innovations/RohwedderW25}.
Can one replace this barrier by an ETH-based lower bound ruling out a $2^{o(\vc^2)} n^{\bO(1)}$ algorithm?

\bibliography{refs}

@book{books/Diestel25,
  author    = {Reinhard Diestel},
  title     = {Graph Theory},
  publisher = {Springer},
  year      = {2025},
  series    = {Graduate Texts in Mathematics},
  volume    = {173},
  edition   = {6},
  doi       = {10.1007/978-3-662-70107-2},
  isbn      = {978-3-662-70106-5}
}

@book{books/CyganFKLMPPS15,
  author    = {Marek Cygan and
               Fedor V. Fomin and
               Lukasz Kowalik and
               Daniel Lokshtanov and
               D{\'{a}}niel Marx and
               Marcin Pilipczuk and
               Michal Pilipczuk and
               Saket Saurabh},
  title     = {Parameterized Algorithms},
  publisher = {Springer},
  year      = {2015},
  doi       = {10.1007/978-3-319-21275-3},
  isbn      = {978-3-319-21274-6}
}

@article{toct/LampisV24,
  author       = {Michael Lampis and
                  Manolis Vasilakis},
  title        = {Structural Parameterizations for Two Bounded Degree Problems Revisited},
  journal      = {{ACM} Trans. Comput. Theory},
  volume       = {16},
  number       = {3},
  pages        = {17:1--17:51},
  year         = {2024},
  doi          = {10.1145/3665156}
}

@inproceedings{iwpec/DomLSV08,
  author       = {Michael Dom and
                  Daniel Lokshtanov and
                  Saket Saurabh and
                  Yngve Villanger},
  title        = {Capacitated Domination and Covering: {A} Parameterized Perspective},
  booktitle    = {Parameterized and Exact Computation, Third International Workshop,
                  {IWPEC} 2008. Proceedings},
  series       = {Lecture Notes in Computer Science},
  volume       = {5018},
  pages        = {78--90},
  publisher    = {Springer},
  year         = {2008},
  doi          = {10.1007/978-3-540-79723-4_9}
}

@article{algorithmica/BodlaenderGJJL25,
  author       = {Hans L. Bodlaender and
                  Carla Groenland and
                  Hugo Jacob and
                  Lars Jaffke and
                  Paloma T. Lima},
  title        = {{XNLP}-Completeness for Parameterized Problems on Graphs with a Linear
                  Structure},
  journal      = {Algorithmica},
  volume       = {87},
  number       = {4},
  pages        = {465--506},
  year         = {2025},
  doi          = {10.1007/S00453-024-01274-9}
}

@article{tcs/GimaHKKO22,
  author       = {Tatsuya Gima and
                  Tesshu Hanaka and
                  Masashi Kiyomi and
                  Yasuaki Kobayashi and
                  Yota Otachi},
  title        = {Exploring the gap between treedepth and vertex cover through vertex
                  integrity},
  journal      = {Theor. Comput. Sci.},
  volume       = {918},
  pages        = {60--76},
  year         = {2022},
  doi          = {10.1016/J.TCS.2022.03.021}
}

@inproceedings{icalp/Lampis14,
  author       = {Michael Lampis},
  title        = {Parameterized Approximation Schemes Using Graph Widths},
  booktitle    = {Automata, Languages, and Programming - 41st International Colloquium,
                  {ICALP} 2014, Proceedings, Part {I}},
  series       = {Lecture Notes in Computer Science},
  volume       = {8572},
  pages        = {775--786},
  publisher    = {Springer},
  year         = {2014},
  doi          = {10.1007/978-3-662-43948-7_64}
}

@article{mst/GuoNW07,
  author       = {Jiong Guo and
                  Rolf Niedermeier and
                  Sebastian Wernicke},
  title        = {Parameterized Complexity of Vertex Cover Variants},
  journal      = {Theory Comput. Syst.},
  volume       = {41},
  number       = {3},
  pages        = {501--520},
  year         = {2007},
  doi          = {10.1007/S00224-007-1309-3}
}

@inproceedings{isaac/ChuL23,
  author       = {Huairui Chu and
                  Bingkai Lin},
  title        = {{FPT} Approximation Using Treewidth: Capacitated Vertex Cover, Target
                  Set Selection and Vector Dominating Set},
  booktitle    = {34th International Symposium on Algorithms and Computation, {ISAAC} 2023},
  series       = {LIPIcs},
  volume       = {283},
  pages        = {19:1--19:20},
  publisher    = {Schloss Dagstuhl - Leibniz-Zentrum f{\"{u}}r Informatik},
  year         = {2023},
  doi          = {10.4230/LIPICS.ISAAC.2023.19}
}

@inproceedings{soda/LokshtanovS0S025,
  author       = {Daniel Lokshtanov and
                  Abhishek Sahu and
                  Saket Saurabh and
                  Vaishali Surianarayanan and
                  Jie Xue},
  title        = {Parameterized Approximation for Capacitated \emph{d}-Hitting Set with
                  Hard Capacities},
  booktitle    = {Proceedings of the 2025 Annual {ACM-SIAM} Symposium on Discrete Algorithms,
                  {SODA} 2025},
  pages        = {1565--1592},
  publisher    = {{SIAM}},
  year         = {2025},
  doi          = {10.1137/1.9781611978322.48}
}

@inproceedings{sofsem/RooijR19,
  author       = {Sebastiaan B. van Rooij and
                  Johan M. M. van Rooij},
  title        = {Algorithms and Complexity Results for the Capacitated Vertex Cover
                  Problem},
  booktitle    = {{SOFSEM} 2019: Theory and Practice of Computer Science - 45th International
                  Conference on Current Trends in Theory and Practice of Computer Science, Proceedings},
  series       = {Lecture Notes in Computer Science},
  volume       = {11376},
  pages        = {473--489},
  publisher    = {Springer},
  year         = {2019},
  doi          = {10.1007/978-3-030-10801-4_37}
}

@inproceedings{iwpec/BodlaenderGJPP22,
  author       = {Hans L. Bodlaender and
                  Carla Groenland and
                  Hugo Jacob and
                  Marcin Pilipczuk and
                  Michal Pilipczuk},
  title        = {On the Complexity of Problems on Tree-Structured Graphs},
  booktitle    = {17th International Symposium on Parameterized and Exact Computation,
                  {IPEC} 2022},
  series       = {LIPIcs},
  volume       = {249},
  pages        = {6:1--6:17},
  publisher    = {Schloss Dagstuhl - Leibniz-Zentrum f{\"{u}}r Informatik},
  year         = {2022},
  doi          = {10.4230/LIPICS.IPEC.2022.6}
}

@article{dam/BodlaenderS26,
  author       = {Hans L. Bodlaender and
                  Krisztina Szil{\'a}gyi},
  title        = {{XALP}-completeness of parameterized problems on planar graphs},
  journal      = {Discret. Appl. Math.},
  volume       = {386},
  pages        = {156--174},
  year         = {2026},
  doi          = {10.1016/J.DAM.2026.01.021}
}

@inproceedings{waoa/Becker17,
  author       = {Amariah Becker},
  title        = {Capacitated Domination Problems on Planar Graphs},
  booktitle    = {Approximation and Online Algorithms - 15th International Workshop,
                  {WAOA} 2017},
  series       = {Lecture Notes in Computer Science},
  volume       = {10787},
  pages        = {1--16},
  publisher    = {Springer},
  year         = {2017},
  doi          = {10.1007/978-3-319-89441-6_1}
}

@inproceedings{innovations/RohwedderW25,
  author       = {Lars Rohwedder and
                  Karol Wegrzycki},
  title        = {Fine-Grained Equivalence for Problems Related to Integer Linear Programming},
  booktitle    = {16th Innovations in Theoretical Computer Science Conference, {ITCS}
                  2025},
  series       = {LIPIcs},
  pages        = {83:1--83:18},
  publisher    = {Schloss Dagstuhl - Leibniz-Zentrum f{\"{u}}r Informatik},
  year         = {2025},
  doi          = {10.4230/LIPICS.ITCS.2025.83}
}

@article{cmb/Lindstrom65,
  author       = {Bernt Lindstr{\"o}m},
  title        = {On a Combinatorial Problem in Number Theory},
  journal      = {Canadian Mathematical Bulletin},
  volume       = {8},
  number       = {4},
  pages        = {477--490},
  year         = {1965},
  doi          = {10.4153/CMB-1965-034-2}
}

@book{books/GareyJ79,
  author       = {M. R. Garey and
                  David S. Johnson},
  title        = {Computers and Intractability: {A} Guide to the Theory of NP-Completeness},
  publisher    = {W. H. Freeman},
  year         = {1979},
  isbn         = {0-7167-1044-7}
}

@article{siamdm/BelmonteKLMO22,
  author       = {R{\'{e}}my Belmonte and
                  Eun Jung Kim and
                  Michael Lampis and
                  Valia Mitsou and
                  Yota Otachi},
  title        = {Grundy Distinguishes Treewidth from Pathwidth},
  journal      = {{SIAM} J. Discret. Math.},
  volume       = {36},
  number       = {3},
  pages        = {1761--1787},
  year         = {2022},
  doi          = {10.1137/20M1385779}
}

@article{mor/Lenstra83,
  author       = {Hendrik W. Lenstra Jr.},
  title        = {Integer Programming with a Fixed Number of Variables},
  journal      = {Math. Oper. Res.},
  volume       = {8},
  number       = {4},
  pages        = {538--548},
  year         = {1983},
  doi          = {10.1287/MOOR.8.4.538}
}

@article{jal/GuhaHKO03,
  author       = {Sudipto Guha and
                  Refael Hassin and
                  Samir Khuller and
                  Einat Or},
  title        = {Capacitated vertex covering},
  journal      = {J. Algorithms},
  volume       = {48},
  number       = {1},
  pages        = {257--270},
  year         = {2003},
  doi          = {10.1016/S0196-6774(03)00053-1}
}

@article{toct/BonamyKPSW19,
  author       = {Marthe Bonamy and
                  Lukasz Kowalik and
                  Michal Pilipczuk and
                  Arkadiusz Socala and
                  Marcin Wrochna},
  title        = {Tight Lower Bounds for the Complexity of Multicoloring},
  journal      = {{ACM} Trans. Comput. Theory},
  volume       = {11},
  number       = {3},
  pages        = {13:1--13:19},
  year         = {2019},
  doi          = {10.1145/3313906}
}

@article{siamdm/Bar-YehudaFMR10,
  author       = {Reuven Bar{-}Yehuda and
                  Guy Flysher and
                  Juli{\'{a}}n Mestre and
                  Dror Rawitz},
  title        = {Approximation of Partial Capacitated Vertex Cover},
  journal      = {{SIAM} J. Discret. Math.},
  volume       = {24},
  number       = {4},
  pages        = {1441--1469},
  year         = {2010},
  doi          = {10.1137/080728044}
}

@article{siamcomp/GrandoniKPS08,
  author       = {Fabrizio Grandoni and
                  Jochen K{\"{o}}nemann and
                  Alessandro Panconesi and
                  Mauro Sozio},
  title        = {A Primal-Dual Bicriteria Distributed Algorithm for Capacitated Vertex
                  Cover},
  journal      = {{SIAM} J. Comput.},
  volume       = {38},
  number       = {3},
  pages        = {825--840},
  year         = {2008},
  doi          = {10.1137/06065310X}
}

@article{siamcomp/ChuzhoyN06,
  author       = {Julia Chuzhoy and
                  Joseph Naor},
  title        = {Covering Problems with Hard Capacities},
  journal      = {{SIAM} J. Comput.},
  volume       = {36},
  number       = {2},
  pages        = {498--515},
  year         = {2006},
  doi          = {10.1137/S0097539703422479}
}

@article{jcss/GandhiHKKS06,
  author       = {Rajiv Gandhi and
                  Eran Halperin and
                  Samir Khuller and
                  Guy Kortsarz and
                  Aravind Srinivasan},
  title        = {An improved approximation algorithm for vertex cover with hard capacities},
  journal      = {J. Comput. Syst. Sci.},
  volume       = {72},
  number       = {1},
  pages        = {16--33},
  year         = {2006},
  doi          = {10.1016/J.JCSS.2005.06.004}
}

@article{algorithmica/Kao21,
  author       = {Mong{-}Jen Kao},
  title        = {Iterative Partial Rounding for Vertex Cover with Hard Capacities},
  journal      = {Algorithmica},
  volume       = {83},
  number       = {1},
  pages        = {45--71},
  year         = {2021},
  doi          = {10.1007/S00453-020-00749-9}
}

@inproceedings{soda/Wong17,
  author       = {Sam Chiu{-}wai Wong},
  title        = {Tight Algorithms for Vertex Cover with Hard Capacities on Multigraphs
                  and Hypergraphs},
  booktitle    = {Proceedings of the Twenty-Eighth Annual {ACM-SIAM} Symposium on Discrete
                  Algorithms, {SODA} 2017},
  pages        = {2626--2637},
  publisher    = {{SIAM}},
  year         = {2017},
  doi          = {10.1137/1.9781611974782.173}
}

@article{tcs/KaoSLL19,
  author       = {Mong{-}Jen Kao and
                  Jia{-}Yau Shiau and
                  Ching{-}Chi Lin and
                  D. T. Lee},
  title        = {Tight approximation for partial vertex cover with hard capacities},
  journal      = {Theor. Comput. Sci.},
  volume       = {778},
  pages        = {61--72},
  year         = {2019},
  doi          = {10.1016/J.TCS.2019.01.027}
}

@article{siamdm/GanianKS22,
  author       = {Robert Ganian and
                  Eun Jung Kim and
                  Stefan Szeider},
  title        = {Algorithmic Applications of Tree-Cut Width},
  journal      = {{SIAM} J. Discret. Math.},
  volume       = {36},
  number       = {4},
  pages        = {2635--2666},
  year         = {2022},
  doi          = {10.1137/20M137478X}
}

@inproceedings{soda/CheungGW14,
  author       = {Wang Chi Cheung and
                  Michel X. Goemans and
                  Sam Chiu{-}wai Wong},
  title        = {Improved Algorithms for Vertex Cover with Hard Capacities on Multigraphs
                  and Hypergraphs},
  booktitle    = {Proceedings of the Twenty-Fifth Annual {ACM-SIAM} Symposium on Discrete
                  Algorithms, {SODA} 2014},
  pages        = {1714--1726},
  publisher    = {{SIAM}},
  year         = {2014},
  doi          = {10.1137/1.9781611973402.124}
}

@article{JansenS97,
  author       = {Klaus Jansen and
                  Petra Scheffler},
  title        = {Generalized Coloring for Tree-like Graphs},
  journal      = {Discret. Appl. Math.},
  volume       = {75},
  number       = {2},
  pages        = {135--155},
  year         = {1997},
  doi          = {10.1016/S0166-218X(96)00085-6}
}

@article{BelmonteLM22,
  author       = {R{\'{e}}my Belmonte and
                  Michael Lampis and
                  Valia Mitsou},
  title        = {Defective Coloring on Classes of Perfect Graphs},
  journal      = {Discret. Math. Theor. Comput. Sci.},
  volume       = {24},
  number       = {1},
  year         = {2022},
  doi          = {10.46298/DMTCS.4926}
}

@inproceedings{soda/Lampis25,
  author       = {Michael Lampis},
  title        = {The Primal Pathwidth {SETH}},
  booktitle    = {Proceedings of the 2025 Annual {ACM-SIAM} Symposium on Discrete Algorithms,
                  {SODA} 2025},
  pages        = {1494--1564},
  publisher    = {{SIAM}},
  year         = {2025},
  doi          = {10.1137/1.9781611978322.47}
}

@article{algorithmica/DrangeDH16,
  author       = {P{\aa}l Gr{\o}n{\aa}s Drange and
                  Markus S. Dregi and
                  Pim van 't Hof},
  title        = {On the Computational Complexity of Vertex Integrity and Component
                  Order Connectivity},
  journal      = {Algorithmica},
  volume       = {76},
  number       = {4},
  pages        = {1181--1202},
  year         = {2016},
  doi          = {10.1007/S00453-016-0127-X},
}

@misc{corr/EisenbrandHKKLO19,
  author       = {Friedrich Eisenbrand and
                  Christoph Hunkenschr{\"o}der and
                  Kim-Manuel Klein and
                  Martin Kouteck{\'y} and
                  Asaf Levin and
                  Shmuel Onn},
  title        = {An Algorithmic Theory of Integer Programming},
  year         = {2019},
  eprint       = {1904.01361},
  archivePrefix= {arXiv}
}

@misc{swat/KnopMV26,
  author       = {Dusan Knop and
                  Nikolaos Melissinos and
                  Manolis Vasilakis},
  title        = {Parameterized Critical Node Cut Revisited},
  year         = {2025},
  eprint       = {2506.23363},
  archivePrefix= {arXiv},
  note         = {To appear in the proceedings of {SWAT} 2026},
}

@inproceedings{mfcs/BlazejKPS24,
  author       = {V{\'{a}}clav Blazej and
                  Dusan Knop and
                  Jan Pokorn{\'{y}} and
                  Simon Schierreich},
  title        = {Equitable Connected Partition and Structural Parameters Revisited:
                  N-Fold Beats {Lenstra}},
  booktitle    = {49th International Symposium on Mathematical Foundations of Computer
                  Science, {MFCS} 2024},
  series       = {LIPIcs},
  pages        = {29:1--29:16},
  publisher    = {Schloss Dagstuhl - Leibniz-Zentrum f{\"{u}}r Informatik},
  year         = {2024},
  doi          = {10.4230/LIPICS.MFCS.2024.29}
}

@article{disopt/GavenciakKK22,
  author       = {Tomas Gavenciak and
                  Martin Kouteck{\'{y}} and
                  Dusan Knop},
  title        = {Integer programming in parameterized complexity: Five miniatures},
  journal      = {Discret. Optim.},
  volume       = {44},
  number       = {Part},
  pages        = {100596},
  year         = {2022},
  doi          = {10.1016/J.DISOPT.2020.100596}
}

\end{document}